\newtheorem{theorem}{Theorem}
\newtheorem{lemma}[theorem]{Lemma}
\newtheorem{proposition}[theorem]{Proposition}
\begin{document}
\title{
Attributed-graphs kernel implementation using local detuning of neutral-atoms Rydberg Hamiltonian
}

\author{Mehdi Djellabi}
\author{Matthias Hecker}
\author{Shaheen Acheche}
\affiliation{Pasqal, 24 avenue Emile Baudot, 91120 Palaiseau}

\date{\today }
\begin{abstract}
We extend the quantum-feature kernel framework, which relies on measurements of graph-dependent observables, along three directions. 
First, leveraging neutral-atom quantum processing units (QPUs), we introduce a scheme that incorporates attributed graphs by embedding edge features into atomic positions and node features into local detuning fields of a Rydberg Hamiltonian. 
We demonstrate both theoretically and empirically that local detuning enhances kernel expressiveness. 
Second, in addition to the existing quantum evolution kernel ($\mathrm{QEK}$), which uses global observables, we propose the generalized-distance quantum-correlation ($\mathrm{GDQC}$) kernel, based on local observables. 
While the two kernels show comparable performance, we show that $\mathrm{GDQC}$ can achieve higher expressiveness. 
Third, instead of restricting to observables at single time steps, we combine information from multiple stages of the quantum evolution via pooling operations. 
Using extensive simulations on two molecular benchmark datasets, MUTAG and PTC\_FM, we find: (a) $\mathrm{QEK}$ and $\mathrm{GDQC}$ perform competitively with leading classical algorithms; and (b) pooling further improves performance, enabling quantum-feature kernels to surpass classical baselines. 
These results show that node-feature embedding and kernel designs based on local observables advance quantum-enhanced graph machine learning on neutral-atom devices.

\end{abstract}

\maketitle
\section{Introduction\label{sec:introduction}}

Graph-structured data arise in domains as diverse as social media~\cite{Sharma_social}, chemistry~\cite{Gilmer:2017arh}, and program analysis~\cite{allamanis2018learning}, yet their intrinsic metric (shortest path distance) is discrete and non-Euclidean, rendering classical vector-space similarities inadequate. 
Furthermore, exact combinatorial measures such as graph isomorphism or graph-edit distance offer perfect discrimination, but even with state-of-the-art heuristic accelerations they become computationally prohibitive as graphs grow larger~\cite{Xu_edit}. 
To obtain tractable and differentiable notions of similarity, the field has turned to graph kernels, which embed graphs into a reproducing-kernel Hilbert space where powerful kernel methods apply. 
The seminal survey of~\cite{vishwanathan10a} unified random-walk and other early kernels, while the Weisfeiler-Leman family leveraged color refinement to scale sub-tree comparisons to large graphs~\cite{shervashidze11a}. 
Complementary work casts graph comparison as an optimal-transport problem aligning intra-graph metric structures and gracefully handles graphs of differing sizes, proving effective for biological and social networks~\cite{Peyre_Gromov}. 
More precisely, these developments fit within the broader program of geometric deep learning, which generalizes representation-learning techniques to non-Euclidean domains~\cite{Bronstein_geometric}. 
\newline More recently, interest has surged in mapping graphs to quantum systems and computing similarities directly from physical measurements. Early work framed the evolution of a continuous-time quantum walk as a positive-definite kernel on graphs~\cite{continuous_RW,Thabet_Quantum}, and subsequent extensions employed discrete-time walks and information-theoretic divergences to enrich the feature space~\cite{BAI2015344}.
Modern quantum platforms now encode each graph as a parameterised quantum state and estimate kernel entries via repeated state-overlap measurements~\cite{Morgado_2021}, a strategy validated experimentally on a 10-qubit nuclear-magnetic-resonance processor~\cite{Sabarad:2024tgy}.
Complementing these hardware demonstrations, industry toolkits such as Pasqal's open-source Quantum Evolution Kernel ($\mathrm{QEK}$)~\cite{Henry_2021, Albrecht2023} compute graph kernels from expectation values~\cite{Mukherjee_2025} of graph-dependent observables measured on neutral-atom devices~\cite{qek_repo}. 
Together, these developments suggest that quantum hardware may soon deliver graph kernels with higher expressive power and lower sample complexity than their classical counterparts, reinforcing the motivation for the methods introduced here. 
Despite this notable progress, the graph-kernel literature from the perspective of quantum computing remains strongly skewed towards unattributed graphs. Kernels that focus on attributed graphs with discrete labels attached to vertices and edges have been comparatively few~\cite{Bai_attributed}. 
 
Quantum feature kernels such as used in the $\mathrm{QEK}$ algorithm represent a particular class of quantum kernels within which the quantum features are derived from measured observables. 
Herein, a graph $\mathcal{G}$ is embedded into the system Hamiltonian $\hat{H}^{\mathcal{G}}$ such that a time evolution imposes the graph information onto the 
quantum state $\left|\psi_{\mathcal{G}}\left(t\right)\right\rangle $, and eventually onto any measured observable $O_{\alpha}^{\mathcal{G}}\left(t\right)=\left\langle \psi_{\mathcal{G}}\left(t\right)\right|\hat{O}_{\alpha}\left|\psi_{\mathcal{G}}\left(t\right)\right\rangle $.
For the purpose of a quantum feature kernel, multiple observables can be considered ($\alpha=1,\dots,N_{M}$) and measurements can be taken at multiple time steps ($t_1,\dots,t_{N_t}$).
In combination, they provide the $\left(N_{M} N_{t}\right)$-dimensional
quantum-feature vector $\boldsymbol{O}^{\mathcal{G}}$ which encodes the graph information in a non-trivial way.
This allows to generically define a quantum feature kernel between two graphs of a given dataset, $\mathcal{G}_{\mu}$ and $\mathcal{G}_{\mu^\prime}$, as
\begin{align}
K_{\mu\mu^{\prime}} & =\kappa\left(\mathcal{G}_{\mu},\mathcal{G}_{\mu^{\prime}}\right)=F_{\kappa}\left[\boldsymbol{O}^{\mathcal{G}_{\mu}},\boldsymbol{O}^{\mathcal{G}_{\mu^{\prime}}}\right].\label{eq:general_quantum_evolution_kernel}
\end{align}
The kernel function $F_{\kappa}$ can be chosen in many ways. It only needs to ensure kernel
properties like positive semi-definiteness (PSD), \textit{i.e.} all eigenvalues of
$K$ have to be non-negative. 

In this work, we expand on the set of attributed-graphs tools by introducing an implementation scheme that allows quantum feature kernels (\ref{eq:general_quantum_evolution_kernel}) to deal with node attributes.
Specifically, we used the local detuning of a neutral-atom Rydberg Hamiltonian to encode node features.
We demonstrate that the corresponding feature kernels (\ref{eq:general_quantum_evolution_kernel}) are more expressive than without encoded node features, and we present an extensive numerical comparison between both scenarios on two well-known benchmark datasets, MUTAG and PTC\_FM. 
Furthermore, this work includes a study of different kernel functions $F_{\kappa}$, in which two conceptually different implementations are tested, namely $\mathrm{QEK}$ and the Generalized-Distance Quantum-Correlation ($\mathrm{GDQC}$) kernel.
In contrast to the established $\mathrm{QEK}$ kernel which exploits a global observable, the proposed $\mathrm{GDQC}$ kernel draws from a local observable, the correlation matrix, and is inspired by the classical generalized-distance Weisfeiler-and-Leman (GD-WL) kernel~\cite{shervashidze11a}.
Lastly, we add one more layer of enhanced expressive power to the project: pooling. 
Specifically, we combine kernels computed by observables measured at multiple time steps by means of pooling operations. 
Hereby, two different pooling schemes are tested, sum pooling and product pooling.
Having trained a support vector machine (SVM) using all these different kernel implementations, the results of the extensive numerical tests are presented in section~\ref{sec:expe}, and discussed.
We observe a trend that the kernels using node features generally yield better SVM predictions compared to the node-agnostic ones.
Among the quantum-feature kernels, $\mathrm{QEK}$ and $\mathrm{GDQC}$, we observe competitive performances with the classical baselines.
However, we find that pooling can enhance their performance and enable them to outperform the classical algorithms.

This paper is organized as follows:
In Sec.~\ref{sec:Quantum-graph-encoding} we detail all aspects of the graph-feature extraction. 
We introduce the neutral-atom Rydberg Hamiltonian (\ref{subsec:Rydberg_Hamiltonian}), introduce the attributed-graphs embedding scheme (\ref{subsec:Graph_Embedding}), present all details concerning the emulation (\ref{subsec:emulation}), and show analytical small-time approximations useful for performance interpretation (\ref{subsec:perturbation_theory}).
In Sec.~\ref{sec:kernels} we discuss the implemented kernel methods used in this work.
First, we revisit classical kernel methods (\ref{subsec:classical_kernels}), before we introduce the $\mathrm{QEK}$ kernel (\ref{subsec:kernel_QEK}), and the proposed $\mathrm{GDQC}$ kernel (\ref{subsec:algo}), and lastly in Subsec.~\ref{subsec:pooling}, we define the two implemented pooling schemes.
In Sec.~\ref{sec:expe}, the numerical results are presented and discussed. 
Finally, we provide a conclusion in Sec.~\ref{sec:conclusion}.
In App.~\ref{app:perturbation theory} we derive the analytical small-time approximations via time-dependent perturbation theory.
App.~\ref{app:classical_graph_kernel} provides practical details on the classical-kernel implementations.
In App.~\ref{app:proof_prop1} we rigorously demonstrate that the kernels with node-feature embedding are more expressive than the ones without. 
Apps.~\ref{app:GDQC} and~\ref{app:expressivity_gdqc} prove, respectively, that the $\mathrm{GDQC}$ kernel satisfies all kernel properties and that it is at least as expressive as the GD-WL kernel. 
In App.~\ref{app:kernel_pooling} we prove that the inclusion of different time steps can only increase the kernel expressiveness.

\section{Quantum feature extraction}\label{sec:Quantum-graph-encoding}

In this section we demonstrate how the quantum features $\boldsymbol{O}^{\mathcal{G}}$ [\textit{cf.} Eq.~(\ref{eq:general_quantum_evolution_kernel})] are extracted using a neutral-atom QPU. 
After introducing the Rydberg Hamiltonian (\ref{subsec:Rydberg_Hamiltonian}), we discuss the graph embedding scheme (\ref{subsec:Graph_Embedding}), and explain the details of the emulation used to extract the graph-dependent observables (\ref{subsec:emulation}).
Finally, we present small-time analytical approximations useful for later results` interpretation (\ref{subsec:perturbation_theory}).

\subsection{Rydberg / Ising Hamiltonian\label{subsec:Rydberg_Hamiltonian}}

Neutral-atom quantum computing processors (QPU) are operated such
that the two states describing a qubit, the zero state $\left|0\right\rangle $
and the one-state $\left|1\right\rangle $, correspond to particularly
chosen long-lived atomic energy levels where specifically the state
$\left|1\right\rangle $ is associated with a highly-excited atomic Rydberg
level. For example, for $^{87}\mathrm{Rb}$ atoms, commonly used in
the field, one such choice is given through $\left|0\right\rangle =\left|5S_{1/2},F=2,m_{F}=2\right\rangle $
and $\left|1\right\rangle =\left|60S_{1/2},m_{J}=1/2\right\rangle $.
This choice leads to an effective Ising Hamiltonian where the long-ranged
interaction is driven through the van-der-Waals interaction of distant
atoms which are simultaneously in the highly-excited and spatially
extended Rydberg state. As a consequence, a system of $N$ atoms is
described through the so-called Rydberg Hamiltonian~\cite{Gallagher_1994}
\begin{equation}
\hat{\mathcal{H}}\left(\mathbb{J},\Omega,\boldsymbol{\delta}\right)=\frac{1}{2}\sum\nolimits_{i,j}^{\prime}J_{ij}\hat{n}_{i}\hat{n}_{j}+\hbar\sum_{i=1}^{N}\left(\frac{\Omega}{2}\hat{\sigma}_{i}^{x}-\delta_{i}\hat{n}_{i}\right),\label{eq:Rydberg_Ham}
\end{equation}
where the summation $\sum\nolimits_{i,j}^{\prime}=\sum_{i,j=1\,|j\neq i}^{N}$
is restricted to non-identical indices. The coupling constant $J_{ij}=C_{6}/r_{ij}^{6}$
describes the strength of the van-der-Waals interaction between atoms
$i$ and $j$ which are separated by a distance $r_{ij}$, and $\mathbb{J}$
denotes the respective $N\times N$-dimensional matrix with $\mathbb{J}_{ij}=J_{ij}$.
The positive constant $C_{6}$ depends on the Rydberg level, and in
the chosen configuration its value is $C_{6}/\hbar=865723\,\mathrm{rad}\,\mathrm{\mu s}^{-1}\mathrm{\mu m}^{6}$ (corresponding to the Rydberg level $n=60$).
The external fields, the Rabi frequency $\Omega$ and the spatially-dependent
detuning $\boldsymbol{\delta}=\left(\delta_{1},\dots,\delta_{N}\right)$,
are primarily determined by the laser frequencies and intensities which
drive the atomic system. 
Throughout this work, we set the Rabi frequency to $\Omega = 2\pi \, \mathrm{rad}\,/\mathrm{\mu s} $.
The Hilbert space which the Hamiltonian (\ref{eq:Rydberg_Ham})
acts on is $2^{N}$ dimensional. Using the occupational (or measurement)
basis, a basis vector is given through $\left|\boldsymbol{b}\right\rangle =\left|b_{1}b_{2}\cdots b_{N}\right\rangle $
where each entry takes on values $b_{i}\in\left\{ 0,1\right\} $.
The action of the Pauli-matrix operators in (\ref{eq:Rydberg_Ham})
on a basis state is given by $\hat{n}_{i}\left|b_{i}\right\rangle =b_{i}\left|b_{i}\right\rangle $,
$\hat{\sigma}_{i}^{x}\left|b_{i}\right\rangle =\left|1-b_{i}\right\rangle $
and $\hat{\sigma}_{i}^{z}\left|b_{i}\right\rangle =\left(-1\right)^{b_{i}}\left|b_{i}\right\rangle $
given that also $\hat{n}_{i}=\big(1-\hat{\sigma}_{i}^{z}\big)/2$.
The last relation can be used to translate the Hamiltonian (\ref{eq:Rydberg_Ham})
into the standard Ising representation 
\begin{equation}
\hat{\mathcal{H}}\left(\mathbb{J},\Omega,\boldsymbol{\delta}\right)=\frac{1}{8}\sum\nolimits_{i,j}^{\prime}J_{ij}\hat{\sigma}_{i}^{z}\hat{\sigma}_{j}^{z}+\hbar\sum_{i=1}^{N}\left(\frac{\Omega}{2}\hat{\sigma}_{i}^{x}+h_{i}^{z}\hat{\sigma}_{i}^{z}\right)+\bar{E},\label{eq:Ising_Ham}
\end{equation}
where the constant $\bar{E}=\mathrm{tr}\big\{\hat{\mathcal{H}}\big\} / 2^{N}=\frac{1}{8}\sum\nolimits_{i,j}^{\prime}J_{ij}-\frac{1}{2}\hbar\sum_{i}\delta_{i}$
denotes the average energy, and the longitudinal field is given by
\begin{align}
h_{i}^{z} & =\frac{1}{2}\delta_{i}-\frac{1}{4\hbar}\sum_{j\neq i}J_{ij}.\label{eq:h_z}
\end{align}
The Ising Hamiltonian (\ref{eq:Ising_Ham}) with its more symmetric
representation can be more revealing for a physical intuition,
as we will demonstrate in part~\ref{subsec:Graph_Embedding}.

By design, the initial state in a neutral-atom QPU is the all-zero
state $\left|\psi\left(0\right)\right\rangle =\left|\boldsymbol{0}\right\rangle $.
With the Hamiltonian (\ref{eq:Rydberg_Ham}) being time independent,
the time evolution corresponds to a quench experiment, whereby the
time-evolved state is given by 
\begin{align}
\left|\psi\left(t\right)\right\rangle  & =\exp\left[-\mathsf{i}(t/\hbar)\,\hat{\mathcal{H}}\left(\mathbb{J},\Omega,\boldsymbol{\delta}\right)\right]\,\left|\psi\left(0\right)\right\rangle .\label{eq:psi_t}
\end{align}
The interest of the current work lies in three specific observables which will serve as the quantum features $\boldsymbol{O}^{\mathcal{G}}$ [\textit{cf.} Eq. (\ref{eq:general_quantum_evolution_kernel})]: the excitation number
$n_{i}\left(t\right)$, the correlation matrix $C_{ij}\left(t\right)$,
and the probability for a fixed number $k$ of excitations $P_{k}\left(t\right)$. These observables
are defined as
\begin{align}
P_{k}\left(t\right) & =\left\langle \psi\left(t\right)\big|\hat{P}_{k}\big|\psi\left(t\right)\right\rangle , & n_{i}\left(t\right) & =\left\langle \psi\left(t\right)\big|\hat{n}_{i}\big|\psi\left(t\right)\right\rangle ,\nonumber \\
C_{ij}\left(t\right) & =\left\langle \psi\left(t\right)\big|\hat{n}_{i}\hat{n}_{j}\big|\psi\left(t\right)\right\rangle ,\label{eq:observables}
\end{align}
where the projection operator onto a fixed-excitation subspace is defined
as $\hat{P}_{k}=\sum_{\boldsymbol{b}}\delta_{k,\sum_{i=1}^Nb_{i}}\,\left|\boldsymbol{b}\right\rangle \left\langle \boldsymbol{b}\right|$
with the Kronecker delta $\delta_{k,k^{\prime}}$ and $k=0,1,\dots,N$.
Note that since $\hat{n}_{i}^2 = \hat{n}_{i}$, the diagonal elements of the correlation matrix $C_{ii}\left(t\right)=n_{i}\left(t\right)$ are identical to the excitation numbers $n_{i}\left(t\right)$, also referred to as the site occupations.
All observables (\ref{eq:observables}) are diagonal in the measurement
basis, and thus, they can be easily sampled after each run of the neutral-atom QPU. 

The Rydberg blockade mechanism~\cite{Browaeys2020, Gaetan_2009} provides a useful estimate of the length scale over which Rydberg-Rydberg interactions suppress simultaneous excitation of atoms. 
Equating the van-der-Waals interaction shift to the drive energy scale, $C_6/R_\mathrm{b}^6=\hbar \Omega$, defines the blockade radius $R_\mathrm{b}=(C_6/\hbar \Omega)^{1/6}$. 
For interatomic separations $r_{ij}\lesssim R_\mathrm{b}$, excitation of the doubly excited state is energetically suppressed (shifted out of resonance).
In our setting, we choose $R_\mathrm{b}=7.19\,\mathrm{\mu m}$ with a nearest-neighbor distance $r_\mathrm{NN}=5.01\,\mathrm{\mu m}$ (see below) such that $r_\mathrm{NN}<R_\mathrm{b}<2r_\mathrm{NN}$, which blockades nearest neighbors while leaving neighbors at further distances largely unblocked.

\subsection{Attributed-graphs embedding\label{subsec:Graph_Embedding}}

In the present work we deal with graphs that originate from two common
benchmark datasets for binary classification: mutagenic aromatic and
heteroaromatic nitro compounds (MUTAG) consisting of a total of $188$
molecules with the task of predicting whether a molecule is mutagenic on Salmonella typhimurium or not, and predictive toxicology challenge -- female mice
(PTC\_FM) consisting of $349$ molecules with the task of predicting
if a compound is carcinogenic~\cite{MUTAG_1991, PTCFM_2003, TUD_dataset}. In both datasets the compounds
are provided as attributed graphs $\mathcal{G}=\left(\mathcal{V},\mathcal{E},\mathcal{L}\right)$, also referred to as labeled graphs or heterogeneous graphs,
with a set of nodes $\mathcal{V}$ (the atoms), a set of edges
$\mathcal{E}$ (the bonds between atoms), and a set
of node labels $\mathcal{L}$ (the name of the
atoms) where the latter are interchangeably referred to as labels, attributes or features.

A common strategy to embed a graph onto the Hamiltonian (\ref{eq:Rydberg_Ham})
is by choosing the atom`s positions such that connected atoms have
a significant coupling strength $J_{ij}$ whereas unconnected atoms
are coupled with a negligible strength. Mathematically, we introduce
the graph-dependent coupling matrix as 
\begin{align}
\mathbb{J}_{ij}^{\mathcal{G}} & =\begin{cases}
J_{\mathrm{NN}} & ,\left(i,j\right)\in\mathcal{E}\\
J_{ij}\ll J_{\mathrm{NN}} & ,\left(i,j\right)\notin\mathcal{E}
\end{cases},\label{eq:graph_dependent_J}
\end{align}
where we defined the nearest-neighbor coupling strength $J_{\mathrm{NN}}=C_{6}/r_{\mathrm{NN}}^{6}$
with $r_{\mathrm{NN}}$ being the nearest-neighbor distance. 
Throughout this study, we fix the nearest-neighbor distance to $r_{\mathrm{NN}}=5.01\,\mathrm{\mu m}$
such that the nearest-neighbor coupling becomes $J_{\mathrm{NN}}/\hbar\approx 54.75\,\mathrm{rad} / \mu s$.
The choice in Eq.~(\ref{eq:graph_dependent_J}) is appealing for graph embedding, although it restricts this scheme to a specific class of graphs, namely unit-disk graphs. 
Here, an edge between two nodes exists only if their distance is below a fixed threshold distance~\cite{CLARK1990165}.
In general, the task of finding a unit-disk graph embedding is a NP-complete problem~\cite{Kuhn_2004}.
Conveniently, the molecules in the dataset just like the atoms in the QPU interact through physical forces which naturally makes them unit-ball graphs.
The only problem is the dimension.
Current neutral-atom QPU only allow for two-dimensional registers. 
Therefore, in our study we only include molecules that admit a two-dimensional embedding reducing the size of PTC\_FM dataset to $332$ molecules, corresponding to $95\%$ of the original dataset. 
The MUTAG dataset was kept intact, since we obtained a two-dimensional representation of all 188 constituent molecules.
We denote the reduced PTC\_FM dataset as PTC\_FM$^*$.
The algorithm used to compute the layouts is an extension of the one described in Ref.~\cite{Albrecht2023} (appendix E therein), \textit{i.e.} a force-based algorithm~\cite{Blade, Tamassia-chapter12}. 

\begin{figure*}[!htbp]

\includegraphics[width=\textwidth]{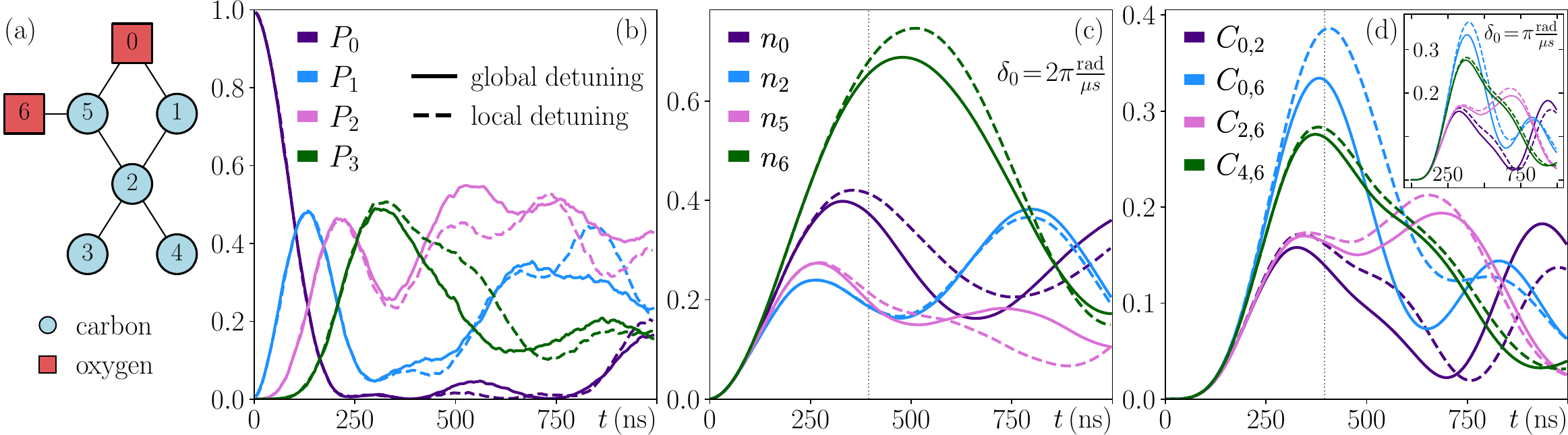}
\caption{Emulation results for the molecule in panel (a). 
Panels (b)-(d) display the time evolution of the three observables defined in Eq.~(\ref{eq:observables}): the excitation probability $P_k$, the site occupations $n_i = \langle \hat{n}_i \rangle$, and the pairwise correlations $C_{i,j} = \langle \hat{n}_i  \hat{n}_j \rangle$. 
Solid (dashed) curves correspond to evolution under the global-detuning (local-detuning) Hamiltonian, with detuning scale set to $\delta_{0}=2\pi\,\mathrm{rad}\,/\mathrm{\mu s}$; see Eqs.~(\ref{eq:H_global}) and (\ref{eq:H_local}).
For comparison, the inset in (d) shows results for a smaller detuning scale, $\delta_{0}=\pi\,\mathrm{rad}\,/\mathrm{\mu s}$. 
For the molecule and parameters shown, differences between global and local detuning become apparent by $t \gtrsim 250 \, \mathrm{ns}$. 
The molecule in (a) is taken from the PTC\_FM$^*$ dataset and has SMILES string "CC1(C)COC1=O". 
The gray dotted vertical line marks $t=0.4\,\mathrm{\mu s}$, which is the time step used for the visualization in Fig.~\ref{fig:GDQC_illustration}.
\label{fig:dynamics}}
\end{figure*}

The graph-dependent coupling matrix $\mathbb{J}_{ij}^{\mathcal{G}}$ (\ref{eq:graph_dependent_J}) only
embeds the edges $\mathcal{E}$ of the graph, but it makes no use of the node features $\mathcal{L}$.
The embedding of the node features $\mathcal{L}$ is a key element of this work.
It is achieved through an appropriate choice of the local detuning $\delta_{i}\rightarrow\delta_{i}^{\mathcal{G}}$, see Eq.~(\ref{eq:Rydberg_Ham}).
Specifically, we have chosen a scheme that exploits the atomic masses
$m_{i}$ within the molecules, and we define 
\begin{align}
\delta_{i}^{\mathcal{G}} & =\delta_{0} \left(1 - \frac{m_C}{m_i}\right)
\label{eq:local_detuning}\end{align}
where $\delta_{0}$ is a fixed offset value and $m_{C}=12\,\mathrm{u}$ is
the mass of a carbon atom, which is the lightest element present in our datasets (hydrogen atoms are omitted in the graphs.). 
The atomic masses are obtained from PubChem Periodic Table webpage~\cite{pubchem_periodic_table}. 
The local detuning choice (\ref{eq:local_detuning}) is motivated as follows. 
In earlier $\mathrm{QEK}$ work~\cite{Albrecht2023} the global detuning was set to zero. 
To ensure that the local detuning provides additional expressivity beyond this global baseline, we design Eq.~(\ref{eq:local_detuning}) such that the detuning remains zero for the most abundant element in the datasets. 
This is the case for carbon, which accounts for approximately $70\%$ of all atoms. 
Consequently, only the remaining atoms receive a nonzero local detuning, i.e., $\delta_i^{\mathcal{G}} > 0$ whenever $m_i > m_C$.
The scheme in Eq.~(\ref{eq:local_detuning}) enables us to study the graph-embedded Hamiltonian in Eq.~(\ref{eq:Rydberg_Ham}) both with and without the node-feature (local detuning) embedding, allowing for a direct, controlled comparison between the two settings.

We define the respective Hamiltonians as the (global) Hamiltonian 
\begin{align}
\hat{\mathcal{H}}_{\mathrm{glob}}^{\mathcal{G}} & =\hat{\mathcal{H}}\left(\mathbb{J}^{\mathcal{G}},\Omega,\boldsymbol{\delta}_{\mathrm{glob}}\right),\label{eq:H_global}
\end{align}
which is subjected to a globally fixed detuning value $\boldsymbol{\delta}_{\mathrm{glob}}=\left(0,\dots,0\right)$,
and as the (local) Hamiltonian
\begin{align}
\hat{\mathcal{H}}_{\mathrm{loc}}^{\mathcal{G}} & =\hat{\mathcal{H}}\left(\mathbb{J}^{\mathcal{G}},\Omega,\boldsymbol{\delta}^{\mathcal{G}}\right),\label{eq:H_local}
\end{align}
whose detuning $\boldsymbol{\delta}^{\mathcal{G}}=(\delta_{1}^{\mathcal{G}},\dots,\delta_{N}^{\mathcal{G}})$
embeds the node features via Eq.~(\ref{eq:local_detuning}). 
It is clear that a time evolution governed by a graph-dependent Hamiltonian,
Eq.~(\ref{eq:H_global}) or (\ref{eq:H_local}), leave an imprint
of the graph on the time-evolved state via Eq.~(\ref{eq:psi_t}),
and as a consequence, the measured observables (\ref{eq:observables})
equally acquire a graph dependence. This explains how the measured
observables encode a graph embedding, and why they can be meaningfully
used in a graph kernel application, \textit{cf.} Eq.~(\ref{eq:general_quantum_evolution_kernel}). 

The value of the fixed detuning $\delta_{0}$ needs to be chosen appropriately such that the impact of the local-detuning embedding used in Eq.~(\ref{eq:H_local}) is measurable. 
To estimate the magnitude we exploit the Ising representation of the Hamiltonian (\ref{eq:Ising_Ham}) wherein the longitudinal field value $h_{i}^{z}$ (\ref{eq:h_z}) directly demonstrates that the local detuning field has to compete with a relatively strong local offset value. 
For the given molecules, one can approximate the offset value as $\sum_{j\neq i}J_{ij}^{\mathcal{G}}\approx N_{i}^{\mathrm{deg}}J_{\mathrm{NN}}$ with $N_{i}^{\mathrm{deg}}$ being the degree of node $i$ and for molecules $N_{i}^{\mathrm{deg}}\in\left\{ 1,2,3,4\right\} $. 
Hence, the longitudinal field becomes 
\begin{align}
h_{i}^{z} & \approx\frac{1}{4\hbar}J_{\mathrm{NN}}\left[-N_{i}^{\mathrm{deg}}+\frac{2\hbar\delta_{0}}{J_{\mathrm{NN}}}\,\left(1-\frac{m_{C}}{m_{i}}\right)\right].\label{eq:h_z_approx}
\end{align}
Here, we note two things. 
First, in order for the local-detuning to
be noticeable, one has to choose the offset value such that $2\hbar\delta_{0}/J_{\mathrm{NN}} \sim \mathcal{O}\left(1\right)$.
Second, we see that the offset value conveniently encodes the degree of each node as a default piece of information.
We would not want to overshadow this embedded node information, but instead add a little value on top according to the node feature, \textit{i.e.} according to the mass of atom $i$. 
For this reason, we consider an appropriate choice for the fixed detuning value as $2\hbar\delta_{0}/J_{\mathrm{NN}}\lesssim0.3$.
The value we choose in this work is $\delta_{0}=2\pi\,\mathrm{rad}\,/\mathrm{\mu s}$, such that $2\hbar\delta_{0}/J_{\mathrm{NN}}\approx 0.23$.

\subsection{Emulation\label{subsec:emulation}}

With all the specifications provided we run two emulations for every graph in the datasets MUTAG and PTC\_FM$^*$: Once where the time evolution is governed by the local-detuning Hamiltonian $\hat{\mathcal{H}}_{\mathrm{loc}}^{\mathcal{G}}$ (\ref{eq:H_local}) and once governed by the global-detuning counterpart $\hat{\mathcal{H}}_{\mathrm{glob}}^{\mathcal{G}}$ (\ref{eq:H_global}). 
At multiple time steps, we sample the three observables defined in Eq.~(\ref{eq:observables}), the probability for a fixed number of excitations $P_k$, the excitation number $n_i$ and the correlation $C_{ij}$, with a number of shots $N_{\mathrm{shots}} = 1000$.
Their values form the set of feature vectors $\boldsymbol{O}^{\mathcal{G}_{\mu}}$, see Eq.~(\ref{eq:general_quantum_evolution_kernel}), which are used in the machine learning tasks in Secs.~\ref{sec:expe} and \ref{sec:kernels}. 
The emulation is computed using two open-source libraries: Pulser~\cite{Silverio2022pulseropensource} and EMU-MPS~\cite{emulators_pasqal}.
Pulser is a Python toolkit for neutral-atom quantum computing that allows to program experiments directly at the pulse level. The package EMU-MPS provides an efficient, state-of-the-art emulation backend that emulates pulse-level dynamics of neutral-atom arrays by storing the state as a matrix product state.

In Fig.~\ref{fig:dynamics}, we illustrate the time evolution by showing emulation results for a representative molecule from the PTC\_FM$^*$ dataset [panel (a)]. 
Panels (b)-(d) report, respectively, the excitation probability $P_k$, the site occupations $n_i$, and the correlation matrix elements $C_{i,j}=\langle \hat{n}_i \hat{n}_j \rangle$, cf. Eq.~(\ref{eq:observables}). 
The main purpose of this figure is to highlight the effect of the local detuning. 
To this end, we compare results obtained with global detuning (solid lines) to those obtained with local detuning (dashed lines) at $\delta_{0}=2\pi\,\mathrm{rad}\,/\mathrm{\mu s}$, cf. Eq.~(\ref{eq:local_detuning}). 
Across all three observables, deviations between the two settings become visible after $t \gtrsim 250 \, \mathrm{ns}$.
The molecule shown in panel (a) contains five carbon atoms and two oxygen atoms. 
Under the global-detuning Hamiltonian, Eq.~(\ref{eq:H_global}), all atoms effectively behave as carbon atoms ($m_\mathrm{C} = 12\,\mathrm{u}$). 
Consequently, the oxygen nodes ($0$ and $6$) are particularly informative. 
Consistent with this, Fig.~\ref{fig:dynamics}(c) shows that the carbon site $n_2$, which is distant from the oxygen atoms, is largely unaffected, whereas the oxygen sites ($n_0$ and $n_6$) and nearby carbon sites (e.g., $n_5$) exhibit pronounced differences between global and local detuning. 
Finally, to emphasize the role of the detuning scale $\delta_{0}$, the inset of Fig.~\ref{fig:dynamics}(d) shows the local-detuning correlation matrix results for a smaller value, $\delta_{0}=\pi\,\mathrm{rad}\,/\mathrm{\mu s}$. 
In this regime, global and local-detuning observables differ only marginally, so the resulting signal is likely too weak for a machine learning model to leverage.
As mentioned before, in this work we set the scale to $\delta_{0}=2\pi\,\mathrm{rad}\,/\mathrm{\mu s}$.

\subsection{Time-dependent perturbation theory\label{subsec:perturbation_theory}}

In this part we conduct an early-time expansion for the relevant observables (\ref{eq:observables}).
Specifically, we treat the transverse-field part of the Hamiltonian (\ref{eq:Rydberg_Ham}) as a perturbation, i.e. we assume $\Omega t\ll 1$.
The resulting early-time approximations allow us to gain useful insights into the performance of the quantum kernels introduced in Sec.~\ref{sec:kernels}.
Details on the derivation of the perturbation theory can be found in App.~\ref{app:perturbation theory}.

Neglecting all terms of order $\mathcal{O}\left( (\Omega t)^4 \right)$ and higher, the leading observable contributions are given as
\begin{align}
P_{0}\left(t\right) & = 1 - \frac{\Omega^2 t^2}{4} \sum_{i=1}^N \mathrm{sinc}^2\! \left( \delta_i t/2 \right) + \mathcal{O}\left( (\Omega t)^4 \right),\label{eq:pert_obs_P0}\\
P_{1}\left(t\right) & = \frac{\Omega^2 t^2}{4} \sum_{i=1}^N \mathrm{sinc}^2\! \left( \delta_i t/2 \right) + \mathcal{O}\left( (\Omega t)^4 \right),\label{eq:pert_obs_P1}\\
n_{i}\left(t\right) & = \frac{\Omega^2 t^2}{4}  \mathrm{sinc}^2\! \left( \delta_i t/2 \right) + \mathcal{O}\left( (\Omega t)^4 \right),\label{eq:pert_obs_ni}
\end{align}
where $\mathrm{sinc}\left(x \right) = \sin \left(x \right)/x$.
As shown in App.~\ref{app:perturbation theory} the excitation probability scales as $P_k\left(t\right) = \mathcal{O}\left( (\Omega t)^{2k} \right)$, such that $P_{k\geq 2}\left(t\right)$ can be neglected within the current approximation.
The parabolic expressions for $P_0\left( t\right)$ (\ref{eq:pert_obs_P0}) and $P_1\left( t\right)$ (\ref{eq:pert_obs_P1}) are consistent with Fig.~\ref{fig:dynamics}(b).
The diagonal elements of the correlation matrix $C_{ii}\left(t\right)=n_{i}\left(t\right)$ are approximated via Eq.~(\ref{eq:pert_obs_ni}), whereas the off-diagonal elements scale as $\mathcal{O}\left( (\Omega t)^{4} \right)$ consistent with the observed behavior in Fig.~\ref{fig:dynamics}(d). 
For early-time dynamics, where $\delta_i t\ll 1$, one can further approximate $\mathrm{sinc}\! \left( \delta_i t/2 \right)=1$, such that all site occupations $n_i\left(t \ll \delta_i^{-1} \right)$ behave identically, as can be seen in Fig.~\ref{fig:dynamics}(c).

\section{Graph Kernel methods for machine learning\label{sec:kernels}}

In this section we introduce all the kernels used in this work: 
first, the classical kernels used for benchmarking (\ref{subsec:classical_kernels}); then, the quantum feature kernels, $\mathrm{QEK}$ (\ref{subsec:kernel_QEK}) and $\mathrm{GDQC}$ (\ref{subsec:algo}). 
Lastly, in Subsec.~\ref{subsec:pooling} we introduce the pooling operations that are employed.

Kernel-based machine learning methods rely on a rigorous mathematical framework grounded in the theory of Reproducing Kernel Hilbert Spaces (RKHS). 
The foundation is laid by the Moore-Aronszajn Theorem, which guarantees that every positive definite kernel defines a unique RKHS where the kernel acts as the reproducing kernel~\cite{aronszajn50reproducing} . 
Within this space, the Representer Theorem shows that the solution to many regularized risk minimization problems can be written as a finite linear combination of kernel functions evaluated at the training points, thus reducing infinite-dimensional optimization to a tractable finite-dimensional one~\cite{scholkopf}. 
Furthermore, Mercer's Theorem~\cite{mercer1909theorem} offers a spectral decomposition of continuous, symmetric, positive-definite kernels, interpreting them as inner products in high-dimensional feature spaces~\cite{Scholkopf2002}. 
These results together enable the development of powerful nonlinear algorithms, such as Support Vector Machines~\cite{Cortes_1995}, with solid theoretical guarantees and practical scalability, for various  machine learning tasks such as classification and regression.
In practical terms these kernel methods allow SVMs to classify non-linearly separable data by implicitly mapping input features into a higher-dimensional space where a linear separator can be found, without the need of an explicit computation of the mapping~\cite{Scholkopf2002}.

\subsection{Classical graph kernels\label{subsec:classical_kernels}}

Graph-structured data are challenging for standard kernel methods because graphs do not lie in a vector space with a canonical notion of coordinates or Euclidean geometry. 
As a result, directly applying generic kernels is often nontrivial, since valid kernels must satisfy symmetry and positive semi-definiteness, and graph similarity is typically defined through discrete, combinatorial comparisons (see Subsec.~\ref{subsec:algo} for a discussion of properties). 
This has motivated the development of graph kernels that measure similarity by comparing substructures such as walks, subtrees, and subgraph patterns. 
By defining an inner product in an implicit feature space that preserves relevant combinatorial information, graph kernels provide a principled and flexible approach to graph machine learning tasks. 
From the extensive literature on classical graph kernels, we employ the following kernels for benchmarking (practical performance details can be found in App.~\ref{app:classical_graph_kernel}):

\paragraph{Random walk kernel}
This kernel measures similarity between graphs by counting matching random walks in each graph, thereby capturing detailed structural patterns. 
Introduced in Ref.~\cite{gaertner03graphkernels}, it is highly expressive but can be computationally intensive and prone to tottering effects. 
In our complexity model, the Random Walk kernel scales as $\mathcal{O}(N^6)$ and is therefore limited to relatively small graphs in practice. 
In its standard form, the kernel is purely structural and does not natively incorporate node attributes without additional feature engineering.

\paragraph{Graphlet sampling}
Graphlet kernels measure similarity between graphs by comparing the frequency of small induced subgraphs, known as graphlets, present in each graph, up to a specified maximum size~\cite{Przulj_2007, Shervashidze_2009}. 
In our setting, we enumerate all graphlets up to the chosen size (rather than sampling), which yields exact graphlet frequency vectors and captures local topological features and structural roles of nodes. 
This improved fidelity comes at higher computational cost and can increase memory consumption as the number of distinct graphlets grows. 
Exhaustive enumeration of all $k$-node graphlets scales as $\mathcal{O}(N^k)$. 
Unless attributes are explicitly included in the graphlet types (for example, by typing graphlets by labels), this baseline primarily captures topology.

\paragraph{Shortest path kernel}
Introduced in Ref.~\cite{SP_kernels}, the Shortest-Path (SP) kernel is an efficient alternative to random walks. 
It captures both local and global structure by matching pairs of nodes with identical shortest-path lengths. 
Its computational complexity scales as $\mathcal{O}(N^4)$ in our setting. 
This kernel can incorporate node labels by encoding each node pair as $(\ell(u), \ell(v), D(u,v))$, which enriches the feature space beyond distances alone. 
However, this also increases the representation dimensionality: the feature size grows from $\mathcal{O}(d_{\mathrm{max}})$ to $\mathcal{O}(|L|^2\, d_{\mathrm{max}})$, where $|L|$ denotes the label vocabulary size and $d_{\mathrm{max}}$ the graph diameter.

\paragraph{$k$-core kernel}
This kernel measures similarity between graphs by comparing statistics (e.g., distributions or distances) extracted from their $k$-core decomposition. 
The $k$-core is the maximal induced subgraph in which every node has degree at least $k$, and the sequence of cores as $k$ increases provides a multiscale view of graph connectivity and cohesion. 
Following Ref.~\cite{k_core_kernels}, each graph is represented through feature vectors derived from its cores, and graph similarity is computed by applying a base kernel to these structured representations. 
In our experiments, we use the \emph{Shortest-Path} kernel as base kernel, thereby combining core-based structural information with distance-aware matching. 
The $k$-core decomposition can be computed in $\mathcal{O}(\delta_{\mathrm{min}}\, N + M)$ time, where $\delta_{\mathrm{min}} \ll N$ denotes the minimum degeneracy and $M$ the number of edges (the overall cost then includes the cost of the base kernel).

\paragraph{Weisfeiler-Lehman optimal assignment kernel}
Building on the Weisfeiler-Lehman (WL) color-refinement procedure, the WL optimal assignment kernel compares two graphs by deriving refined vertex labels across WL iterations and then computing an optimal one-to-one correspondence between their vertex sets under a WL-based base kernel~\cite{Kriege_2016}. 
Concretely, vertices are scored by the number of WL iterations in which their labels match, and the graph-level similarity is obtained by maximizing the total similarity over all bijections between vertices. 
The resulting assignment kernel can improve upon the WL subtree kernel while remaining valid. 
Its computational complexity scales as $\mathcal{O}(h\, M + N)$, where $h$ denotes the number of WL iterations. 
Importantly, the method naturally supports node labels by initializing the refinement with the provided attributes: with labels, nodes start from attribute-derived labels, yielding finer refinements; without labels, all nodes share the same initial label and the kernel captures topology only. 
Incorporating labels increases the size of the induced label space, which can lead to sparse, high-dimensional representations and weaker generalization on small datasets.

\subsection{Quantum evolution kernel ($\mathrm{QEK}$)\label{subsec:kernel_QEK}}

First introduced by Henry \textit{et al.}~\cite{Henry_2021}, the quantum evolution kernel algorithm is a quantum-feature kernel implementation [see Eq.~(\ref{eq:general_quantum_evolution_kernel})] that draws from probability distributions generated from measured observables.
Specifically, within the proof-of-principle implementation of Ref.~\cite{Albrecht2023}, the algorithm draws from the global observable vector $\boldsymbol{P}^{\mathcal{G}}=\left(P_0^{\mathcal{G}}(t), P_1^{\mathcal{G}}(t), P_2^{\mathcal{G}}(t), \dots \right)^\intercal$ with $P_k^{\mathcal{G}}(t)$ being the excitation probability for a fixed number $k$ of excitations at a given time $t$, see Eq.~(\ref{eq:observables}). 
The superscript $\mathcal{G}$ indicates that the time evolution is subjected to a graph-embedded Hamiltonian, \textit{cf.} Sec.~\ref{sec:Quantum-graph-encoding}.
To discriminate between two graphs, $\mathcal{G}_{\mu}$ and $\mathcal{G}_{\mu^\prime}$, the QEK algorithm implements the following kernel function 
\begin{align}
    \kappa_{\mathrm{QEK}}(\mathcal{G}_{\mu}, \mathcal{G}_{\mu^\prime}) = \exp\left[ -\mu_0 \,\mathrm{JS}\left(\boldsymbol{P}^{\mathcal{G_\mu}}, \boldsymbol{P}^{\mathcal{G_{\mu^\prime}}} \right) \right],\label{eq:QEK_kernel}
\end{align}
with the model parameter $\mu_0>0$ set to $\mu_0 = 2$ throughout this work.
The Jensen-Shannon divergence $\mathrm{JS}\left(\boldsymbol{P}, \boldsymbol{P}^\prime \right)$ measures the similarity between the two distributions.
By construction, the kernel function (\ref{eq:QEK_kernel}) ensures positive definiteness. 

$\mathrm{QEK}$ post-processing mainly involves turning the measurement results into an excitation probability distribution. 
With $N_{\mathrm{shots}}$ measurement shots, and a graph of size $N$, forming $\boldsymbol{P}^G$ takes $\mathcal{O}(N_{\mathrm{shots}}\,N)$ operations (one processes up to $N$ bits per shot). 
Computing the Jensen-Shannon divergence from this distribution then costs an additional $\mathcal{O}(N)$. 
Overall, the total post-processing complexity is therefore dominated by $\mathcal{O}(N_{\mathrm{shots}}\,N)$.

Within its proof-of-principle implementation~\cite{Albrecht2023}, the $\mathrm{QEK}$ kernel (\ref{eq:QEK_kernel}) has been applied to non-attributed graphs. 
Correspondingly, the time evolution has been subjected to a global-detuning Hamiltonian $\hat{\mathcal{H}}_{\mathrm{glob}}^{\mathcal{G}}$ (\ref{eq:H_global}).
When coupled with a support vector machine, the $\mathrm{QEK}$ algorithm has matched the performance of leading state-of-the-art classical graph kernels on several benchmark datasets of small graphs~\cite{Albrecht2023}.
Thereby, $\mathrm{QEK}$`s capabilities have been demonstrated both, numerically via emulations as well as on real quantum hardware, on a 32-qubit neutral-atom processor.

In our work, we expand on the previous $\mathrm{QEK}$ implementation~\cite{Albrecht2023} twofold. 
First, and most importantly, we want it to be able to deal with attributed graphs. 
This is accomplished by changing the Hamiltonian that governs the time evolution from $\hat{\mathcal{H}}_{\mathrm{glob}}^{\mathcal{G}}$ (\ref{eq:H_global}) to the local-detuning Hamiltonian $\hat{\mathcal{H}}_{\mathrm{loc}}^{\mathcal{G}}$ (\ref{eq:H_local}).
As described in Subsec.~\ref{subsec:Graph_Embedding}, the latter Hamiltonian embeds the atomic masses (the node features) into the local detuning $\delta_i$.
As a result, we obtain an observable vector $\boldsymbol{P}^{\mathcal{G}}$ which contains graph information drawn from both, the edges and the node features.
Similar to before, the observable vectors are inserted into the kernel function (\ref{eq:QEK_kernel}). 
To distinguish the two cases, evolution under $\hat{\mathcal{H}}_{\mathrm{glob}}^{\mathcal{G}}$ and $\hat{\mathcal{H}}_{\mathrm{loc}}^{\mathcal{G}}$, we use the notation $\kappa_{\mathrm{QEK}}^{\mathrm{glob}}$ and $\kappa_{\mathrm{QEK}}^{\mathrm{loc}}$, respectively.
Mathematically, switching from $\kappa_{\mathrm{QEK}}^{\mathrm{glob}}$ to $\kappa_{\mathrm{QEK}}^{\mathrm{loc}}$ has profound consequences which can be summarized by proposition \ref{prop:prop1} which is proven in App.~\ref{app:proof_prop1}. 
\begin{proposition}
\label{prop:prop1}
The quantum evolution kernel $\kappa_{\mathrm{QEK}}^{\mathrm{loc}}$
is genuinely more expressive than $\kappa_{\mathrm{QEK}}^{\mathrm{glob}}$ for distinguishing attributed graphs. 
\end{proposition}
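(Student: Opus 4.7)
The plan is to interpret the proposition's claim that $\kappa_{\mathrm{QEK}}^{\mathrm{loc}}$ is ``genuinely more expressive'' than $\kappa_{\mathrm{QEK}}^{\mathrm{glob}}$ as the statement that the equivalence relation on attributed graphs defined by $\mathcal{G}_1 \sim^{\mathrm{loc}} \mathcal{G}_2$ iff $\kappa_{\mathrm{QEK}}^{\mathrm{loc}}(\mathcal{G}_1,\mathcal{G}_2) = 1$ is a strict refinement of the analogous $\sim^{\mathrm{glob}}$. This splits into two parts: (i) the inclusion $\mathcal{G}_1 \sim^{\mathrm{loc}} \mathcal{G}_2 \Rightarrow \mathcal{G}_1 \sim^{\mathrm{glob}} \mathcal{G}_2$, and (ii) an explicit witness pair where the converse fails. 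Since the Jensen--Shannon divergence in (\ref{eq:QEK_kernel}) vanishes only on identical distributions, both parts reduce to comparing the graph-indexed probability vectors $\boldsymbol{P}^{\mathcal{G}}(t)$ at every sampled time.

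First I would establish the separation (ii), which is the easier and more intuitive half. The plan is to exhibit a minimal pair of attributed graphs sharing the same topology $(\mathcal{V},\mathcal{E})$ but with different node labels --- the simplest realization being two molecules identical up to swapping a carbon site for a heavier atom --- so that $\mathbb{J}^{\mathcal{G}_1} = \mathbb{J}^{\mathcal{G}_2}$ while $\boldsymbol{\delta}^{\mathcal{G}_1} \neq \boldsymbol{\delta}^{\mathcal{G}_2}$. Because $\hat{\mathcal{H}}_{\mathrm{glob}}^{\mathcal{G}}$ depends only on $\mathbb{J}^{\mathcal{G}}$ and on the graph-independent uniform detuning $\boldsymbol{\delta}_{\mathrm{glob}}$, the two global Hamiltonians coincide, and hence so do all observables at every time, forcing $\kappa_{\mathrm{QEK}}^{\mathrm{glob}}(\mathcal{G}_1,\mathcal{G}_2) = 1$. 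For the local kernel, a short-time Taylor expansion of $\boldsymbol{P}^{\mathcal{G}}(t)$ obtained from (\ref{eq:psi_t}) isolates the first order at which the detuning enters; verifying that this order differs between $\mathcal{G}_1$ and $\mathcal{G}_2$ yields $\kappa_{\mathrm{QEK}}^{\mathrm{loc}}(\mathcal{G}_1,\mathcal{G}_2) < 1$.

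Next I would address the inclusion (i) by introducing the interpolating detuning $\boldsymbol{\delta}^{\mathcal{G}}(\alpha) = (1-\alpha)\boldsymbol{\delta}_{\mathrm{glob}} + \alpha\boldsymbol{\delta}^{\mathcal{G}}$ for $\alpha \in [0,1]$, so that $\alpha=0$ reproduces $\hat{\mathcal{H}}_{\mathrm{glob}}^{\mathcal{G}}$ and $\alpha=1$ reproduces $\hat{\mathcal{H}}_{\mathrm{loc}}^{\mathcal{G}}$. The induced probability vector $\boldsymbol{P}^{\mathcal{G}}(t,\alpha)$ is real-analytic in $\alpha$, because the matrix exponential in (\ref{eq:psi_t}) is entire in every parameter of the Hamiltonian. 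I would then argue contrapositively: if $\kappa_{\mathrm{QEK}}^{\mathrm{glob}}(\mathcal{G}_1,\mathcal{G}_2) < 1$, then the two analytic maps $\alpha \mapsto \boldsymbol{P}^{\mathcal{G}_i}(t,\alpha)$ disagree at $\alpha=0$ for some sampled $t$, so they can coincide on at most a discrete subset of $\mathbb{R}$, which excludes the point $\alpha=1$ for generic parameter choices.

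The main obstacle lies precisely in this inclusion direction, where ``generic'' must be promoted to a deterministic statement. In principle there could exist isolated, fine-tuned atomic-mass configurations at which a distinction made by the global kernel is accidentally erased by the local one. I would attempt to rule these out using the rigid structure of the mass-dependent detuning rule (\ref{eq:local_detuning}), which fixes each $\delta_i^{\mathcal{G}}$ to one of a discrete set of values inherited from the periodic table, so that the exceptional analytic set is hit only on a measure-zero subset of mass parameters. If that structural argument cannot be closed cleanly, I would restrict Proposition~\ref{prop:prop1} to the generic regime and flag this explicitly, while emphasizing that all physically relevant instances --- in particular those occurring in MUTAG$^*$ and PTC\_FM$^*$ --- fall inside that regime.
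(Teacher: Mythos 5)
Your proposal is correct and takes essentially the same route as the paper's Appendix~\ref{app:proof_prop1}: a witness pair with identical edges but different node attributes (so the global Hamiltonians coincide and $\kappa_{\mathrm{QEK}}^{\mathrm{glob}}=1$, while the local detunings differ), combined with a real-analyticity/identity-theorem argument showing that any distinction made by the global kernel can be lost by the local one only on a lower-dimensional exceptional set, hence only generically. The paper argues this in the full detuning space rather than along your one-parameter interpolation path, and it accepts exactly the caveat you flag --- that the physically fixed detuning values are a specific, not generic, point, so ``accidental'' coincidences cannot be strictly ruled out --- so your proposal matches the paper's level of rigor and conclusion.
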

The second extension to the $\mathrm{QEK}$ algorithm that we undertake lies in pooling of different time steps. 
The above implementation draws the excitation probability from a single time step $t$.
In Subsec.~\ref{subsec:pooling} we explain how multiple time steps can be combined to accrue information sampled from different regimes of the time evolution.

To gain intuition about the performance of $\mathrm{QEK}$ at later times, we also define a classical baseline. 
To this end, we exploit the early-time dynamics (see Subsec.~\ref{subsec:perturbation_theory}) and train the model using only leading-order approximation to the excitation-probability distribution,
\begin{align}
\boldsymbol{P}^{\mathcal{G}}_{\mathrm{baseline}} & =\left( 1 - N \Omega^2 t^2/4, N \Omega^2 t^2/4, 0, \dots ,0\right)^\intercal ,
\label{eq:P_vector_early_time_approx}
\end{align}
cf. Eqs.~(\ref{eq:pert_obs_P0})-(\ref{eq:pert_obs_P1}). 
Importantly, the only graph-dependent quantity entering Eq.~(\ref{eq:P_vector_early_time_approx}) is the graph size $N=\lvert\mathcal{V}\rvert$. 
This baseline therefore isolates the contribution of graph size and provides a convenient reference when evaluating $\mathrm{QEK}$ trained on the full quantum data, i.e., outside from the early-time regime.
When applying the baseline (\ref{eq:P_vector_early_time_approx}) to the datasets in Sec.~\ref{sec:expe}, we choose $\Omega t=0.25$, which satisfies $N \Omega^2 t^2/4<1$ for all graphs in the datasets.

The excitation probability $P_k^{\mathcal{G}}(t)$ (\ref{eq:observables}) used in the $\mathrm{QEK}$ algorithm is a natural observable candidate to capture graph information as it is permutation invariant. 
Moreover, it is a global observable rendering it a useful tool to study global graph properties. 
However, if it comes to local graph properties the excitation probability $P_k^{\mathcal{G}}(t)$ may not be the most convenient choice. 
An attempt to generalize the algorithm from the global observables $P_k^{\mathcal{G}}(t)$ to observables that capture distributions of local subcomponents of a graph (\textit{e.g.}, nodes or edges), while maintaining permutation invariance, requires the fixing of a canonical ordering for each graph, a costly task known as the graph canonization problem~\cite{gurevich1997from}. 
This potential shortcoming of the $\mathrm{QEK}$ algorithm (\ref{eq:QEK_kernel}) has motivated the design of the generalized-distance quantum-correlation kernel, see Subsec.~\ref{subsec:algo}.

\subsection{Generalized-distance quantum-correlation ($\mathrm{GDQC}$) kernel\label{subsec:algo}}

In this section, we introduce the generalized-distance quantum-correlation kernel, a new quantum-feature kernel used in this work. 
Unlike $\mathrm{QEK}$, which is designed to capture global graph information, $\mathrm{GDQC}$ is based on local observables and is therefore better suited to predicting local graph properties.

The $\mathrm{GDQC}$ kernel is inspired by the classical generalized-distance Weisfeiler-Leman (GD-WL) kernel~\cite{shervashidze11a}. Following this idea, we combine a graph distance with an additional source of graph information. We denote by $d_{\mathcal{G}}(i,j)$ the distance between nodes $i$ and $j$, and by $D^{\mathcal{G}}$ the corresponding distance matrix with entries $D^{\mathcal{G}}_{i,j} = d_{\mathcal{G}}(i,j)$. 
In this work, we use the shortest-path distance as the standard metric for characterizing pairwise node distance. This choice offers two key advantages: first, on unweighted graphs it takes integer values, which provides a natural discretization for one of the model's hyper-parameters and simplifies parameterization. Second, as one of the most well-established graph-theoretic metrics, it benefits from a rich ecosystem of efficient algorithms, making it both principled and computationally practical.
\newline Within the $\mathrm{GDQC}$ framework, the second element is a local observable. Concretely, we use the correlation matrix $C^{\mathcal{G}}(t)$ with entries $C^{\mathcal{G}}_{i,j}(t) = \left\langle \psi_{\mathcal{G}}(t) \big| \hat{n}_i \hat{n}_j \big| \psi_{\mathcal{G}}(t) \right\rangle$, cf. Eq.~(\ref{eq:observables}). More generally, the $\mathrm{GDQC}$ framework can be extended to other local observables, including higher-order correlators.

\begin{algorithm}[t]
\DontPrintSemicolon
\caption{Computation of the $\mathrm{GDQC}$ feature vector $\bm{\chi}\!\left[C^{\mathcal{G}}\!\left(t\right),D^{\mathcal{G}}\right]$ in Eq.~(\ref{eq:GDQC_t}).}
\label{algo:GDQC}
\KwIn{$C^{\mathcal{G}}(t)$; $D^{\mathcal{G}}$; $N_{\mathrm{bins}}^D$; $N_{\mathrm{bins}}^C$; $\Delta_C = 1/N_{\mathrm{bins}}^C$; $\Delta_D = d_\mathrm{max}/N_{\mathrm{bins}}^D$}
\For{$\ell \leftarrow 0$ \KwTo $N_{\mathrm{bins}}^D - 1$}{
  $\Omega_\ell^{(d)} \gets \bigl[\,\ell\,\Delta_D,\, (\ell+1)\,\Delta_D \bigr]$\;
  \For{$p \leftarrow 0$ \KwTo $N_{\mathrm{bins}}^C - 1$}{
    $\Omega_p^{(q)} \gets \bigl[\,p\,\Delta_C,\, (p+1)\,\Delta_C \bigr]$\;
    $\bm{\chi}\bigl[N_{\mathrm{bins}}^C\,\ell + p \bigr]
      \gets \Bigl|\{ (i,j) : D^{\mathcal{G}}_{i,j} \in \Omega_\ell^{(d)},\ 
      C^{\mathcal{G}}_{i,j}(t) \in \Omega_p^{(q)} \} \Bigr|$\;
  }
}
\Return{$\bm{\chi}\, / \lVert \bm{\chi} \rVert$}\;
\end{algorithm}

The proposed $\mathrm{GDQC}$ kernel function [\textit{cf.} Eq.~(\ref{eq:general_quantum_evolution_kernel})] between two graphs, $\mathcal{G}_{\mu}$ and $\mathcal{G}_{\mu^\prime}$, is defined as scalar product,
\begin{align}
 \kappa_{\mathrm{GDQC}}\left(\mathcal{G}_{\mu}, \mathcal{G}_{\mu^\prime}\right)
& =\boldsymbol{\chi}^{\intercal}\left[C^{\mathcal{G}_{\mu}}\left(t\right),D^{\mathcal{G}_{\mu}}\right]\,\boldsymbol{\chi}\left[C^{\mathcal{G}_{\mu^{\prime}}}\left(t\right),D^{\mathcal{G}_{\mu^{\prime}}}\right].
\label{eq:GDQC_t}
\end{align}
It involves a graph-dependent $\mathrm{GDQC}$ feature vector $\boldsymbol{\chi}\left[C^{\mathcal{G}}\left(t\right),D^{\mathcal{G}}\right]$ whose construction is the core element of the $\mathrm{GDQC}$ framework.
Mathematically, it is defined via algorithm~\ref{algo:GDQC}, and its construction is illustrated in Fig.~\ref{fig:GDQC_illustration}. 
It relies on a binning procedure where the $N^2$ node pairs $\left(i,j \right)$ are binned in two steps: 
first, according to the value of their correlation-matrix elements $C^{\mathcal{G}}_{i,j}\left(t\right) \in \left[0,1 \right]$, and second, according to their distance $d_{\mathcal{G}}(i,j)$.
Hereby, the correlation values are binned into $N_{\mathrm{bins}}^C \geq 1$ intervals of size $\varDelta_C = 1/N_{\mathrm{bins}}^C$ (bins associated with the upper axis in Fig.~\ref{fig:GDQC_illustration}), while the distances are binned into $N_{\mathrm{bins}}^D \geq 1$ intervals of size $\varDelta_D = d_\mathrm{max}/N_{\mathrm{bins}}^D$ where $d_\mathrm{max}$ denotes the maximum distance across the entire dataset.
Using the integer-valued shortest-path distance (or graph distance), the most natural binning choice is
$N_{\mathrm{bins}}^D=d_\mathrm{max}+1$ bins, \textit{i.e.} one bin for each possible graph distance (including distance $0$) as illustrated on the lower axis in Fig.~\ref{fig:GDQC_illustration}.
For the correlation values, we instead treat the binning resolution as a hyperparameter: we analyze the effect of the correlation-bin width $\varDelta_C = 1/N_{\mathrm{bins}}^C$ in Subsec.~\ref{subsec:binning_analysis}. 
Unless stated otherwise, the results in the main part of Sec.~\ref{sec:expe} use $N_{\mathrm{bins}}^C = 10$.
In combination the two binning procedures create the $(N_{\mathrm{bins}}^C N_{\mathrm{bins}}^D)$-dimensional $\mathrm{GDQC}$ feature vector $\boldsymbol{\chi}$ where each element is associated with a specific combination of correlation and distance bins.

Computing the correlation matrix from $N_{\mathrm{shots}}$ measurement outcomes requires $\mathcal{O}(N_{\mathrm{shots}}\,N^2)$ operations. 
Computing the all-pairs shortest-path distance matrix with the Floyd-Warshall algorithm~\cite{FloydWarshall} costs $\mathcal{O}(N^3)$. 
Finally, the binning step is linear in the number of node pairs, and therefore costs $\mathcal{O}(N^2)$.
Overall, the $\mathrm{GDQC}$ post-processing complexity is $\mathcal{O}(N_\mathrm{shots} \, N^2 + N^3)$.
In our experiments, $N_{\mathrm{shots}} > N$, so the dominant term is $\mathcal{O}(N_{\mathrm{shots}}\,N^2)$.

Following the protocol used for $\mathrm{QEK}$ (Subsec.~\ref{subsec:kernel_QEK}), we evaluate the kernel in Eq.~(\ref{eq:GDQC_t}) under two time evolutions: one generated by the global-detuning Hamiltonian $\hat{\mathcal{H}}_{\mathrm{glob}}^{\mathcal{G}}$ (\ref{eq:H_global}) and one generated by the local-detuning Hamiltonian $\hat{\mathcal{H}}_{\mathrm{loc}}^{\mathcal{G}}$ (\ref{eq:H_local}). 
We denote the corresponding kernels by $\kappa_{\mathrm{GDQC}}^{\mathrm{glob}}$ and $\kappa_{\mathrm{GDQC}}^{\mathrm{loc}}$. 
In our experiments, these are used for non-attributed and attributed graphs, respectively.
\begin{figure}[t]
\includegraphics[width=1\columnwidth]{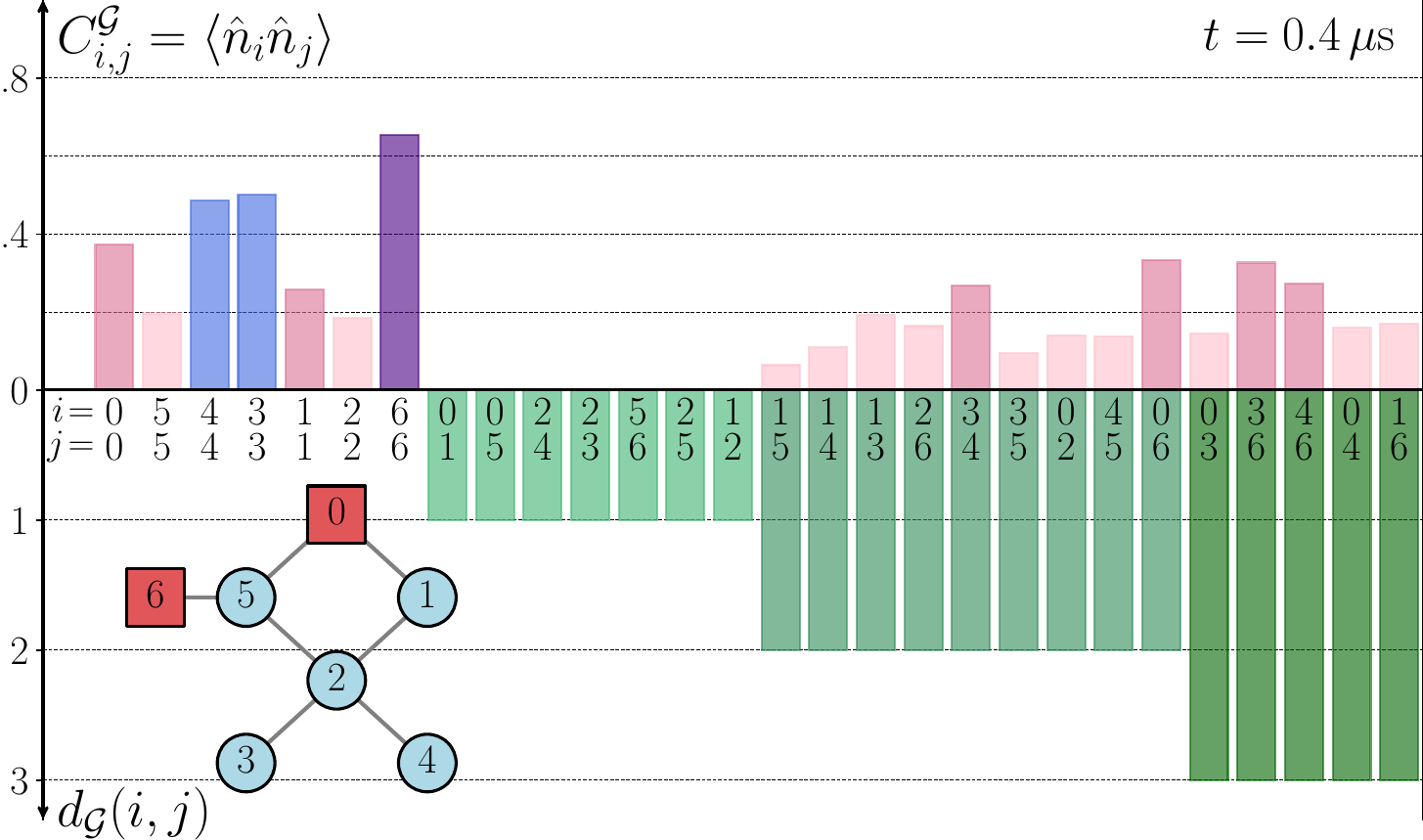}
\caption{Illustration for the construction of the generalized-distance quantum-correlation feature vector $\boldsymbol{\chi}\left[C^{\mathcal{G}}\left(t\right),D^{\mathcal{G}}\right]$ according to the algorithm~\ref{algo:GDQC} for the indicated graph of size $N=7$. 
The figure shows the measured correlation $C^{\mathcal{G}}_{i,j}$ (upper bar) together with the respective distance measure $d_{\mathcal{G}}(i,j)$ (lower bar) for each of the $N^2=49$ node pairs $\left(i,j \right)$.
Note the symmetries $C^{\mathcal{G}}_{j,i}=C^{\mathcal{G}}_{i,j}$ and $d_{\mathcal{G}}(j,i)=d_{\mathcal{G}}(i,j)$.
Also, notice how the pairs that are at distance 1 showcase very low values due to the blockade phenomenon.
Having chosen $N_{\mathrm{bins}}^C = 5$ correlation bins, the node pairs $\left(i,j \right)$ are binned according to their correlation-matrix values (indicated by the coloring), leading to an intermediate correlation-matrix binning vector $\boldsymbol{\chi}^C_{\mathrm{bin}} =\left( 36, 10, 2, 1, 0\right)^\intercal$.
Next, the node pairs are further binned according to the graph distance $d_{\mathcal{G}}(i,j)$. 
For example, the 36 elements from the interval $[0, 0.2]$ (faint red values) are redistributed according to the four involved graph distance values, leading to a sequence $\{2, 14, 14, 6\}$.
All bins combined, this procedure generates the $(N_{\mathrm{bins}}^C N_{\mathrm{bins}}^D)$-dimensional $\mathrm{GDQC}$ feature vector $\boldsymbol{\chi}$ which, in the present example becomes $\boldsymbol{\chi} =
(2, 2, 2, 1, 0, 14, 0, 0, 0, 0, 14, 4, 0, 0, 0, 6, 4, 0, 0, 0)^\intercal /\sqrt{473}$.
The presented correlation matrix $C^{\mathcal{G}}_{i,j}=C^{\mathcal{G}}_{i,j}\left(t = 0.4\, \mathrm{\mu s}\right)$ corresponds to the time step indicated in Fig.~\ref{fig:dynamics}.
}
\label{fig:GDQC_illustration}
\end{figure}

\begin{proposition}\label{prop:prop2}
The function $\kappa_{\mathrm{GDQC}}$ defined in Eq.~(\ref{eq:GDQC_t}) is a valid graph kernel, i.e. it is symmetric positive semi-definite on the space of graphs equipped with any graph distance $d_{\mathcal{G}}$, and is isomorphism-invariant such that $\kappa_{\mathrm{GDQC}}(\mathcal{G}_{\mu}, \mathcal{G}_{\mu^\prime}) = 1$ whenever $\mathcal{G}_\mu$ and $\mathcal{G}_{\mu^\prime}$ are isomorphic.
\end{proposition} 
Next, we demonstrate that the $\mathrm{GDQC}$ kernel function $\kappa_{\mathrm{GDQC}}$ (\ref{eq:GDQC_t}) satisfies all graph-kernel properties. 
Its key properties are summarized in proposition~\ref{prop:prop2}, and are proven in detail in App.~\ref{app:GDQC}. 
Let us emphasize that the converse of point (3) is not proven to be true, \textit{that is,} we do not claim that $\kappa_{\mathrm{GDQC}}(\mathcal{G}_\mu,\mathcal{G}_{\mu^\prime}) = 1$ if \textit{and only if} $\mathcal{G}_\mu$ and $\mathcal{G}_{\mu^\prime}$ are isomorphic. A proof thereof (if it was true) would require a deeper analysis. 

The connection between the proposed $\mathrm{GDQC}$ kernel~(\ref{eq:GDQC_t}) and the Weisfeiler-Lehman (WL) family of kernels~\cite{shervashidze11a}, in particular its generalized-distance variant (GD-WL), can be made explicit by comparing their \emph{expressiveness}. 
Let $\kappa_1$ and $\kappa_2$ be two normalized graph kernels.
\newpage
\begin{enumerate}
\item A kernel $\kappa$ is \emph{expressive} for a pair of non-isomorphic graphs $(\mathcal{G},\mathcal{G}^\prime)$ if it distinguishes them, i.e.,
$\kappa(\mathcal{G},\mathcal{G}^\prime)<1$.
\item We say that $\kappa_1$ is \emph{more expressive} than $\kappa_2$ if
(i) for all non-isomorphic $(\mathcal{G},\mathcal{G}^\prime)$, $\kappa_1(\mathcal{G},\mathcal{G}^\prime)=1 \Rightarrow \kappa_2(\mathcal{G},\mathcal{G}^\prime)=1$, and
(ii) there exists at least one non-isomorphic pair $(\mathcal{G}_*,\mathcal{G}^\prime_*)$ such that $\kappa_2(\mathcal{G}_*,\mathcal{G}^\prime_*)=1$ but $\kappa_1(\mathcal{G}_*,\mathcal{G}^\prime_*)<1$.
\end{enumerate}

A practical question then concerns the binning parameters $N_{\mathrm{bins}}^D$ and $N_{\mathrm{bins}}^C$, which determine the bin widths $\Delta_D$ and $\Delta_C$. 
Increasing the number of bins refines the induced features and typically increases expressiveness, as formalized by Proposition~\ref{prop:prop3} whose proof is provided in App.~\ref{app:expressivity_gdqc}.

\begin{proposition}\label{prop:prop3}
The $\mathrm{GDQC}$ kernel $\kappa_{\mathrm{GDQC}}$ matches the expressiveness of the GD-WL refinement~\cite{zhang2023rethinking} when GD-WL is initialized with node colors constructed from the multisets of (binned) correlation values in the rows/columns of $C^{\mathcal{G}}(t)$.
\end{proposition}

If the chosen graph distance induces a GD-WL refinement that is less expressive than the quantum correlation matrix, then the expressiveness of $\kappa_{\mathrm{GDQC}}$ cannot exceed that of the correlation matrix, since the latter already provides a stable initial coloring for GD-WL (see Appendix~\ref{app:expressivity_gdqc}). This point is crucial: Zhang et al.~\cite{Zhang_expressive} show that, for several commonly used distances (including Laplacian-eigenvector, resistance, and shortest-path distances), the expressiveness of GD-WL is upper bounded by $3$-WL. In contrast, Thabet et al.~\cite{Thabet_Quantum} provide evidence that correlation matrices of local observables generated by an Ising Hamiltonian can be at least as powerful as $4$-WL, as demonstrated by their ability to distinguish strongly regular graphs~\cite{godsil01,pmlr-v139-bodnar21a}. Taken together, these results suggest that, for the state-of-the-art distances studied in~\cite{Zhang_expressive}, using a ``simple'' distance (e.g., shortest-path distance) is unlikely to increase the expressiveness of $\kappa_{\mathrm{GDQC}}$ beyond what is already captured by the quantum correlations, although it may still improve downstream predictive performance (cf. Sec.~\ref{sec:expe}).

Finally, we provide intuition for the $\mathrm{GDQC}$ kernel by identifying a purely classical baseline. 
To this end, we exploit the early-time dynamics discussed in Subsec.~\ref{subsec:perturbation_theory}. 
We have shown that the diagonal correlation-matrix elements scale as
$C_{i,i}\left(t\right)=\mathcal{O}\left( (\Omega t)^{2} \right)$,
whereas the off-diagonal elements scale as
$C_{i,j\neq i}\left(t\right)=\mathcal{O}\left( (\Omega t)^{4} \right)$ for $i\neq j$.
Consequently, at sufficiently early times all entries of $C(t)$ are small. 
The $\mathrm{GDQC}$ correlation binning therefore assigns all $N^2$ node pairs to the first correlation bin, while the remaining bins are empty.
Hence, the intermediate correlation-matrix binning vector becomes
\begin{align}
    \boldsymbol{\chi}^C_{\mathrm{bin},\, \mathrm{baseline}} &=\left( N^2, 0, \dots , 0\right)^\intercal,\label{eq:chi_bin_classical}
\end{align}
cf. Fig.~\ref{fig:GDQC_illustration}. 
In the near-zero correlation regime, the correlation-binning step carries no discriminative information, so the subsequent distance binning becomes the only nontrivial component.
Consequently, the $\mathrm{GDQC}$ feature vector $\boldsymbol{\chi}$ is dominated by the graph's distance profile: for each shortest-path length, it counts (up to normalization) how many node pairs lie at that distance, making its early-time behavior closely related to the classical (unattributed) shortest-path kernel (see Subsec.~\ref{subsec:classical_kernels}). 
Notably, the representation is not strictly identical, since $\mathrm{GDQC}$ also includes contributions from node pairs at zero graph distance, whereas the standard shortest-path kernel excludes diagonal terms.
Overall, the early-time $\mathrm{GDQC}$ regime provides a natural classical baseline.

\subsection{Kernel pooling\label{subsec:pooling}}

We now describe how to combine information from observables sampled at multiple times $\{t_1,\dots,t_{N_t}\}$ (cf. Eq.~\ref{eq:general_quantum_evolution_kernel}), while still keeping relevant input for the task at hand. 
Two strategies are possible: (i) concatenate all time-dependent features into a single vector and apply one kernel, or (ii) define one kernel per time step and pool these kernels. 
We follow the second strategy and construct a collection of kernels $\{\kappa_1,\dots,\kappa_{N_t}\}$, where $\kappa_\gamma$ is computed from observables at time $t_\gamma$.

Any pooling rule that preserves kernel properties [\textit{cf.} proposition~\ref{prop:prop2}] yields a valid kernel. In particular, if each $\kappa_\gamma$ is a valid kernel, then (a) any convex combination and (b) the pointwise product are also valid kernels. In this work, we consider the following two pooling schemes:
\begin{align}
\kappa_{+}(\mathcal{G}_1,\mathcal{G}_2)
&= \sum_{\gamma=1}^{N_t} \alpha_\gamma\, \kappa_\gamma(\mathcal{G}_1,\mathcal{G}_2),
\label{eq:lin_comb}\\
\kappa_{\times}(\mathcal{G}_1,\mathcal{G}_2)
&= \prod_{\gamma=1}^{N_t} \kappa_\gamma(\mathcal{G}_1,\mathcal{G}_2),
\label{eq:prod}
\end{align}
where $\alpha_\gamma \ge 0$ and $\sum_{\gamma=1}^{N_t}\alpha_\gamma = 1$.
From a feature-map perspective, if $\kappa_\gamma(\mathcal{G}_1,\mathcal{G}_2)=\langle \boldsymbol{\chi}_\gamma(\mathcal{G}_1),\boldsymbol{\chi}_\gamma(\mathcal{G}_2)\rangle$, then Eq.~\eqref{eq:lin_comb} corresponds to the concatenated map
\begin{equation}
\boldsymbol{\chi}_{+}(\mathcal{G}) = \big\|_{\gamma=1}^{N_t} \left(\sqrt{\alpha_\gamma}\,\boldsymbol{\chi}_\gamma(\mathcal{G})\right),
\end{equation}
where $||_{\gamma = 1}^{N_t}$ denotes concatenation, while Eq.~\eqref{eq:prod} corresponds to the tensor-product map
\begin{equation}
\boldsymbol{\chi}_{\times}(\mathcal{G}) = \bigotimes_{\gamma=1}^{N_t} \boldsymbol{\chi}_\gamma(\mathcal{G}),
\end{equation}
where $\otimes$ denotes the tensor product. These constructions suggest that pooling \emph{can} improve expressiveness, but this is not guaranteed and depends on the additional kernels. We formalize this intuition in Proposition~\ref{prop:prop4} (see App.~\ref{app:kernel_pooling} for a proof).
\begin{proposition}\label{prop:prop4}
Let $S=\{\kappa_1,\dots,\kappa_N\}$ be a set of graph kernels, and let $\kappa_A$ and $\kappa_B$ be kernels obtained by pooling two subsets $S_A$ and $S_B$ of $S$ using (\ref{eq:lin_comb}) or (\ref{eq:prod}) as pooling rules. If $S_A \subset S_B$, then $\kappa_B$ is at least as expressive as $\kappa_A$.
\end{proposition}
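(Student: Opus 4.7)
My plan is to work at the level of the explicit feature maps $\chi_{+}$ and $\chi_{\times}$ introduced just above the proposition, because the notion of ``at least as expressive'' only cares about which non-isomorphic pairs produce the kernel value $1$. The first step is to verify that both pooled kernels are normalized: $\kappa_{+}(\mathcal{G},\mathcal{G})=1$ follows from $\sum_\gamma \alpha_\gamma = 1$ together with $\kappa_\gamma(\mathcal{G},\mathcal{G}) = 1$, while $\kappa_{\times}(\mathcal{G},\mathcal{G}) = 1$ follows from the multiplicativity of the tensor-product inner product. Normalization is needed so that a Cauchy--Schwarz equality argument can be applied in the next step to isolate the pairs that realize $\kappa_{\mathrm{pool}} = 1$.

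Second, I would combine normalization with the elementary bound $\kappa_\gamma \le 1$ to characterize exactly when each pooled kernel reaches its maximum value. For $\kappa_{+} = \sum_\gamma \alpha_\gamma \kappa_\gamma$, a convex combination of values in $[0,1]$ equals $1$ only when every term with $\alpha_\gamma > 0$ equals $1$; for $\kappa_{\times} = \prod_\gamma \kappa_\gamma$, a product of values in $[0,1]$ equals $1$ only when every factor equals $1$. Translating this back through the feature-map construction, the pooled kernel equals $1$ on a pair iff $\chi_\gamma(\mathcal{G}_1) = \chi_\gamma(\mathcal{G}_2)$ for every $\gamma$ carrying nonzero weight in the pool, a direct consequence of the direct-sum and tensor-product structure of $\chi_{+}$ and $\chi_{\times}$.

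Third, I would read off the desired comparison from the subset inclusion $S_A \subset S_B$. Because attaining the value $1$ has been identified with component-wise alignment across the pool, the two ``equals $1$'' conditions are ordered by logical implication according to the inclusion of the index sets, which in turn translates into the one-way implication between $\kappa_A(\mathcal{G},\mathcal{G}')=1$ and $\kappa_B(\mathcal{G},\mathcal{G}')=1$ required by the paper's definition of expressiveness. The argument is identical for sum pooling and product pooling once the characterization of the ``equals $1$'' set is in hand, so they can be handled in parallel.

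The principal care point I expect is the bookkeeping for the sum-pooling case when some weights vanish: indices with $\alpha_\gamma = 0$ contribute nothing to $\kappa_{+}$ and must be removed from the effective pool before the subset comparison $S_A \subset S_B$ is applied, otherwise the claimed implication becomes vacuous on those indices and the comparison between $\kappa_A$ and $\kappa_B$ is mislabelled. A second, more delicate, point is confirming that the Cauchy--Schwarz equality condition applied to the concatenated and tensor-product feature vectors gives genuine component-wise equality of unit vectors (and not merely parallelism), which is exactly where the individual normalizations $\lVert \chi_\gamma \rVert = 1$ and $\kappa_\gamma(\mathcal{G},\mathcal{G}) = 1$ are needed.
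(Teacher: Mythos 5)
Your proposal is correct and follows essentially the same route as the paper's proof: both arguments reduce to the observation that a normalized sum- or product-pooled kernel attains the value $1$ on a pair of graphs if and only if every (positively weighted) constituent kernel does, so enlarging the pooled set can only strengthen that condition and hence can only shrink the set of indistinguishable pairs. Your variant phrases this through the concatenated/tensor-product feature maps and handles the weight normalization by discarding zero-weight indices rather than by the paper's explicit rescaling $\alpha_i^{\prime}=\alpha_i\,n_A/n_B$, and by restricting attention to the ``equals $1$'' sets it quietly avoids the paper's stronger (and unnecessary) intermediate claim of pointwise monotonicity of the pooled kernel values; these are cosmetic differences, not a different proof.
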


\begin{table}[t]
\centering
\caption{Weighted F1 scores associated with the best-performing SVM models on two datasets, MUTAG and PTC\_FM$^*$, for various kernel implementations introduced in Sec.~\ref{sec:kernels}: ranging from the classical kernels to the two quantum-feature kernels, $\mathrm{QEK}$ (\ref{eq:QEK_kernel}) and $\mathrm{GDQC}$ (\ref{eq:GDQC_t}).
The superscript $\mathrm{(attr)}$ on classical kernels denotes the incorporation of node features (attributes).
The quantum-feature kernels are listed in a non-pooled and a pooled version where the subscript $\boldsymbol{+}$ ($\boldsymbol{\times}$) indicates sum- (Hadamard product-) pooling, \textit{cf.} Subsec.~\ref{subsec:pooling}. 
The superscripts $\mathrm{(loc)}$ and $\mathrm{(glob)}$ refer to whether the governing Hamiltonian has node features embedded [$\hat{\mathcal{H}}_{\mathrm{loc}}^{\mathcal{G}}$ (\ref{eq:H_local})] or not [$\hat{\mathcal{H}}_{\mathrm{glob}}^{\mathcal{G}}$ (\ref{eq:H_global})].
The \textcolor{JungleGreen}{best-}, \textcolor{BurntOrange}{second-to-best}, and \textcolor{Cerulean}{third-to-best} performing models are highlighted in the respective colors.
}

\begin{tabular}{llcc}
\toprule
 & \textbf{Kernel} & \textbf{MUTAG} & \textbf{PTC\_FM$^*$} \\
& & \textbf{F1 score (\%)} & \textbf{F1 score (\%)}\\
\midrule
\multirow{8}{*}{\rotatebox{90}{Classical}} 
& Random Walk                     & 85.60 $\pm$ 4.23     & 58.12 $\pm$ 4.76   \\
& Graphlet sampling               & 84.76 $\pm$ 5.41     & 60.69 $\pm$ 5.11     \\
& $k$-core                        & 83.58 $\pm$ 6.00     & 58.98 $\pm$ 4.66     \\
& Shortest path                   & 82.25 $\pm$ 5.69     & 54.06 $\pm$ 5.12    \\
& WL optimal assignment           & 86.88 $\pm$ 5.01     & 58.07 $\pm$ 5.00     \\
& $\text{$k$-core}^{\mathrm{(attr)}}$                & 83.99 $\pm$ 5.39     & 59.84$\pm$ 4.76     \\
& $\text{Shortest path}^{\mathrm{(attr)}}$           & 81.70 $\pm$ 5.40     & 59.71 $\pm$ 4.63     \\
& $\text{WL optimal assignment}^{\mathrm{(attr)}}$   & 86.63 $\pm$ 5.25    & 63.40 $\pm$ 4.23 \\
\midrule
\multirow{12}{*}{\rotatebox{90}{Quantum features}}
& $\mathrm{QEK}^{\mathrm{(loc)}}$                      & 86.19 $\pm$ 5.03     & 64.11 $\pm$ 4.75    \\
& $\mathrm{QEK}^{\mathrm{(glob)}}$                     & 85.80 $\pm$ 5.81     & 60.62 $\pm$ 5.16    \\
& $\mathrm{GDQC}^{\mathrm{(loc)}}$                     & 89.29 $\pm$ 4.89     & 60.87 $\pm$ 5.49    \\
& $\mathrm{GDQC}^{\mathrm{(glob)}}$                    & 88.53 $\pm$ 4.33     & 59.61 $\pm$ 5.57    \\
& $\mathrm{QEK}^{\mathrm{(loc)}}_{\boldsymbol{+}}$      & 89.26 $\pm$ 5.40     & \textbf{\textcolor{JungleGreen}{69.55 $\pm$ 4.98}}    \\
& $\mathrm{QEK}^{\mathrm{(loc)}}_{\boldsymbol{\times}}$ & 88.78 $\pm$ 4.89 & \textbf{\textcolor{BurntOrange}{69.28 $\pm$ 5.28}} \\
& $\mathrm{QEK}^{\mathrm{(glob)}}_{\boldsymbol{+}}$     & \textbf{\textcolor{BurntOrange}{89.45 $\pm$ 4.02}}     & 62.22 $\pm$ 5.54 \\
& $\mathrm{QEK}^{\mathrm{(glob)}}_{\boldsymbol{\times}}$& 88.96 $\pm$ 4.98     & 63.35 $\pm$ 4.78    \\
& $\mathrm{GDQC}^{\mathrm{(loc)}}_{\boldsymbol{+}}$     & \textbf{\textcolor{Cerulean}{89.35 $\pm$ 4.97}} & 65.86 $\pm$ 5.26    \\
& $\mathrm{GDQC}^{\mathrm{(loc)}}_{\boldsymbol{\times}}$& \textbf{\textcolor{JungleGreen}{90.83 $\pm$ 5.56}}     & \textbf{\textcolor{Cerulean}{66.51 $\pm$ 4.82}} \\
& $\mathrm{GDQC}^{\mathrm{(glob)}}_{\boldsymbol{+}}$    & 88.84 $\pm$ 5.42 & 59.82 $\pm$ 6.02 \\
& $\mathrm{GDQC}^{\mathrm{(glob)}}_{\boldsymbol{\times}}$ & 89.10 $\pm$ 5.55     & 61.18 $\pm$ 5.66    \\
\midrule
\end{tabular}

\label{tab:core_table}
\end{table}

\section{Results\label{sec:expe}}

In this section we present the extensive numerical results where we compare the various classical and quantum-feature kernel implementations introduced in Sec.~\ref{sec:kernels} with respect to their prediction performance using a support vector machine.
Table~\ref{tab:core_table} summarizes the results, providing a full comparison across all models and their respective best performances. 

We begin with a brief summary of the numerical-experiment setup. As classical baselines, we consider five widely used graph kernels (cf. Subsec.~\ref{subsec:classical_kernels}): the random-walk kernel, the graphlet (sampling) kernel~\cite{Przulj_2007}, the $k$-core kernel~\cite{k_core_kernels,Seidman_1983}, the shortest-path kernel, and the Weisfeiler-Lehman optimal assignment kernel~\cite{Kriege_2016}. 
The last three baselines can be used either with or without node features; when node features are available, we denote the corresponding variant by the superscript $\mathrm{(attr)}$.
Classical kernel baselines are computed using the GraKeL library~\cite{Grakel}.
The two quantum-feature kernels central to this study are the quantum evolution kernel (Subsec.~\ref{subsec:kernel_QEK}), which is based on the global excitation-probability observable $P_k^{\mathcal{G}}(t)$ (\ref{eq:observables}), and the generalized-distance quantum-correlation kernel (Subsec.~\ref{subsec:algo}), which is based on a local observable: the correlation matrix $C^{\mathcal{G}}_{ij}(t)$ (\ref{eq:observables}).

A key element of this work is the extension of the above quantum-feature kernels to attributed graphs. 
We achieve this by embedding atomic masses (node features) into the local detuning of the Hamiltonian $\hat{\mathcal{H}}_{\mathrm{loc}}^{\mathcal{G}}$ (\ref{eq:H_local}), in addition to embedding graph edges into atomic positions (cf. Subsec.~\ref{subsec:Graph_Embedding}). 
For comparison, we also consider a global-detuning Hamiltonian $\hat{\mathcal{H}}_{\mathrm{glob}}^{\mathcal{G}}$ (\ref{eq:H_global}), which embeds only edge information and uses no node features.
In Tabs.~\ref{tab:core_table}-\ref{tab:results} and Fig.~\ref{fig:pointwise}, we distinguish kernels based on whether the observables used to compute the kernel originate from time evolution under $\hat{\mathcal{H}}_{\mathrm{loc}}^{\mathcal{G}}$ or $\hat{\mathcal{H}}_{\mathrm{glob}}^{\mathcal{G}}$. 
We denote the corresponding variants by $\mathrm{QEK}^{(\mathrm{loc})}$ and $\mathrm{QEK}^{(\mathrm{glob})}$ (Eq.~\eqref{eq:QEK_kernel}), and analogously $\mathrm{GDQC}^{(\mathrm{loc})}$ and $\mathrm{GDQC}^{(\mathrm{glob})}$ (Eq.~\eqref{eq:GDQC_t}).
Finally, we study kernel pooling, where information from observables sampled at different times is combined (cf. Subsec.~\ref{subsec:pooling}). We consider sum pooling (Eq.~\eqref{eq:lin_comb}) and product pooling (Eq.~\eqref{eq:prod}). When pooling is applied, we add a subscript to the kernel name: $\boldsymbol{+}$ for sum pooling and $\boldsymbol{\times}$ for product pooling, e.g., $\mathrm{QEK}^{(\mathrm{loc})}_{\boldsymbol{+}}$ and $\mathrm{QEK}^{(\mathrm{loc})}_{\boldsymbol{\times}}$ (see Tabs.~\ref{tab:core_table}-\ref{tab:results}).

Throughout this work, we train an SVM classifier using precomputed kernel matrices (see the beginning of Sec.~\ref{sec:kernels} for details on kernel SVMs). We evaluate performance using the weighted F1 score, which accounts for class imbalance (notably in MUTAG and PTC\_FM$^{*}$). Concretely, the F1 score is computed per class and averaged with weights proportional to the class support.
We employ repeated stratified 5-fold cross-validation with 10 repetitions and report the mean and standard deviation of the weighted F1 score across all folds and repetitions. For each kernel, we tune the SVM regularization parameter $C$ over 100 logarithmically spaced values in $[10^{-3},10^{2}]$ and retain the best-performing model.
The best results are summarized in Tab.~\ref{tab:core_table}. Overall, incorporating node features improves performance. 
The best classical baseline is the WL optimal assignment kernel: with node features on PTC\_FM$^{*}$, and on unlabeled graphs for MUTAG.

\begin{figure}[t]
\centering
\includegraphics[width=1.\columnwidth]{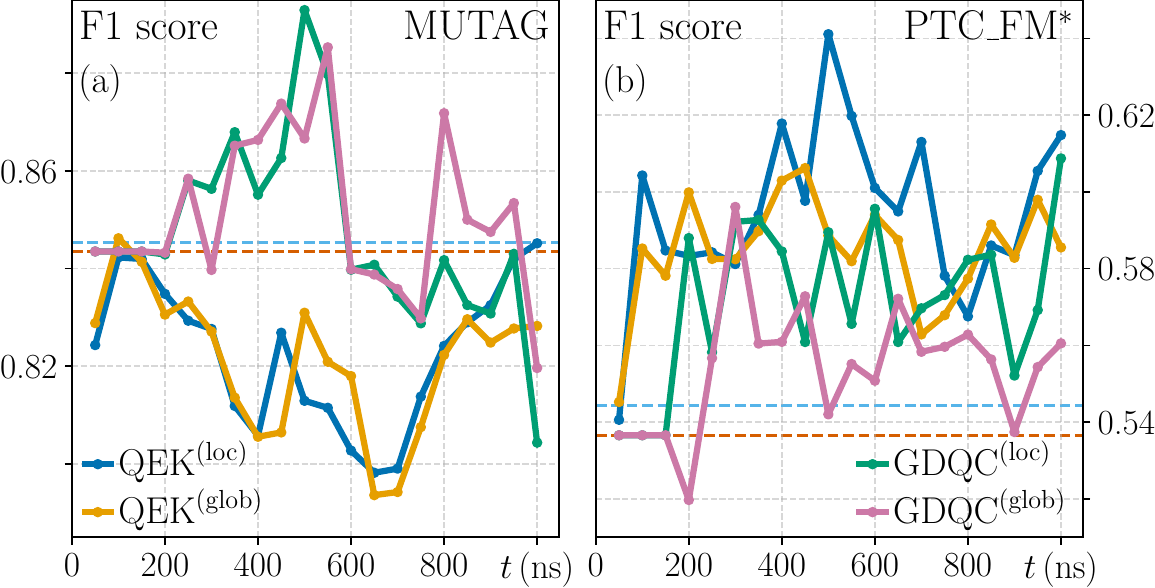}
\caption{Weighted F1 scores associated with SVMs trained on the quantum-feature kernels, $\mathrm{QEK}$ (\ref{eq:QEK_kernel}) and $\mathrm{GDQC}$ (\ref{eq:GDQC_t}), for the two datasets, (a) MUTAG and (b) PTC\_FM$^*$.
The superscripts $\mathrm{(loc)}$ and $\mathrm{(glob)}$ refer to whether the governing Hamiltonian has node features embedded [$\hat{\mathcal{H}}_{\mathrm{loc}}^{\mathcal{G}}$ (\ref{eq:H_local})] or not [$\hat{\mathcal{H}}_{\mathrm{glob}}^{\mathcal{G}}$ (\ref{eq:H_global})].
The training and evaluation procedure is separately conducted at each time steps across the time evolution. 
For comparison, the corresponding randomly stratified baseline, where each class is predicted proportionally to its frequency in the dataset are 0.56 for MUTAG and 0.53 for PTC\_FM$^*$, significantly below the presented F1 scores.
The $\mathrm{GDQC}$ results are derived for a number of bins $N_{\mathrm{bins}}^C = 10$.
The light blue and red horizontal dashed lines correspond to the classical early-time baselines for $\mathrm{QEK}$ [see Eq.~(\ref{eq:P_vector_early_time_approx})] and $\mathrm{GDQC}$ [see Eq.~(\ref{eq:chi_bin_classical})], respectively.
}
\label{fig:pointwise}
\end{figure}

Next, we focus on the quantum-feature kernels. Observables are sampled at $k=20$ equally spaced time steps $\{t_1,t_2,\dots,t_k=t_{\mathrm{max}}\}$. 
We consider two time-evolution durations: a shorter evolution with $t_{\mathrm{max}}=1\,\mu\mathrm{s}$ and a longer evolution with $t_{\mathrm{max}}=2\,\mu\mathrm{s}$. 
We begin with the non-pooled results. 
At each time step, a quantum-feature kernel matrix, $\mathrm{QEK}^{(\mathrm{loc}/\mathrm{glob})}$ or $\mathrm{GDQC}^{(\mathrm{loc}/\mathrm{glob})}$, is precomputed and used to train a SVM. 
The resulting weighted F1 scores for $t_{\mathrm{max}}=1\,\mu\mathrm{s}$ on both datasets (MUTAG and PTC\_FM$^{*}$) are shown in Fig.~\ref{fig:pointwise}. 
Overall, the quantum-feature kernels substantially outperform the random baseline and, for their best-performing variants, exceed the strongest classical baselines on both datasets (Tab.~\ref{tab:core_table}).

The horizontal dashed lines in Fig.~\ref{fig:dynamics} show the classical baselines obtained by evaluating the short-time perturbative expansions of the quantum features at a fixed time (Subsec.~\ref{subsec:perturbation_theory}): red for the $\mathrm{GDQC}$ baseline (cf.\ Eq.~\eqref{eq:chi_bin_classical}) and light blue for the $\mathrm{QEK}$ baseline (cf.\ Eq.~\eqref{eq:P_vector_early_time_approx}). 
At short times, $\mathrm{GDQC}$ behaves consistently with its early-time baseline. Only once non-negligible correlation elements have built up does the $\mathrm{GDQC}$ performance depart from this baseline. 
By contrast, particularly on MUTAG, the $\mathrm{QEK}$ kernel shows slightly weaker agreement with its early-time baseline. This discrepancy may stem from multiple factors, including finite-shot noise in the estimation of the observables and the coarse nature of the early-time approximation. 
In particular, already at the smallest sampled time step $t=40\,\mathrm{ns}$, the largest graphs in the dataset exhibit a non-negligible $P_2(t)$ contribution.

More generally, $\mathrm{QEK}$ and $\mathrm{GDQC}$ are not consistently comparable across all settings. In particular, $\mathrm{GDQC}$ attains the best score on MUTAG, whereas $\mathrm{QEK}$ achieves the best score on PTC\_FM$^{*}$ (Tab.~\ref{tab:core_table}). 
Regarding the use of the node-feature embedded Hamiltonian $\hat{\mathcal{H}}_{\mathrm{loc}}^{\mathcal{G}}$ versus the global-detuning Hamiltonian $\hat{\mathcal{H}}_{\mathrm{glob}}^{\mathcal{G}}$, we observe an overall trend in favor of local detuning across both datasets, with a particularly pronounced effect on PTC\_FM$^{*}$ (Tab.~\ref{tab:core_table}).
This pattern is consistent with the classical setting, where attributed variants often outperform their non-attributed counterparts. As the sampling time varies, performance changes only moderately, within $\sim 5$--$10\%$ in weighted F1 score, see Fig.~\ref{fig:pointwise}.
The best-performing time step (i.e., the maximum over the sampled times) is reported in Tab.~\ref{tab:core_table}$;$ note that it can originate from the time interval $1\,\mathrm{\mu s}< t \leq 2\,\mathrm{\mu s}$, which is not shown in Fig.~\ref{fig:pointwise}.

Next, we consider pooling schemes to further enhance the quantum-feature kernels. The two pooling schemes are described in Subsec.~\ref{subsec:pooling}. 
Since the choice of which of the $k=20$ time steps to pool (and how many to pool) is a priori arbitrary, we fix the tuple size and evaluate \emph{all} candidate time-step tuples considered in our sweep. 
The goal is to quantify the performance gains achievable through pooling and to assess the sensitivity of the models to the pooling choice. 
For each tuple size, we select the pooling configuration that maximizes cross-validated performance and report the corresponding test performance. 
Focusing on tuple sizes $3$ and $5$, Tab.~\ref{tab:results} summarizes the selected pooled models for $t_{\mathrm{max}}=1\,\mu\mathrm{s}$ and $t_{\mathrm{max}}=2\,\mu\mathrm{s}$, together with the chosen time steps.

Pooling can significantly improve performance: adding the best pooled models from each kernel class to Tab.~\ref{tab:core_table} yields an F1-score increase of $\sim 1\%$ on MUTAG and $\sim 5\%$ on PTC\_FM$^{*}$. 
Overall, the best-performing quantum-feature models outperform the classical baselines. 
Interestingly, the ranking of kernels is markedly dataset dependent: configurations that perform strongly on MUTAG do not necessarily remain top performers on PTC\_FM$^{*}$, and vice versa. 
Consistent with this observation, $\mathrm{GDQC}$ achieves the best performance on MUTAG, whereas $\mathrm{QEK}$ achieves the best performance on PTC\_FM$^{*}$ once pooling is applied (Tab.~\ref{tab:core_table}). 
A similar dataset dependence is also visible for classical baselines such as the random-walk and shortest-path kernels (Tabs.~\ref{tab:core_table}-\ref{tab:results}). 
For the quantum-feature kernels in particular, the choice of representation (local vs.\ global observables) and the pooling strategy can further shift the ordering (Tab.~\ref{tab:results}). We therefore treat these design choices as standard model-selection hyper-parameters to be selected via validation.

\begin{figure}[t]
\centering
\includegraphics[width=1.\columnwidth]{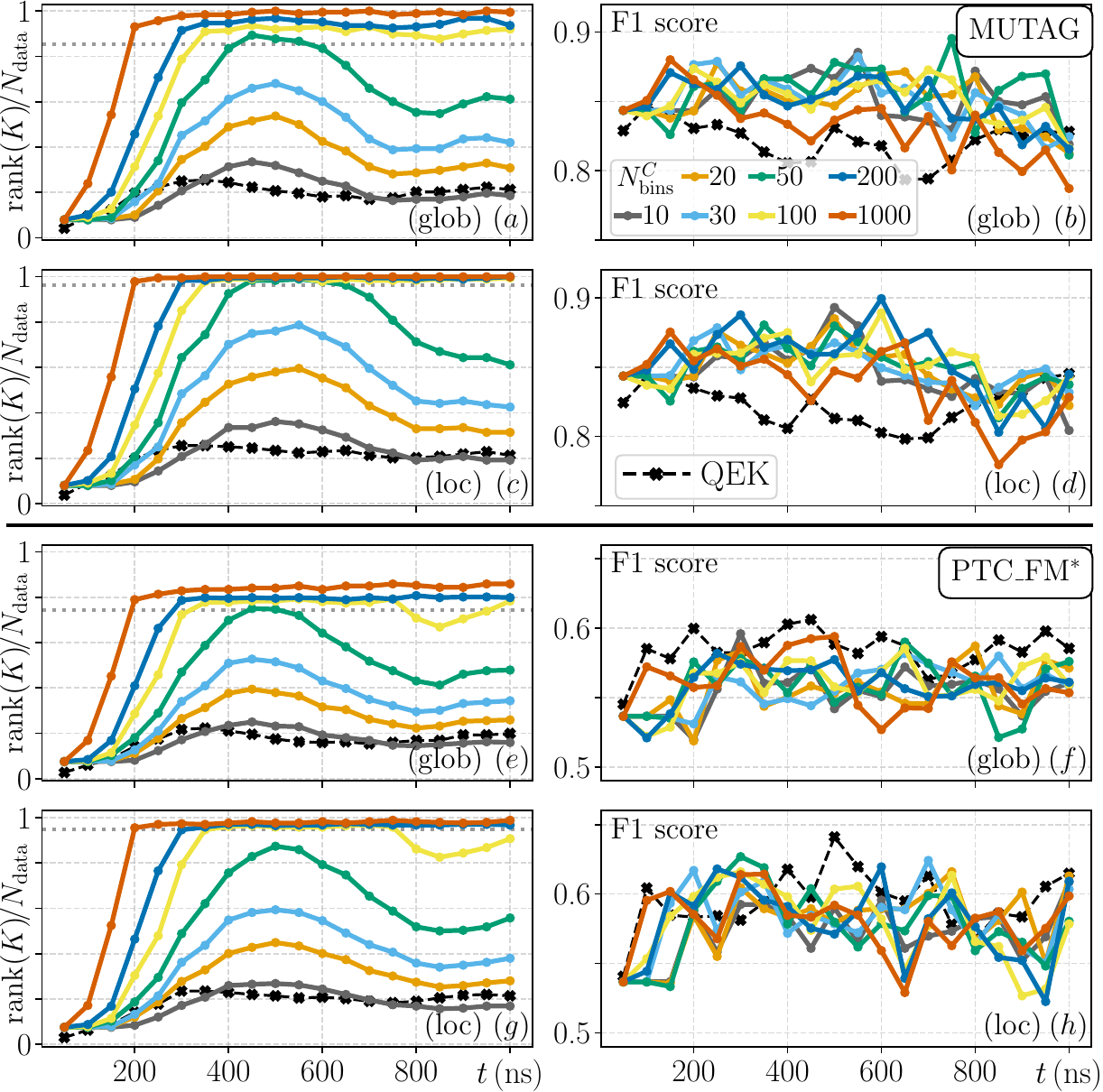} 
\caption{
Relative rank of the Gram matrix $\mathrm{rank}(K)/N_\mathrm{data}$ (\ref{eq:Gram_matrix}) associated with the $\mathrm{GDQC}$ kernel $\kappa_{\mathrm{GDQC}}$ (\ref{eq:GDQC_t}) versus the F1 score of the corresponding kernel-based SVM for varying binning number $N_{\mathrm{bins}}^{C}$.
The quantities are plotted as a function of the time steps across the entire time evolution with $t_\mathrm{max}=1\, \mathrm{\mu s}$.
(a)-(d) corresponds to the dataset MUTAG; (e)-(h) corresponds to PTC\_FM$^*$.
The identifier $\mathrm{(loc)}$ [$\mathrm{(glob)}$] imply that the involved time evolution is governed by the Hamiltonian $\hat{\mathcal{H}}_{\mathrm{loc}}^{\mathcal{G}}$ (\ref{eq:H_local}) [$\hat{\mathcal{H}}_{\mathrm{glob}}^{\mathcal{G}}$ (\ref{eq:H_global})], whereby $\hat{\mathcal{H}}_{\mathrm{loc}}^{\mathcal{G}}$ has node features embedded.
For comparison, the black dashed curve with x-markers represents the data associated with the $\mathrm{QEK}$ kernel $\kappa_{\mathrm{QEK}}$ (\ref{eq:QEK_kernel}), which does not depend on binning. The horizontal gray dotted line corresponds to the \textit{theoretical} relative rank value computed from the number of distinct graph-isomorphism classes. 
\label{fig:rank_vs_perf}}
\end{figure}

\begin{table*}[htbp]
\centering
\caption{
Weighted F1 scores associated with SVMs trained on different pooled quantum-feature kernel implementations.
The notation is identical to Tab.~\ref{tab:core_table}.
Shown are the best-performing models with pooling over $3$-tuples and $5$-tuples of time steps, for the two sets of sampled times with $t_\mathrm{max}=1\,\mathrm{\mu s}$ and $t_\mathrm{max}=2\,\mathrm{\mu s}$. 
The respective pooled time steps are included in the table.
The best-performing model for each kernel type is added to Tab.~\ref{tab:core_table}.
}
\begin{tabular}{c|c|cc|cc}
\toprule
\multirow{3}{*}{}  & \multirow{2}{*}{\textbf{Kernel}} & \multicolumn{2}{c|}{\textbf{MUTAG}} & \multicolumn{2}{c}{\textbf{PTC\_FM$^*$}} \\ 
\cmidrule{3-6}
&  & \textbf{time steps ($\mathrm{\mu s}$)} & \textbf{F1 Score (\%)} & \textbf{time steps ($\mathrm{\mu s}$)} & \textbf{F1 Score (\%)} \\
\midrule 

\multirow{18}{*}{\rotatebox{90}{maximum evolution time: $t_\mathrm{max}=1 \, \mathrm{\mu s}$}}  & \multirow{1}{*}{$\mathrm{QEK}^{\mathrm{(loc)}}$} & 1 & 84.52 $\pm$ 5.34 & 0.5 & 64.11 $\pm$ 4.75 \\

\cmidrule{2-6}
& \multirow{2}{*}{$\mathrm{QEK}^{\mathrm{(loc)}}_{\boldsymbol{\times}}$} & \{0.24, 0.44, 0.74\} & 87.05 $\pm$ 6.04 & \{0.14, 0.4, 1.\} & 66.54 $\pm$ 5.02 \\
&  & \{0.24, 0.3, 0.34, 0.74, 0.9\} & 87.61 $\pm$ 5.55 & \{0.1, 0.84, 0.9, 0.94, 1.\} & 66.66 $\pm$ 4.53 \\
\cmidrule{2-6}
&  \multirow{2}{*}{$\mathrm{QEK}^{\mathrm{(loc)}}_{\boldsymbol{+}}$} & \{0.24, 0.44, 0.64\} & 86.61 $\pm$ 5.97 & \{0.4, 0.94, 1\} & 67.34 $\pm$ 4.05 \\
&  & \{0.14, 0.24, 0.34, 0.44, 0.64\} & 86.89 $\pm$ 5.56 & \{0.04, 0.14, 0.34, 0.4, 1.\} & 67.29 $\pm$ 4.17 \\
\cmidrule{2-6}
& \multirow{1}{*}{$\mathrm{QEK}^{\mathrm{(glob)}}$} & 0.1 & 84.62 $\pm$ 5.48 & 0.44 & 60.62 $\pm$ 5.16 \\
\cmidrule{2-6}
& \multirow{2}{*}{$\mathrm{QEK}^{\mathrm{(glob)}}_{\boldsymbol{\times}}$} & \{0.14, 0.64, 0.7\} & 87.04 $\pm$ 4.46 & \{0.2, 0.24, 1.\} & 61.27 $\pm$ 5.05 \\
& & \{0.1, 0.24, 0.64, 0.7, 0.74\} & 87.35 $\pm$ 5.15 & \{0.14, 0.2, 0.34, 0.4, 1.\} & 62.6 $\pm$ 5.79 \\
\cmidrule{2-6}
& \multirow{2}{*}{$\mathrm{QEK}^{\mathrm{(glob)}}_{\boldsymbol{+}}$} & \{0.2, 0.64, 0.7\} & 87.47 $\pm$ 4.9 & \{0.24, 0.34, 1\} & 60.78 $\pm$ 4.72 \\
& & \{0.1, 0.24, 0.64, 0.7, 0.74\} & 88.21 $\pm$ 5.92 & \{0.2, 0.24, 0.34, 0.4, 1.\} & 61.19 $\pm$ 4.89 \\
\cmidrule{2-6}

& \multirow{1}{*}{$\mathrm{GDQC}^{\mathrm{(loc)}}$} & 0.5 & 89.29 $\pm$ 4.89 & 1 & 60.87 $\pm$ 5.49 \\
\cmidrule{2-6}
&  \multirow{2}{*}{$\mathrm{GDQC}^{\mathrm{(loc)}}_{\boldsymbol{\times}}$} & \{0.5, 0.74, 0.94\} & 89.58 $\pm$ 4.81 & \{0.4, 0.84, 1.\} & 66.41 $\pm$ 5.70 \\
&  & \{0.24, 0.5, 0.64, 0.7, 0.94\} & \textbf{\textcolor{Cerulean}{90.09 $\pm$ 5.05}} & \{0.24, 0.34, 0.4, 0.84, 1.\} & 66.90 $\pm$ 4.58 \\
\cmidrule{2-6}
& \multirow{2}{*}{$\mathrm{GDQC}^{\mathrm{(loc)}}_{\boldsymbol{+}}$} & \{0.04, 0.5, 0.64\} & 88.74 $\pm$ 5.69 & \{0.34, 0.84, 1.\} & 64.09 $\pm$ 4.97 \\
& & \{0.3, 0.54, 0.6, 0.74, 0.9\} & 88.81 $\pm$ 5.55 & \{0.3, 0.5, 0.64, 0.84, 1.\} & 65.69 $\pm$ 4.86 \\
\cmidrule{2-6}

& \multirow{1}{*}{$\mathrm{GDQC}^{\mathrm{(glob)}}$} & 0.54 & 88.53 $\pm$ 4.33 & 0.3 & 59.61 $\pm$ 5.57 \\
\cmidrule{2-6}
& \multirow{2}{*}{$\mathrm{GDQC}^{\mathrm{(glob)}}_{\boldsymbol{\times}}$} & \{0.24, 0.4, 0.74\} & 88.95 $\pm$ 5.50 & \{0.04, 0.1, 0.24\} & 61.19 $\pm$ 5.44 \\  
& & \{0.04, 0.1, 0.24, 0.4, 0.54\} & 88.82 $\pm$ 5.07 & \{0.2, 0.6, 0.64, 0.8, 0.84\} & 61.25 $\pm$ 5.61 \\
\cmidrule{2-6}
& \multirow{2}{*}{$\mathrm{GDQC}^{\mathrm{(glob)}}_{\boldsymbol{+}}$} & \{0.24, 0.4, 0.54\} & 88.35 $\pm$ 5.18 & \{0.24, 0.3, 0.84\} & 59.32 $\pm$ 4.84 \\
& & \{0.04, 0.2, 0.24, 0.4, 0.54\} & 88.60 $\pm$ 5.27 & \{0.24, 0.4, 0.6, 0.64, 0.8\} & 59.43 $\pm$ 4.39 \\
\midrule

\multirow{18}{*}{\rotatebox{90}{maximum evolution time: $t_\mathrm{max}=2 \, \mathrm{\mu s}$}} & \multirow{1}{*}{$\mathrm{QEK}^{\mathrm{(loc)}}$} & 2 & 86.19 $\pm$ 5.03 & 0.5 & 64.11 $\pm$ 4.75 \\
\cmidrule{2-6}
& \multirow{2}{*}{$\mathrm{QEK}^{\mathrm{(loc)}}_{\boldsymbol{\times}}$} & \{1.3, 1.4, 1.8\} & 88.92 $\pm$ 4.59 & \{0.1, 1.1, 1.6\} & 68.32 $\pm$ 5.44 \\
& & \{0.1, 1.1, 1.3, 1.4, 2\} & 88.78 $\pm$ 4.89  & \{0.4, 1.1, 1.3, 1.5, 1.6\} & \textbf{\textcolor{BurntOrange}{69.28 $\pm$ 5.28}} \\
\cmidrule{2-6}
& \multirow{2}{*}{$\mathrm{QEK}^{\mathrm{(loc)}}_{\boldsymbol{+}}$} & \{1.4, 1.6, 1.7\} & 88.16 $\pm$ 5.99 & \{0.2, 1.1, 1.5\} & \textbf{\textcolor{Cerulean}{68.65 $\pm$ 5.96}} \\
& & \{0.1, 0.7, 1.4, 1.5, 1.8\} & 89.26 $\pm$ 5.40 & \{1., 1.1, 1.3, 1.5, 1.9\} & \textbf{\textcolor{JungleGreen}{69.55 $\pm$ 4.98}} \\
\cmidrule{2-6}
& \multirow{1}{*}{$\mathrm{QEK}^{\mathrm{(glob)}}$} & 2 & 85.8 $\pm$ 5.81 & 0.44 & 60.62 $\pm$ 5.16 \\
\cmidrule{2-6}
&  \multirow{2}{*}{$\mathrm{QEK}^{\mathrm{(glob)}}_{\boldsymbol{\times}}$} & \{0.3, 0.5, 1.5\} & 88.57 $\pm$ 4.03 & \{1., 1.3, 1.5\} & 62.56 $\pm$ 5.87 \\
& & \{0.2, 0.5, 1.1, 1.4, 1.5\} & 88.96 $\pm$ 4.98 & \{0.3, 1.1, 1.2, 1.3, 1.5\} & 63.35 $\pm$ 4.78 \\
\cmidrule{2-6}
& \multirow{2}{*}{$\mathrm{QEK}^{\mathrm{(glob)}}_{\boldsymbol{+}}$}& \{0.2, 1.5, 2.\} & 87.63 $\pm$ 5.58 & \{0.2, 0.4, 1.9\} & 62.66 $\pm$ 4.92 \\
& & \{0.2, 0.5, 0.7, 1.4, 1.7\} & 89.45 $\pm$ 4.02 & \{0.3, 0.4, 1., 1.8, 1.9\} & 62.22 $\pm$ 5.54 \\
\cmidrule{2-6}

& \multirow{1}{*}{$\mathrm{GDQC}^{\mathrm{(loc)}}$} & 0.5 & 89.29 $\pm$ 4.89 & 1 & 60.87 $\pm$ 5.49 \\
\cmidrule{2-6}
& \multirow{2}{*}{$\mathrm{GDQC}^{\mathrm{(loc)}}_{\boldsymbol{\times}}$} & \{1.3, 1.4, 1.7\} & \textbf{\textcolor{BurntOrange}{90.25 $\pm$ 5.02}} & \{0.04, 0.84, 1.\} & 66.41 $\pm$ 5.70 \\
& & \{0.1, 0.5, 1.3, 1.4, 1.7\} & \textbf{\textcolor{JungleGreen}{90.83 $\pm$ 5.56}} &
\{0.4, 1.1, 1.5, 1.6, 1.7\} & 66.51 $\pm$ 4.82 \\
\cmidrule{2-6}
& \multirow{2}{*}{$\mathrm{GDQC}^{\mathrm{(loc)}}_{\boldsymbol{+}}$} & \{0.5, 1.3, 1.9\} & 89.28 $\pm$ 4.21 & \{0.3, 1., 1.9\} & 65.48 $\pm$ 4.21 \\
& & \{0.5, 0.6, 1.1, 1.3, 1.8\} & 89.35 $\pm$ 4.97 & \{0.1, 0.3, 1., 1.3, 1.6\} & 65.86 $\pm$ 5.26 \\
\cmidrule{2-6}

& \multirow{1}{*}{$\mathrm{GDQC}^{\mathrm{(glob)}}$} & 0.8 & 88.53 $\pm$ 4.33 & 0.3 & 59.61 $\pm$ 5.57 \\
\cmidrule{2-6}
& \multirow{2}{*}{$\mathrm{GDQC}^{\mathrm{(glob)}}_{\boldsymbol{\times}}$} & \{0.8, 1.2, 1.3\} & 88.65 $\pm$ 5.56 & \{0.6, 1.2, 1.7\} & 60.4 $\pm$ 4.85 \\
& & \{0.5, 1.3, 1.4, 1.7, 1.9\} & 89.10 $\pm$ 5.55 & \{0.3, 0.6, 1.1, 1.4, 1.7\} & 61.18 $\pm$ 5.66 \\
\cmidrule{2-6}
& \multirow{2}{*}{$\mathrm{GDQC}^{\mathrm{(glob)}}_{\boldsymbol{+}}$} & \{0.8, 1.2, 1.3\} & 88.84 $\pm$ 5.42 & \{0.1, 1.4, 1.5\} & 59.41 $\pm$ 5.23 \\
& & \{0.8, 0.9, 1.2, 1.3, 1.9\} & 88.84 $\pm$ 5.48 & \{0.3, 0.6, 1., 1.1, 1.7\} & 59.83 $\pm$ 6.02 \\
\midrule
\end{tabular}
\label{tab:results}
\end{table*}

\subsection{Binning-size dependence, graph distinguishability and limitations \label{subsec:binning_analysis}}

In this part, we study the impact of the correlation-binning resolution $\varDelta_C = 1/N_{\mathrm{bins}}^C$ applied to the correlation matrix $C^{\mathcal{G}}_{i,j}(t)$ within the $\mathrm{GDQC}$ framework (cf. Subsec.~\ref{subsec:algo}).

To this end, we vary the number of correlation bins $N_{\mathrm{bins}}^{C}$ and track two complementary quantities. 
First, we report the weighted F1 score, which captures the predictive performance of the corresponding $\mathrm{GDQC}$ model. 
Second, we compute the rank of the associated Gram matrix $K \in \mathbb{R}^{N_{\mathrm{data}}\times N_{\mathrm{data}}}$ with entries
\begin{align}
K_{\mu,\mu'} &= \kappa_{\mathrm{GDQC}}\!\left(\mathcal{G}_{\mu},\mathcal{G}_{\mu'}\right),
& \mu,\mu' &\in \{1,\dots,N_{\mathrm{data}}\},
\label{eq:Gram_matrix}
\end{align}
where $N_{\mathrm{data}}$ denotes the dataset size, and the $\mathrm{GDQC}$ kernel $\kappa_{\mathrm{GDQC}}$ is defined in Eq.~(\ref{eq:GDQC_t}). 
We estimate $\mathrm{rank}(K)$ by counting the number of singular values above a numerical threshold~\cite{NumRec,numpy_matrix_rank_v2_4}. The normalized quantity $\mathrm{rank}(K)/N_{\mathrm{data}} \in [0,1]$ serves as a proxy for the effective dimensionality of the kernel representation on the sampled dataset, and hence for the degree to which the kernel induces non-degenerate (linearly independent) feature representations.
Figure~\ref{fig:rank_vs_perf} reports both the relative rank $\mathrm{rank}\left( K \right)/N_\mathrm{data}$ and the weighted F1 score for varying $N_{\mathrm{bins}}^{C}$, across the time evolution up to $t_\mathrm{max}=1\,\mathrm{\mu s}$.

We observe that increasing the number of correlation bins $N_{\mathrm{bins}}^{C}$ increases the rank of the $\mathrm{GDQC}$ Gram matrix, i.e., the effective dimensionality of $\kappa_{\mathrm{GDQC}}$ in Eq.~\eqref{eq:GDQC_t}. This suggests that, on the present datasets, finer binning yields a less degenerate similarity structure and allows the kernel to separate a larger number of nearby graphs. Beyond a certain $N_{\mathrm{bins}}^{C}$, the relative rank $\mathrm{rank}\left( K \right)/N_\mathrm{data}$ saturates. 
The saturation level is close to $1$ for the local-detuning variant $\mathrm{GDQC}^{(\mathrm{loc})}$ (panels (c),(g)), whereas it remains substantially smaller for the global-detuning variant $\mathrm{GDQC}^{(\mathrm{glob})}$ (panels (a),(e)), consistent with local detuning inducing a richer empirical similarity structure than global detuning on these datasets.

To interpret these rank values, we also compute an idealized reference rank from the number and cardinalities of graph isomorphism classes (shown as the gray dotted line in Fig.~\ref{fig:rank_vs_perf}), where isomorphism classes are determined using the VF2 algorithm~\cite{cordella_2004}. 
This reference depends on whether node attributes are included in the isomorphism test: ignoring attributes yields fewer distinct classes and hence a smaller reference rank than the attribute-aware case, which is expected since some molecules differ only by atom and bond types.
Notably, for a large number of correlation bins (fine binning), the empirical rank of the $\mathrm{GDQC}$ Gram matrix can exceed this combinatorial reference value. This reflects a limitation of our current pipeline. Indeed, the embedding computation involves stochasticity through the random initialization of atomic positions. As a result, repeated embeddings of the same graph do not necessarily converge to identical positions, which can lead to slightly different quantum features. This effect is present for both $\mathrm{GDQC}$ and $\mathrm{QEK}$, but $\mathrm{GDQC}$ is more sensitive due to the binning step: under fine binning, small variations in correlation values can be amplified.
For moderate binning, i.e., $N_{\mathrm{bins}}^{C} \lesssim 100$, the induced RKHS representations of two isomorphic graphs remain very close. For normalized kernels, this can be quantified via
\begin{align}
\left\|\boldsymbol{\chi}(\mathcal{G}_{\mu})-\boldsymbol{\chi}(\mathcal{G}_{\mu^{\prime}})\right\|^{2}
&= 2 - 2K_{\mu, \mu^{\prime}} \ll 1.
\label{eq:feature_vector_proximity}
\end{align}
This suggests that the perturbation induced by stochastic embeddings is limited in that regime. Quantifying its impact on downstream SVM performance would require an explicit robustness analysis across random initializations, which we leave for future work.

For completeness, we perform the same Gram-rank analysis for the $\mathrm{QEK}$ kernel $\kappa_{\mathrm{QEK}}$ (\ref{eq:QEK_kernel}) and include the corresponding results in Fig.~\ref{fig:rank_vs_perf} (black dashed line with $\times$ markers). 
Recall that $\mathrm{QEK}$ has no binning dependence. 
Interestingly, the rank of $\kappa_{\mathrm{QEK}}$ is comparable to that of $\kappa_{\mathrm{GDQC}}$ at only $N_{\mathrm{bins}}^{C}=10$. 
In the present experimental setting, $\mathrm{GDQC}$ therefore offers a broader tunable range of empirical graph distinguishability through the binning parameter. However, this increased distinguishability does not yield a clear improvement in weighted F1 score in our experiments (cf. Fig.~\ref{fig:pointwise}).

This finding highlights that \emph{graph distinguishability} is a representational property of the induced feature space (as reflected by Gram-matrix characteristics such as rank, cf. Fig.~\ref{fig:rank_vs_perf}), but it is not sufficient to predict downstream accuracy.
Predictive performance also depends on how well kernel-induced similarities align with the labels under the chosen learning procedure, including model selection, regularization, and finite-sample effects, as well as (for quantum features) discretization and finite-shot noise.
Hence, increased distinguishability may help, but does not by itself guarantee improved generalization. 
In particular, two detrimental extremes can arise: if a kernel is overly discriminative on a dataset with no isomorphic pairs, the Gram matrix approaches the identity and provides little usable similarity structure. Conversely, if a kernel assigns nearly identical similarity to all graph pairs, the Gram matrix approaches the all-ones matrix, which is likewise uninformative.

\section{Conclusion and Perspectives\label{sec:conclusion}}

In this work, we investigate different quantum-feature kernel implementations~\eqref{eq:general_quantum_evolution_kernel} and evaluate their performance on two benchmark datasets using support vector machines. 
First, we introduce a graph-embedding scheme that encodes edge information into a neutral-atom Rydberg Hamiltonian~\eqref{eq:Rydberg_Ham}, and additionally incorporates node features (atomic masses) via an appropriate choice of local detuning fields (cf. Subsec.~\ref{subsec:Graph_Embedding}). 
This allows us to compare node-feature embedding (time evolution under $\hat{\mathcal{H}}_{\mathrm{loc}}^{\mathcal{G}}$, Eq.~\eqref{eq:H_local}) with a non-attributed construction (time evolution under $\hat{\mathcal{H}}_{\mathrm{glob}}^{\mathcal{G}}$, Eq.~\eqref{eq:H_global}). 
While node-feature embedding increases kernel expressiveness both theoretically (Prop.~\ref{prop:prop1}) and empirically (Fig.~\ref{fig:rank_vs_perf}), we observe comparable SVM prediction performance in the two settings (Fig.~\ref{fig:pointwise}), with a slight advantage for the local addressability approach.

Second, we compare two conceptually different quantum-feature kernels: 
the quantum evolution kernel $\kappa_{\mathrm{QEK}}$ (Eq.~\eqref{eq:QEK_kernel})~\cite{Henry_2021,Albrecht2023}, which is based on the global excitation-probability observable $P_k^{\mathcal{G}}(t)$ (Eq.~\eqref{eq:observables}), 
and the proposed generalized-distance quantum-correlation kernel $\kappa_{\mathrm{GDQC}}$ (Eq.~\eqref{eq:GDQC_t}), which is based on the local correlation matrix $C_{i,j}^{\mathcal{G}}(t)$ (Eq.~\eqref{eq:observables}). 
While these kernels can, in principle, behave differently depending on the target task, for the global classification tasks considered here they achieve comparable performance and are competitive with strong classical baselines (Tab.~\ref{tab:core_table}).

Third, we improve performance by pooling information from observables sampled at multiple times across the evolution (cf. Subsec.~\ref{subsec:pooling}). 
Pooling can significantly enhance model performance, enabling the quantum-feature kernels to surpass the classical baselines (Tabs.~\ref{tab:core_table}-\ref{tab:results}). 
Overall, the proposed implementations broaden the scope of existing approaches in quantum graph machine learning.

Our quantum kernels do not provide an asymptotic speedup over the classical baselines considered here: $\mathrm{QEK}$ scales as $\mathcal{O}(N_{\mathrm{shots}}\,N)$, while $\mathrm{GDQC}$ requires $\mathcal{O}(N_{\mathrm{shots}}\,N^2)$. 
By comparison, the best-performing classical baseline in our experiments is the WL optimal assignment kernel, with complexity $\mathcal{O}(h\,M + N)$. 
Moreover, the cost of analog quantum evolution is difficult to quantify in standard algorithmic terms, since complexity notions are inherently tied to digital computation. 
In this work, the value proposition therefore lies in empirical performance (competitive weighted F1 scores on both benchmarks) rather than an asymptotic runtime advantage.

Finally, hyper-parameter sensitivity remains an important open question. 
In particular, we find it difficult to draw definitive conclusions about global performance trends as a function of parameters such as the local detuning scale $\delta_0$ (Eq.~\eqref{eq:local_detuning}), the Rabi frequency $\Omega$ (Eq.~\eqref{eq:Rydberg_Ham}), the $\mathrm{GDQC}$ binning resolution (Alg.~\ref{algo:GDQC}), and the choice of sampling times (with or without pooling).

\begin{acknowledgments}
The authors gratefully acknowledge Louis-Paul Henry, Vittorio Vitale, and Francesco Musso for the many fruitful discussions that have significantly contributed to the development of this work.
We also thank Anton Quelle, Kemal Bidzhiev, Mauro Mendizabal, and Elie Merhej for their valuable contributions to the development of the emulation tools, which were essential for the implementation of the algorithms presented in this work.
\end{acknowledgments}

\appendix

\section{Time-dependent perturbation theory\label{app:perturbation theory}}

In this appendix, we revisit basics on the time-dependent perturbation theory, and we derive the leading observable contributions which are presented in Subsec.~\ref{subsec:perturbation_theory}.

We begin with a generic system Hamiltonian 
\begin{align}
\hat{\mathcal{H}}\left(t\right) & =\hat{H}_{0}+\hat{V}\left(t\right),\label{eq:system_Hamiltonian}
\end{align}
decomposed into an unperturbed part $\hat{H}_{0}$ (assumed to be
time independent, for simplicity), and a time-dependent perturbation
part $\hat{V}\left(t\right)$ which is considered to be small. 
In this part we distinguish between the Schr\"odinger and the interaction
picture, which we label via subscripts $\mathrm{S}$ and $\mathrm{I}$,
respectively. 
In the Schr\"odinger picture, the time-evolved state is given by 
\begin{align}
\left|\psi_{\mathrm{S}}\left(t\right)\right\rangle  & =\hat{U}_{\mathrm{S}}\left(t,t_{0}\right)\left|\psi_{\mathrm{S}}\left(t_{0}\right)\right\rangle ,\label{eq:Schroedinger_evolution}
\end{align}
with the Schr\"odinger time evolution operator 
\begin{align}
\hat{U}_{\mathrm{S}}\left(t,t_{0}\right) & =\hat{\mathcal{T}}_{t}\,\exp\left[-\frac{\mathsf{i}}{\hbar}\int_{t_{0}}^{t}dt^{\prime}\;\hat{\mathcal{H}}\left(t^{\prime}\right)\right]\label{eq:schroedinger_evolution_operator}\\
 & =e^{-\frac{\mathsf{i}}{\hbar}\hat{H}_{0}t}\hat{U}_{\mathrm{I}}\left(t,t_{0}\right)e^{\frac{\mathsf{i}}{\hbar}\hat{H}_{0}t_{0}},\label{eq:U_S_U_i_relation}
\end{align}
whose relation to the interaction-picture time evolution operator
$\hat{U}_{\mathrm{I}}\left(t,t_{0}\right)$ is shown in Eq.~(\ref{eq:U_S_U_i_relation}).
The Schr\"odinger equation in the interaction picture becomes 
\begin{align}
\mathsf{i}\hbar\frac{\partial}{\partial t}\hat{U}_{\mathrm{I}}\left(t,t_{0}\right) & =\hat{V}_{\mathrm{I}}\left(t\right)\hat{U}_{\mathrm{I}}\left(t,t_{0}\right),\label{eq:U_I_SEQ}
\end{align}
with 
\begin{align}
\hat{V}_{\mathrm{I}}\left(t\right) & =e^{\frac{\mathsf{i}}{\hbar}\hat{H}_{0}t}\hat{V}\left(t\right)e^{-\frac{\mathsf{i}}{\hbar}\hat{H}_{0}t},\label{eq:V_I}
\end{align}
and the formal solution 
\begin{align}
\hat{U}_{\mathrm{I}}\left(t,t_{0}\right) & =\hat{\mathcal{T}}_{t}\;\exp\left[-\frac{\mathsf{i}}{\hbar}\int_{t_{0}}^{t}dt^{\prime}\,\hat{V}_{\mathrm{I}}\left(t^{\prime}\right)\right],\label{eq:evolution_op_dyson_eq}
\end{align}
where $\hat{\mathcal{T}}_{t}$ is the time ordering operator. 
At this point we assume that the exponent in (\ref{eq:evolution_op_dyson_eq}) is effectively small
such that the expansion
\begin{align}
\hat{U}_{\mathrm{I}}\left(t,t_{0}\right) & =\sum_{l=0}^{\infty}\hat{U}_{\mathrm{I}}^{\left(l\right)}\left(t,t_{0}\right),\label{eq:Neumann_series}
\end{align}
in terms of a Neumann series is justified.
The $l$-th series coefficient is given by
\begin{align}
\hat{U}_{\mathrm{I}}^{\left(l\right)}\left(t,t_{0}\right) & =\left(-\frac{\mathsf{i}}{\hbar}\right)^{l}\int_{t_{0}}^{t}dt_{1}\int_{t_{0}}^{t_{1}}dt_{2}\cdots\int_{t_{0}}^{t_{l-1}}dt_{l}\,\times\nonumber \\
 & \qquad\hat{V}_{\mathrm{I}}\left(t_{1}\right)\hat{V}_{\mathrm{I}}\left(t_{2}\right)\cdots\hat{V}_{\mathrm{I}}\left(t_{l}\right),\label{eq:series_coeffs_Un}
\end{align}
where $\hat{U}_{\mathrm{I}}^{\left(0\right)}\left(t,t_{0}\right)=\mathbbm{1}$.
For practical purposes, it is often convenient to exploit the inductive relation
\begin{align}
\hat{U}_{\mathrm{I}}^{\left(l\right)}\left(t,t_{0}\right) & =-\frac{\mathsf{i}}{\hbar}\int_{t_{0}}^{t}dt^{\prime}\;\hat{V}_{\mathrm{I}}\left(t^{\prime}\right)\;\hat{U}_{\mathrm{I}}^{\left(l-1\right)}\left(t^{\prime},t_{0}\right),\label{eq:U_k_induction}
\end{align}
for $l\geq1$. 
Switching back to the Schr\"odinger picture via relation (\ref{eq:U_S_U_i_relation})
we can finally expand the time-evolved wave function (\ref{eq:Schroedinger_evolution})
according to
\begin{align}
\left|\psi_{\mathrm{S}}\left(t\right)\right\rangle  & =\sum_{l=0}^{\infty}\left|\psi_{\mathrm{S}}^{\left(l\right)}\left(t\right)\right\rangle ,\label{eq:wave_function_series}
\end{align}
where the $l$-th order wave-function contribution is given by
\begin{align}
\left|\psi_{\mathrm{S}}^{\left(l\right)}\left(t\right)\right\rangle  & =e^{-\frac{\mathsf{i}}{\hbar}\hat{H}_{0}t}\hat{U}_{\mathrm{I}}^{\left(l\right)}\left(t,t_{0}\right)e^{\frac{\mathsf{i}}{\hbar}\hat{H}_{0}t_{0}}\left|\psi_{\mathrm{S}}\left(t_{0}\right)\right\rangle .\label{eq:wave_function_coeffs}
\end{align}
In practice, the computation can be facilitated using the induction
rule (\ref{eq:U_k_induction}) such that 
\begin{align}
\left|\psi_{\mathrm{S}}^{\left(l\right)}\left(t\right)\right\rangle  & =-\frac{\mathsf{i}}{\hbar}\int_{t_{0}}^{t}dt^{\prime}\;e^{-\frac{\mathsf{i}}{\hbar}\hat{H}_{0}\left(t-t^{\prime}\right)}\hat{V}\left(t^{\prime}\right)\,\left|\psi_{\mathrm{S}}^{\left(l-1\right)}\left(t^{\prime}\right)\right\rangle ,\label{eq:psi_S_induction}
\end{align}
for $l\geq1$ and $|\psi_{\mathrm{S}}^{\left(0\right)}\left(t\right)\rangle =\exp\left[-\frac{\mathsf{i}}{\hbar}\hat{H}_{0}\left(t-t_{0}\right)\right]\left|\psi_{\mathrm{S}}\left(t_{0}\right)\right\rangle $.
Now, wave-function contributions and eventually, observables can be
evaluated order by order. Typically, the evaluation of (\ref{eq:psi_S_induction})
is conducted using the eigenbasis of the unperturbed Hamiltonian $\hat{H}_{0}$.

Starting from the Rydberg Hamiltonian (\ref{eq:Rydberg_Ham}) we consider the transverse
part as the perturbation, i.e. 
\begin{align}
\hat{H}_{0} & =\frac{1}{2} \sum\nolimits_{i,j}^{\prime} J_{ij}\hat{n}_{i}\hat{n}_{j}-\hbar\sum_{i=1}^{N}\delta_{i}\hat{n}_{i}, & \hat{V} & =\frac{\hbar\Omega}{2}\sum_{i=1}^{N}\hat{\sigma}_{i}^{x}.\label{eq:H0-1}
\end{align}
Note that $\hat{V}$ is time independent in this case. 
The goal is to derive the leading contributions in powers of $\left(\Omega t\right)$ for the
observables of interest (\ref{eq:observables}), the excitation probability $P_{k}\left(t\right)$,
the site occupation $n_{i}\left(t\right)$, and the correlation matrix
$C_{i,j}\left(t\right)$. 
Recall that our initial state at time $t_0=0$ is the all-zero
state, i.e. $\left|\psi\left(t_{0}=0\right)\right\rangle =\left|\boldsymbol{0}\right\rangle $.
By construction, the $l$-th order wave-function
term $|\psi_{\mathrm{S}}^{\left(l\right)}\left(t\right)\rangle $
scales as $\mathcal{O}\left((\Omega t)^{l}\right)$, see Eq.~(\ref{eq:psi_S_induction}). 
Furthermore, because the perturbation $\hat{V}$ (\ref{eq:H0-1}) can maximally cause one excitation per application, the contribution $|\psi_{\mathrm{S}}^{\left(l\right)}\left(t\right)\rangle $
can maximally contain $l$ excitations.
For later convenience, we explicitly evaluate the first three wave-function contributions with $l \in \left\{0,1,2 \right\}$. 
Using the relation (\ref{eq:psi_S_induction}), one finds $|\psi_{\mathrm{S}}^{\left(0\right)}\left(t\right)\rangle =\left|\boldsymbol{0}\right\rangle $,
and
\begin{align}
\left|\psi_{\mathrm{S}}^{\left(1\right)}\left(t\right)\right\rangle  & =\frac{\Omega}{2}\sum_{i=1}^{N}\frac{1-e^{\mathsf{i}\delta_{i}t}}{\delta_{i}}\,\hat{\sigma}_{i}^{x}\left|\boldsymbol{0}\right\rangle ,\label{eq:psi_S_1}\\
\left|\psi_{\mathrm{S}}^{\left(2\right)}\left(t\right)\right\rangle  & =a_{0}^{\left(2\right)}\,\left|\boldsymbol{0}\right\rangle + \sum\nolimits_{i,j}^{\prime} a_{ij}^{\left(2\right)}\,\hat{\sigma}_{j}^{x}\,\hat{\sigma}_{i}^{x}\,\left|\boldsymbol{0}\right\rangle ,\label{eq:psi_S_2}
\end{align}
where we defined the second-order amplitudes 
\begin{align}
a_{0}^{\left(2\right)} & =\frac{\Omega^{2}}{4}\sum_{i=1}^{N}\,\frac{e^{\mathsf{i}\delta_{i}t}-1-\mathsf{i}\delta_{i}t}{\delta_{i}^{2}},\label{eq:a_(2)_0}\\
a_{ij}^{\left(2\right)} & =-\frac{\Omega^{2}}{4\delta_{i}}\left[\frac{1-e^{-\mathsf{i}t\omega_{ij}^{\left(2\right)}}}{\omega_{ij}^{\left(2\right)}}-\frac{e^{\mathsf{i}\delta_{i}t}-e^{-\mathsf{i}\omega_{ij}^{\left(2\right)}t}}{\omega_{ij}^{\left(2\right)}+\delta_{i}}\right],\label{eq:a_(2)_ij}
\end{align}
with $\omega_{ij}^{\left(2\right)}=(J_{ij}/\hbar) -\delta_{i}-\delta_{j}$.

Next, we evaluate the observables, starting with the excitation
probability operator $\hat{P}_{k}=\sum_{\boldsymbol{b}}\delta_{k,\sum_{i}b_{i}}\left|\boldsymbol{b}\right\rangle \left\langle \boldsymbol{b}\right|$,
cf. Eq.~(\ref{eq:observables}). 
Since the index $k$ effectively counts the number of excitations, it is clear that the first non-vanishing projection occurs for the wave-function contribution $|\psi_{\mathrm{S}}^{\left(l=k\right)}\left(t\right)\rangle $, leading to
\begin{align}
P_{k}\left(t\right) & =\sum_{l,l^{\prime}=k}^{\infty}\Big\langle \psi_{\mathrm{S}}^{(l^{\prime})}\left(t\right)\Big|\hat{P}_{k}\Big|\psi_{\mathrm{S}}^{\left(l\right)}\left(t\right)\Big\rangle =\mathcal{O}\left(\left(\Omega t\right)^{2k}\right).\label{eq:P_k_orders}
\end{align}
In this work, we only include contributions up to order $\mathcal{O}\left(\left(\Omega t\right)^{2}\right)$.
Hence, the two non-negligible terms are $P_{0}\left(t\right)$ and
$P_{1}\left(t\right)$ which can be unpacked as 
\begin{align}
P_{0}\left(t\right) & =1+2\Re\left[\left\langle \boldsymbol{0}\big|\psi_{\mathrm{S}}^{\left(2\right)}\left(t\right)\right\rangle \right]+\mathcal{O}\left(\left(\Omega t\right)^{4}\right),\label{eq:P_0_app_fin}\\
P_{1}\left(t\right) & =\sum_{i=1}^{N}\left|\left\langle \boldsymbol{0}\,\big|\,\hat{\sigma}_{i}^{x}\,\big|\,\psi_{\mathrm{S}}^{\left(1\right)}\left(t\right)\right\rangle \right|^{2}+\mathcal{O}\left(\left(\Omega t\right)^{4}\right).\label{eq:P_1_app_fin}
\end{align}
Here, we have exploited that $|\psi_{\mathrm{S}}^{\left(l\right)}\left(t\right)\rangle $
contains an even (odd) number of excitations for even (odd) order $l$. 
After insertion of Eqs.~(\ref{eq:psi_S_1})-(\ref{eq:psi_S_2}),
the expressions (\ref{eq:P_0_app_fin})-(\ref{eq:P_1_app_fin}) simplify
to the results (\ref{eq:pert_obs_P0})-(\ref{eq:pert_obs_P1}) shown in the main text. 
Next, we consider the site occupation operator $\hat{n}_{i}$ whose action on a state
$\left|0_{i}\right\rangle $ is $\hat{n}_{i}\left|0_{i}\right\rangle =0$.
Correspondingly, the order $l=0$ does not contribute to $n_{i}\left(t\right)$
and one finds 
\begin{align}
n_{i}\left(t\right) & = \sum_{l,l^{\prime}=1}^{\infty}\Big\langle \psi_{\mathrm{S}}^{(l^{\prime})}\left(t\right)\Big|\hat{n}_{i}\Big|\psi_{\mathrm{S}}^{\left(l\right)}\left(t\right)\Big\rangle
=\mathcal{O}\left(\left(\Omega t\right)^{2}\right)\nonumber \\
 & =\left\langle \psi_{\mathrm{S}}^{\left(1\right)}\left(t\right)\right|\hat{n}_{i}\left|\psi_{\mathrm{S}}^{\left(1\right)}\left(t\right)\right\rangle +\mathcal{O}\left(\left(\Omega t\right)^{4}\right),\label{eq:n_i_app_fin}
\end{align}
with the evaluated expression shown in Eq.~(\ref{eq:pert_obs_ni}). 
Lastly, we compute the correlation matrix $C_{i,j}\left(t\right)$. 
Surely, its diagonal elements $C_{i,i}\left(t\right)=n_{i}\left(t\right)$ are described by Eq.~(\ref{eq:n_i_app_fin}). 
Since $\hat{n}_{i}\hat{n}_{j}\,|\psi_{\mathrm{S}}^{\left(l=\left\{ 0,1\right\} \right)}\left(t\right)\rangle =0$,
for the off-diagonal elements one obtains
\begin{align}
C_{i,j\neq i}\left(t\right) & = \sum_{l,l^{\prime}=2}^{\infty}\Big\langle \psi_{\mathrm{S}}^{(l^{\prime})}\left(t\right)\Big|\hat{n}_{i}\hat{n}_{j}\Big|\psi_{\mathrm{S}}^{\left(l\right)}\left(t\right)\Big\rangle
=\mathcal{O}\left(\left(\Omega t\right)^{4}\right),\label{eq:C_ij_app_fin}
\end{align}
i.e. for the employed expansion order, the off-diagonal elements $C_{i,j}\left(t\right)$ can be neglected.

\section{Classical graph kernels\label{app:classical_graph_kernel}}

To deal with the classical kernel methods introduced in Subsec.~\ref{subsec:classical_kernels}, we employ external implementations for standard graph kernels, namely the Python library GraKeL. 
Here, we report the parameter and performance choices that are made to produce the benchmarks listed in table~\ref{tab:core_table}.

For all kernels, we choose to normalize the kernel matrix. This improves numerical stability and makes the subsequent optimization less sensitive to an arbitrary scaling of the kernel. In GraKeL, this corresponds to the \textit{normalize} option, which applies the following transformation:
\begin{align}
 K_{ij}^{\text{norm}} = K_{ij} \big/ \sqrt{K_{ii}K_{jj}}, 
\label{eq:normalization_classical_kernel}
\end{align}
where in principle, it should hold $K_{ii}>0$.
In practice, there can be special cases where $K_{ii}=0$ or $K_{ii}<0$ due to anomalies in the dataset or to numerical instabilities.
In such a case, we regularize the kernel matrix before normalizing via $K\rightarrow K+\epsilon \mathbbm{1}$ with $\epsilon \ll1$.
Such a case occurs for the \textit{Graphlet Subsampling} kernel on the PTC\_FM dataset where we choose $\epsilon = 10^{-3}$ for regularization.

For the Random Walk kernel, we use the \textit{exponential} variant rather than the \textit{geometric} one. 
The geometric formulation relies on the inversion of a matrix $(\mathbbm{1} - \lambda A_{\times})$, whose numerical stability is highly sensitive to the hyperparameter $\lambda$.
For this reason, we employ the exponential formulation which, although computationally more expensive (complexity $\mathcal{O}(N^6)$ versus $\mathcal{O}(N^3)$ for the geometric variant), is substantially more stable in practice.

Unless stated otherwise, all hyperparameters are left to their default values. 
We only report non-default settings below.
\begin{itemize}
  \item \textbf{Random Walk} (\texttt{kernel\_type=exponential}):
    \begin{itemize}
      \item MUTAG: lamda=5
      \item PTC\_FM$^{*}$: lamda=10.
    \end{itemize}

  \item \textbf{Graphlet Sampling} (\texttt{normalize=False}):
    \begin{itemize}
      \item MUTAG: $k=9$
      \item PTC\_FM$^{*}$: $k=7$.
    \end{itemize}
    \item \textbf{ShortestPath} (\texttt{algorithm\_type=floyd\_warshall}).
\end{itemize}

\section{Improved graph discrimination using local detuning\label{app:proof_prop1}}

In this Appendix we demonstrate the claim of proposition~\ref{prop:prop1} in the
main text: 
\textit{A quantum evolution kernel} $\kappa^{\mathrm{loc}}\left(\mathcal{G}_{\mu},\mathcal{G}_{\mu^{\prime}}\right)$\textit{
derived from a local-detuning Hamiltonian} $\hat{\mathcal{H}}_{\mathrm{loc}}^{\mathcal{G}}$ (\ref{eq:H_local}) \textit{is genuinely more expressive than its counterpart} $\kappa^{\mathrm{glob}}\left(\mathcal{G}_{\mu},\mathcal{G}_{\mu^{\prime}}\right)$\textit{
derived from the global-detuning Hamiltonian} $\hat{\mathcal{H}}_{\mathrm{glob}}^{\mathcal{G}}$ (\ref{eq:H_global})  \textit{for dinstinguishing attributed graphs.}

This proposition is demonstrated in two steps. (i) We show that there
exist non-isomorphic graphs which can be distinguished by the local
kernel, \textit{i.e.} $\kappa_{\mathrm{loc}}\left(\mathcal{G}_{\mu},\mathcal{G}_{\mu^{\prime}}\right)\neq0$,
but not by the global counterpart, \textit{i.e.} $\kappa_{\mathrm{glob}}\left(\mathcal{G}_{\mu},\mathcal{G}_{\mu^{\prime}}\right)=0$.
(ii) We show that the converse is always given in general, \textit{i.e.} if
the global kernel is able to distinguish two non-isomorphic graphs
{[}$\kappa_{\mathrm{glob}}\left(\mathcal{G}_{\mu},\mathcal{G}_{\mu^{\prime}}\right)\neq0${]},
then the local kernel is able to do the same {[}$\kappa_{\mathrm{loc}}\left(\mathcal{G}_{\mu},\mathcal{G}_{\mu^{\prime}}\right)\neq0${]}.
Here, the term \textit{in general} implies that accidental choices
of specific detuning values causing the local kernel to vanish can
never be ruled out. We argue however, that such accidental zeros form
a lower-dimensional manifold in parameter space and are thus non-generic.
In practical terms, it implies that a given (non-fine-tuned) choice
of local detuning values will almost surely avoid the accidental manifold.

(i) Here, it is sufficient to present one pair of non-isomorphic graphs
which satisfy the requirement, \textit{i.e.} the local kernel can distinguish
them but the global one can not. This situation arises for any two
graphs which have identical edges (and node positions) but different
node features. Since the local kernel encodes node features it would
be able to distinguish such graphs whereas for the global kernel the
two graphs are indistinguishable. In the context of the studied molecules
the node features that were associated with the local detuning field
were the atomic masses, \textit{cf.} Eq. (\ref{eq:local_detuning}). One example
could be two molecules with identical (or nearly identical) atomic
positions, but consisting of different atoms, such as nitrogen monoxide
$\mathrm{NO}$ {[}$a_{\mathrm{NO}}=115\,\mathrm{pm}${]} and carbon
monoxide $\mathrm{CO}$ {[}$a_{\mathrm{CO}}=113\,\mathrm{pm}${]}.
The bond distance is negligible but the atomic masses, $m_{\mathrm{N}}=14.01\,\mathrm{u}$
and $m_{\mathrm{C}}=12.01\,\mathrm{u}$, allow for a robust distinction.

(ii) In order to demonstrate the second point, we first recall some
notation. We start from the general quantum evolution kernel Eq. (\ref{eq:general_quantum_evolution_kernel}),
\begin{align}
K_{\mu\mu^{\prime}} & =F_{\kappa}\left[\boldsymbol{O}^{\mathcal{G}_{\mu}}\left(\mathbb{J}^{\mathcal{G}_{\mu}},\Omega,\boldsymbol{\delta}\right),\boldsymbol{O}^{\mathcal{G}_{\mu^{\prime}}}\left(\mathbb{J}^{\mathcal{G}_{\mu^{\prime}}},\Omega,\boldsymbol{\delta}^{\prime}\right)\right].\label{eq:generic_QK_appendix}
\end{align}
For convenience, we define the observable vectors $\boldsymbol{O}^{\mathcal{G}_{\mu}}\left(\mathbb{J}^{\mathcal{G}_{\mu}},\Omega,\boldsymbol{\delta}\right)$
as functions of Hamiltonian parameters (\ref{eq:Rydberg_Ham}) depending
on which Hamiltonian they were subjected to, \textit{cf.} Eqs. (\ref{eq:H_global}),
and (\ref{eq:H_local}). This allows us to define the function
\begin{align}
F\left(\boldsymbol{\delta}^{\mathrm{tot}}\right) & =F_{\kappa}\left[\boldsymbol{O}^{\mathcal{G}_{\mu}}\left(\mathbb{J}^{\mathcal{G}_{\mu}},\Omega,\boldsymbol{\delta}\right),\boldsymbol{O}^{\mathcal{G}_{\mu^{\prime}}}\left(\mathbb{J}^{\mathcal{G}_{\mu^{\prime}}},\Omega,\boldsymbol{\delta}^{\prime}\right)\right],\label{eq:F_func_delta_tot}
\end{align}
which exclusively depends on the local detuning with the $d_{\mathrm{max}}^{\mathrm{tot}}$-dimensional
vector $\boldsymbol{\delta}^{\mathrm{tot}}=\left(\boldsymbol{\delta},\boldsymbol{\delta}^{\prime}\right)$
and $d_{\mathrm{max}}^{\mathrm{tot}}=\left|\mathcal{V}\right|+\left|\mathcal{V}^{\prime}\right|$.
The function $F\left(\boldsymbol{\delta}^{\mathrm{tot}}\right)$ is
defined on the domain $\Gamma:=\left\{ \boldsymbol{\delta}^{\mathrm{tot}}\in\mathbb{R}^{d_{\mathrm{max}}^{\mathrm{tot}}}\right\} $.
Similarly as in the global-detuning Hamiltonian (\ref{eq:H_global}),
we define the total global detuning vector as $\boldsymbol{\delta}_{\mathrm{glob}}^{\mathrm{tot}}=\left(\delta_{0},\dots,\delta_{0}\right)$.
Now, we can recast the initial assumption wherein the global kernel
is able to distinguish the two non-isomorphic graphs as 
\begin{align}
F\left(\boldsymbol{\delta}_{\mathrm{glob}}^{\mathrm{tot}}\right) & \neq0.\label{eq:F_delta_glob_tot-1-1}
\end{align}

At this state we employ the identity theorem. Let us assume that $F_{K}$
is a real-analytic function and the same is the case for any observable
$\boldsymbol{O}^{\mathcal{G}_{\mu}}\left(\mathbb{J}^{\mathcal{G}_{\mu}},\Omega,\boldsymbol{\delta}\right)$;
the latter is discussed at the end of this Appendix. Given these assumptions,
it follows that the function $F\left(\boldsymbol{\delta}^{\mathrm{tot}}\right)$
{[}Eq. (\ref{eq:F_func_delta_tot}){]} is also real-analytic on the
entire domain $\Gamma$, and thus, the identity theorem is applicable.
We use a proof by contradiction. Let us assume that there exists an
open set on the domain $\Gamma$ (\textit{i.e.} a manifold with full dimension
$d_{\mathrm{max}}^{\mathrm{tot}}$) on which $F\left(\boldsymbol{\delta}^{\mathrm{tot}}\right)=0$.
The identity theorem would then require that $F\left(\boldsymbol{\delta}^{\mathrm{tot}}\right)=0$
on the entire domain $\Gamma$. This, however, contradicts Eq. (\ref{eq:F_delta_glob_tot-1-1})
since $\boldsymbol{\delta}_{\mathrm{glob}}^{\mathrm{tot}}$ lies within
the domain. As a result, no such open sets can exist on the domain
$\Gamma$. Manifolds on which the function $F\left(\boldsymbol{\delta}^{\mathrm{tot}}\right)$
vanishes are at most of dimension $d_{\mathrm{max}}^{\mathrm{tot}}-1$.
This circumstance makes it very unlikely that a generic choice of
detuning parameters fall onto such a manifold.

Lastly, we discuss the analyticity of observables $O\left(t,\boldsymbol{\lambda}\right)=\left\langle \psi\left(t,\boldsymbol{\lambda}\right)\right|\hat{O}\left|\psi\left(t,\boldsymbol{\lambda}\right)\right\rangle $
dependent on time $t$ and Hamiltonian parameters $\boldsymbol{\lambda}$,
in our case $\boldsymbol{\lambda}=\left(\mathbb{J},\Omega,\boldsymbol{\delta}\right)$.
Under fairly general assumptions, if the Hamiltonian $\hat{\mathcal{H}}\left(t,\boldsymbol{\lambda}\right)$
and the initial state $\left|\psi\left(0,\boldsymbol{\lambda}\right)\right\rangle $
depend analytically on parameters $\boldsymbol{\lambda}$, then the
time-evolved state $\left|\psi\left(t,\boldsymbol{\lambda}\right)\right\rangle =\hat{U}\left(t,\boldsymbol{\lambda}\right)\left|\psi\left(0,\boldsymbol{\lambda}\right)\right\rangle $
and all standard observable expectation values are also real-analytic
functions of $\boldsymbol{\lambda}$~\cite{Reed1978,Kato1995}.
Here, $\hat{U}\left(t,\boldsymbol{\lambda}\right)$ is the time-evolution
operator.

The question is really in which cases could the observable be non-analytic.
In the following we break down $5$ possibilities where it could occur,
and argue that none of these scenarios apply to our case. 

(1) Undoubtedly, if either the Hamiltonian $\hat{\mathcal{H}}\left(t,\boldsymbol{\lambda}\right)$
or the initial state $\left|\psi\left(0,\boldsymbol{\lambda}\right)\right\rangle $
depend non-analytically on the parameters, the non-analyticity would
be propagated to any observable. For example, an initial state could
be prepared as a ground state of $\hat{\mathcal{H}}\left(0,\boldsymbol{\lambda}\right)$,
and be non-analytically due to a level crossing. In our setup, neither
is the case. The Hamiltonian is a smooth function of its parameters,
and the initial state is quite featureless; it is the all-zero state
$\left|\psi\left(0,\boldsymbol{\lambda}\right)\right\rangle =\left|\boldsymbol{0}\right\rangle $.

(2) Non-analytic behavior is encountered in the context of phase transitions:
order-parameter kinks, and jumps, etc.. Such non-analyticities appear
across the whole phase transition spectrum: thermal transitions, quantum
phase transitions, dynamical phase transitions, topological phase
transitions, or even many-body localization (MBL) transitions, regardless
whether they are classified as first-order or second-order transitions.
Nonetheless, true non-analyticities only appear in the thermodynamic
limit (for infinitely large system sizes). For finite-size systems,
such as the graphs which we are interested in, no non-analytic behavior
is caused by phase transitions. All quantities are analytic and evolve
smoothly across potential transition regions.

(3) Non-analytic behavior can also appear inside a MBL phase (additionally
to the transition boundary non-analyticities) or within certain glassy
systems. Again, these features are only genuinely non-analytic in
the thermodynamic limit.

(4) Non-standard observables such as post-selected observables could
also cause non-analytic behavior. In the context of post-selected observables,
the time-evolved state is first projected onto a specific subspace
of the Hilbert space before an observable operator is applied. This
can trigger a non-analytic behavior, for example if the state has
zero weight in the respective subspace, leading to a divergence due
to the normalization. In the presented protocol, only standard observables
are implemented, see Eq. (\ref{eq:observables}).

(5) Non-analyticities can also appear in the context of level crossings.
Here, however, they only show for specific observables. For example,
if one tracks an observable for a specific eigenstate (such as the
ground state). For our implemented observables, level crossings do
not cause non-analytic behavior.

\section{Properties of $\mathrm{GDQC}$\label{app:GDQC}}

In this appendix we prove the proposition~\ref{prop:prop2}: \textit{
The function $\kappa_{\mathrm{GDQC}}$ defined in Eq.~(\ref{eq:GDQC_t}) is a valid graph kernel, i.e. it is symmetric positive semi-definite on the space of graphs equipped with any graph distance $d_{\mathcal{G}}$, and is isomorphism-invariant such that $\kappa_{\mathrm{GDQC}}(\mathcal{G}_{\mu}, \mathcal{G}_{\mu^\prime}) = 1$ whenever $\mathcal{G}_\mu$ and $\mathcal{G}_{\mu^\prime}$ are isomorphic.
}
Let us recall that the generalized-distance quantum-correlation kernel 
$\kappa_{\mathrm{GDQC}}$ (\ref{eq:GDQC_t}) is defined via a scalar product 
\begin{align}
K_{\mu\mu^{\prime}}^{\mathrm{GDQC}} & =\kappa_{\mathrm{GDQC}}\left(\mathcal{G}_{\mu},\mathcal{G}_{\mu^{\prime}}\right)\label{eq:K_GDQC_def_App}\\
 & =\boldsymbol{\chi}^{\intercal}\left[C^{\mathcal{G}_{\mu}}\left(t\right),D^{\mathcal{G}_{\mu}}\right]\,\boldsymbol{\chi}\left[C^{\mathcal{G}_{\mu^{\prime}}}\left(t\right),D^{\mathcal{G}_{\mu^{\prime}}}\right]\nonumber \\
 & =\sum_{\nu = 1}^{N_\chi}\chi_{\nu}\left[C^{\mathcal{G}_{\mu}}\left(t\right),D^{\mathcal{G}_{\mu}}\right]\,\chi_{\nu}\left[C^{\mathcal{G}_{\mu^{\prime}}}\left(t\right),D^{\mathcal{G}_{\mu^{\prime}}}\right],\nonumber 
\end{align}
where $C^{\mathcal{G}}\left(t\right)$ denotes the correlation
matrix (\ref{eq:observables}) associated with the graph $\mathcal{G}$,
\textit{cf.} Eqs. (\ref{eq:general_quantum_evolution_kernel}), (\ref{eq:H_global}),
(\ref{eq:H_local}), and $D^{\mathcal{G}}$ is the generalized-distance
matrix. The index iterates over the whole dataset of $N_{\mathrm{data}}$
graphs, \textit{i.e.} $\mu,\mu^{\prime}\in\left\{ 1,2,\dots,N_{\mathrm{data}}\right\} $.
Naturally, the correlation matrix $C^{\mathcal{G}}\left(t\right)$
is a real-valued observable since it is derived from a Hermitian operator.
Consequently, each element of the $N_\chi$-dimensional $\mathrm{GDQC}$ feature vector $\boldsymbol{\chi}\left[C^{\mathcal{G}}\left(t\right),D^{\mathcal{G}}\right]$ is real-valued with $N_\chi = N_{\mathrm{bins}}^CN_{\mathrm{bins}}^D$. 
By continuation, the $\mathrm{GDQC}$ kernel $\kappa_{\mathrm{GDQC}}\left(\mathcal{G}_{\mu},\mathcal{G}_{\mu^{\prime}}\right)$
is also real-valued. 
Next, we prove the 3 statements of proposition~\ref{prop:prop2} point by point.
\begin{proof}
(1) A kernel defined via a scalar product like $\kappa_{\mathrm{GDQC}}$ (\ref{eq:K_GDQC_def_App}) is symmetric, \textit{i.e.} $\kappa_{\mathrm{GDQC}}\left(\mathcal{G}_{\mu},\mathcal{G}_{\mu^{\prime}}\right)=\kappa_{\mathrm{GDQC}}\left(\mathcal{G}_{\mu^{\prime}},\mathcal{G}_{\mu}\right)$.
Moreover, a kernel is positive semi-definite if for any real
vector $\boldsymbol{c}\in\mathbb{R}^{N_{\mathrm{data}}}$ it holds
$\boldsymbol{c}^{\intercal}K^{\mathrm{GDQC}}\boldsymbol{c}\geq0$.
An explicit computation directly leads to 
\begin{align}
\boldsymbol{c}^{\intercal}K^{\mathrm{GDQC}}\boldsymbol{c} & =\sum_{\nu=1}^{N_\chi}\left(\sum_{\mu}c_{\mu}\chi_{\nu}\left[C^{\mathcal{G}_{\mu}}\left(t\right),D^{\mathcal{G}_{\mu}}\right]\right)^{2}\geq0,\label{eq:K_GDQC_PSD}
\end{align}
confirming that the $\mathrm{GDQC}$ kernel is indeed positive semi-definite.\\
(2). We demonstrate that the kernel is invariant under node
label permutations. Hereby, it suffices to show that the $\mathrm{GDQC}$ feature vector satisfies 
\begin{align}
\boldsymbol{\chi}\left[C^{\mathcal{G}^{\Pi}}\left(t\right),D^{\mathcal{G}^{\Pi}}\right] & =\boldsymbol{\chi}\left[C^{\mathcal{G}}\left(t\right),D^{\mathcal{G}}\right],\label{eq:chi_U_Pi_transform}
\end{align}
with $\mathcal{G}^{\Pi}$ being a graph which has its nodes permuted by a given permutation $\Pi$. 
Specifically, we demonstrate that the correlation and distance matrices of a permuted graph are related to their non-permuted counterparts via the relations
\begin{align}
C_{i,j}^{\mathcal{G}^{\Pi}} & = C_{\Pi^{-1}\left(i\right),\Pi^{-1}\left(j\right)}^{\mathcal{G}},
& D_{i,j}^{\mathcal{G}^{\Pi}} & = D_{\Pi^{-1}\left(i\right),\Pi^{-1}\left(j\right)}^{\mathcal{G}}.\label{eq:CD_permutation}
\end{align}
Consequently, the permutation of a graph is identical to the inverse permutation of the matrix indices associated with the non-permuted graph.
Such a transformation, however, has no effect on the $\mathrm{GDQC}$ feature vector $\boldsymbol{\chi}\left[C^{\mathcal{G}}\left(t\right),D^{\mathcal{G}}\right]$,
as the feature-vector elements are simply counts of the number of
values falling into a predefined interval, \textit{cf.} algorithm~\ref{algo:GDQC} and Fig.~\ref{fig:GDQC_illustration}. 
Thus, the $\mathrm{GDQC}$ kernel $\kappa_{\mathrm{GDQC}}$ (\ref{eq:K_GDQC_def_App}) is invariant under node label permutations. 
It remains to prove that the transformation relations (\ref{eq:CD_permutation}) are actually true.

Concerning the distances it is clear that the distance matrices associated with $\mathcal{G}$ and $\mathcal{G}^{\Pi}$ are related via $D_{i,j}^{\mathcal{G}}=D_{\Pi\left(i\right),\Pi\left(j\right)}^{\mathcal{G}^{\Pi}}$ leading to the relation in Eq.~(\ref{eq:CD_permutation}). 
Concerning the correlation matrix relation, a few mathematical steps are in order.
We begin by considering the graph-embedded coupling matrix $\mathbb{J}^{\mathcal{G}_{\mu}}$,
Eq. (\ref{eq:graph_dependent_J}), and the local detuning field $\boldsymbol{\delta}^{\mathcal{G}}$,
Eq. (\ref{eq:local_detuning}). One directly obtains the relations
\begin{align}
\mathbb{J}_{i,j}^{\mathcal{G}^{\Pi}} & =\mathbb{J}_{\Pi^{-1}\left(i\right),\Pi^{-1}\left(j\right)}^{\mathcal{G}}, & \delta_{i}^{\mathcal{G}^{\Pi}} & =\delta_{\Pi^{-1}\left(i\right)}^{\mathcal{G}},\label{eq:graph_dep_coupl_matrix_permuted}
\end{align}
connecting the quantities between the permuted and the non-permuted graphs. 
Note that the relation (\ref{eq:graph_dep_coupl_matrix_permuted}) for the detuning is also trivially satisfied for the global detuning case.
Next, we define the unitary operator $\hat{U}_{\Pi}$ which permutes the atoms according to $\Pi$. 
Its action on a basis state $\left|\boldsymbol{b}\right\rangle$ is given by 
\begin{align}
\hat{U}_{\Pi}\left|\boldsymbol{b}\right\rangle  & =\hat{U}_{\Pi}\left|b_{1}b_{2}\cdots b_{N}\right\rangle =\left|b_{\Pi\left(1\right)}b_{\Pi\left(2\right)}\cdots b_{\Pi\left(N\right)}\right\rangle ,\label{eq:U_Pi_action}
\end{align}
and hence, it transforms the local operators in the Hamiltonian [Eq.~(\ref{eq:Rydberg_Ham})] according to
\begin{align}
\hat{U}_{\Pi}\hat{n}_{i}\hat{U}_{\Pi}^{\dagger} & =\hat{n}_{\Pi\left(i\right)}, & \hat{U}_{\Pi}\hat{\sigma}_{i}^{x}\hat{U}_{\Pi}^{\dagger} & =\hat{\sigma}_{\Pi\left(i\right)}^{x}.\label{eq:n_sx_U_Pi_transform}
\end{align}
Combining the properties (\ref{eq:graph_dep_coupl_matrix_permuted})
and (\ref{eq:n_sx_U_Pi_transform}), we obtain the Hamiltonian [Eq.~(\ref{eq:Rydberg_Ham})] relation
\begin{align}
\hat{\mathcal{H}}\left(\mathbb{J}^{\mathcal{G}^{\Pi}},\Omega,\delta^{\mathcal{G}^{\Pi}}\right) & =\hat{U}_{\Pi}\,\hat{\mathcal{H}}\left(\mathbb{J}^{\mathcal{G}},\Omega,\delta^{\mathcal{G}}\right)\,\hat{U}_{\Pi}^{\dagger},\label{eq:Ham_U_Pi_transform}
\end{align}
where in between we switched the summation indices $i\rightarrow i^{\prime}=\Pi^{-1}\left(i\right)$ before applying Eq.~(\ref{eq:n_sx_U_Pi_transform}). 
Correspondingly, the time-evolved state (\ref{eq:psi_t}),
\begin{align}
\left|\psi_{\mathcal{G}}\left(t\right)\right\rangle  & =\exp\left[-\mathsf{i}(t/\hbar)\,\hat{\mathcal{H}}\left(\mathbb{J}^{\mathcal{G}},\Omega,\boldsymbol{\delta}^{\mathcal{G}}\right)\right]\,\left|\psi\left(0\right)\right\rangle ,\label{eq:psi_t_App}
\end{align}
satisfies the transformation relation
\begin{align}
\left|\psi_{\mathcal{G}^{\Pi}}\left(t\right)\right\rangle  & =\hat{U}_{\Pi}\,\left|\psi_{\mathcal{G}}\left(t\right)\right\rangle ,\label{eq:psi_t_U_Pi_transform}
\end{align}
where we exploited the permutation invariance of the initial state
$\left|\psi\left(0\right)\right\rangle =\left|\boldsymbol{0}\right\rangle $, \textit{i.e.} $\hat{U}_{\Pi}^{\dagger}\left|\boldsymbol{0}\right\rangle =\left|\boldsymbol{0}\right\rangle $.
Lastly, we insert the relation (\ref{eq:psi_t_U_Pi_transform}) into the correlation matrix [Eq.~(\ref{eq:observables})], 
\begin{align}
C_{i,j}^{\mathcal{G}}\left(t\right) & =\left\langle \psi_{\mathcal{G}}\left(t\right)\big|\,\hat{n}_{i}\,\hat{n}_{j}\,\big|\psi_{\mathcal{G}}\left(t\right)\right\rangle ,\label{eq:correlation_mat_App}
\end{align}
leading to the final transformation relation (\ref{eq:CD_permutation}),
\begin{align}
C_{i,j}^{\mathcal{G}^{\Pi}}\left(t\right) & =\left\langle \psi_{\mathcal{G}}\left(t\right)\big|\hat{n}_{\Pi^{-1}\left(i\right)}\hat{n}_{\Pi^{-1}\left(j\right)}\big|\psi_{\mathcal{G}}\left(t\right)\right\rangle \nonumber \\
 & =C_{\Pi^{-1}\left(i\right),\Pi^{-1}\left(j\right)}^{\mathcal{G}}\left(t\right).\label{eq:corr_mat_U_Pi_transform}
\end{align}
(3) To prove the last property we define a pair of isomorphic graphs
$\mathcal{G}_{1}$ and $\mathcal{G}_{2}=\mathcal{G}_{1}^{\Pi}$. It
remains to show the property 
\begin{align}
\kappa_{\mathrm{GDQC}}\left(\mathcal{G}_{1},\mathcal{G}_{1}^{\Pi}\right) & =1,\label{eq:G1_G1Pi_kernel_property}
\end{align}
for the kernel function (\ref{eq:K_GDQC_def_App}). The property (\ref{eq:G1_G1Pi_kernel_property})
follows trivially from the $\mathrm{GDQC}$ feature vector invariance (\ref{eq:chi_U_Pi_transform}),
and the normalization of the feature vector $\left\Vert \boldsymbol{\chi}\left[C^{\mathcal{G}}\left(t\right),D^{\mathcal{G}}\right]\right\Vert =1$.
\end{proof}

\section{Expressiveness of $\mathrm{GDQC}$\label{app:expressivity_gdqc}}

In this Appendix we prove the proposition~\ref{prop:prop3}: \textit{
The $\mathrm{GDQC}$ kernel $\kappa_{\mathrm{GDQC}}$ matches the expressiveness of the GD-WL refinement~\cite{zhang2023rethinking} when GD-WL is initialized with node colors constructed from the multisets of (binned) correlation values in the rows/columns of $C^{\mathcal{G}}(t)$.
}
\begin{proof}
Consider the correlation matrix $C_{i,j}^{\mathcal{G}}\left(t\right)$,
Eq. (\ref{eq:correlation_mat_App}). First, let's start by giving
an example of how one can use the correlation matrix (at a fixed time
step $t$) as a color refinement algorithm. 
Let 
\begin{align}
q_{t}\left(i\right) & =\mathrm{hash}\left(\left\{ \!\!\left\{ C_{i,j}^{\mathcal{G}}\left(t\right)\;\;;\;\;j\in\mathcal{V}\right\} \!\!\right\} \right),\label{eq:q_t_i}
\end{align}
be the color attributed to node $i$ at time step $t$ of the dynamics,
where hash represents any injective function on its domain. Recall
that GD-WL test can be formalized as : 
\begin{align}
\chi_{\mathcal{G}}^{(l)}\left(j\right) & =\mathrm{hash}\left(\left\{ \!\!\left\{ \left(D_{i,j}^{\mathcal{G}},\chi_{\mathcal{G}}^{(l-1)}\left(i\right)\right)\;\;;\;\;i\in\mathcal{V}\right\} \!\!\right\} \right).\label{eq:hash}
\end{align}
In general, one starts with a uniform $\chi_{\mathcal{G}}^{(0)}\left(j\right)$,
but in our case we put : 
\begin{align}
\chi_{\mathcal{G}}^{(0)}\left(j\right) &=q_{t}\left(j\right).\label{eq:chi_G_0}
\end{align}
Now, consider the following tensor 
\begin{align}
M_{i,j}^{\mathcal{G}} & =\left(D_{i,j}^{\mathcal{G}},\chi_{\mathcal{G}}^{(0)}\left(j\right)\right).\label{eq:M_G_ij}
\end{align}
In order to prove the statement of proposition~\ref{prop:prop3}, we start by showing
the following lemma : 
\begin{lemma} \label{lemma : permutation}
Two graphs $\mathcal{G}_{1}$ and $\mathcal{G}_{2}$ are undistinguishable
by the GD-WL test if and only if there is a permutation $\Pi$ such that for every pair (i,j) we have:
\begin{align}
M^{\mathcal{G}_{1}}_{i,j} &= M^{\mathcal{G}_{2}}_{\Pi(i),\Pi(j)}
.\label{eq:MG1_MG2_relation}
\end{align}
\end{lemma}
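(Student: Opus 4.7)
The plan is to prove the two directions of the equivalence separately. The $(\Leftarrow)$ direction is a clean induction along the GD-WL recursion (\ref{eq:hash}), while the $(\Rightarrow)$ direction requires upgrading the multi-set agreement that GD-WL provides into a pointwise bijection that preserves both the initial coloring and the distance matrix.

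For $(\Leftarrow)$, suppose $\Pi$ satisfies $M^{\mathcal{G}_{1}}_{i,j}=M^{\mathcal{G}_{2}}_{\Pi(i),\Pi(j)}$ for every $(i,j)$. Projecting onto the two components of $M$ immediately yields $\chi^{(0)}_{\mathcal{G}_1}(j)=\chi^{(0)}_{\mathcal{G}_2}(\Pi(j))$ (second coordinate, any fixed $i$) and $D^{\mathcal{G}_1}_{i,j}=D^{\mathcal{G}_2}_{\Pi(i),\Pi(j)}$. I would then induct on the iteration index $l$ to establish $\chi^{(l)}_{\mathcal{G}_1}(j)=\chi^{(l)}_{\mathcal{G}_2}(\Pi(j))$ for every $j$. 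The inductive step is simply a change of summation variable $k=\Pi(i)$ inside the multi-set argument of (\ref{eq:hash}), which is legitimate because $\Pi$ is a bijection of $\mathcal{V}$ and the hash is invariant under reorderings of its input multi-set. This pointwise matching propagates to equal multi-sets of colors at every level, and in particular at the stable level, so GD-WL cannot separate $\mathcal{G}_1$ and $\mathcal{G}_2$.

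For $(\Rightarrow)$, the plan is to build $\Pi$ from the stable coloring obtained by iterating (\ref{eq:hash}) to a fixed point. GD-WL-indistinguishability means the multi-sets of stable colors across the two graphs coincide, so a color-matching bijection $\Pi\colon\mathcal{V}\to\mathcal{V}$ exists; and because the stable color refines $\chi^{(0)}$, preservation of the initial colors by $\Pi$ is automatic. The substantive work is arranging $\Pi$ so that it is also pointwise distance preserving. I would do this incrementally: having already fixed $\Pi$ on a subset $S\subset\mathcal{V}$, the equitable-partition property of the stable coloring guarantees that every unassigned node $j\in\mathcal{V}\setminus S$ has the same multi-set of (distance, stable-color) pairs to $S$ as some still-available target of the same stable color in $\mathcal{G}_2$, so $\Pi$ can be extended without breaking any distance constraint previously fixed.

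The principal obstacle is precisely this extension step: color refinement only certifies \emph{multi-set} agreement of colored distances, so a greedy choice of extension may be locally consistent yet later collide with a distance constraint of a node further down the queue. I expect to overcome this with a Hall-type bipartite matching argument applied class by class, whose edges encode compatibility of the distance profile of a candidate against the already-matched set $\Pi(S)$; equivalently, one invokes the known equivalence, up to automorphism, between stable colorings of GD-WL-indistinguishable graphs. Once the lemma is established, Proposition~\ref{prop:prop3} follows quickly: Algorithm~\ref{algo:GDQC} shows that $\boldsymbol{\chi}[C^{\mathcal{G}}(t),D^{\mathcal{G}}]$ is precisely the binned histogram of the entries of $M^{\mathcal{G}}$, so in the fine-binning limit $\kappa_{\mathrm{GDQC}}(\mathcal{G}_1,\mathcal{G}_2)=1$ is equivalent to the existence of the permutation $\Pi$ supplied by the lemma, hence to GD-WL-indistinguishability.
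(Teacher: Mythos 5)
Your $(\Leftarrow)$ direction is fine and is essentially the paper's argument: project $M$ onto its two components to get initial-color and distance preservation, then induct on the round index with a change of summation variable inside the hashed multi-set.

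The $(\Rightarrow)$ direction, however, is left with a genuine gap that your sketch does not close. You correctly isolate the hard step -- upgrading the multi-set agreement certified by GD-WL into a bijection $\Pi$ satisfying $D^{\mathcal{G}_1}_{i,j}=D^{\mathcal{G}_2}_{\Pi(i),\Pi(j)}$ for \emph{every} pair -- but the two remedies you offer do not constitute a proof. A Hall-type matching applied class by class can only certify that each color class of the stable partition admits a bijection compatible with cardinalities and with multi-set profiles against the already-fixed set $S$; it cannot enforce the simultaneous pairwise constraints among all not-yet-assigned vertices, which is exactly an isomorphism problem for the colored, distance-weighted complete graph. Multi-set (color-refinement) information is in general too weak to guarantee that such a globally consistent bijection exists, so "locally consistent, extend greedily, repair with Hall" is not a valid scheme without a substantially stronger structural input. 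Likewise, the "known equivalence, up to automorphism, between stable colorings of GD-WL-indistinguishable graphs" that you invoke as an alternative is not an available theorem; for 1-WL-type refinements, indistinguishability does not in general yield any edge- or distance-preserving bijection, so you cannot outsource the difficulty to a citation.

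By contrast, the paper's proof of this direction builds $\Pi$ by matching stable color classes and then runs a \emph{downward} induction on the round index: since the hash is injective, equality of colors at round $l$ forces equality of the hashed multi-sets $R^{(l-1)}$, and these are peeled back round by round until one reaches the tuples $\bigl(D^{\mathcal{G}}_{i,j},\chi^{(0)}_{\mathcal{G}}(j)\bigr)$ at $l=0$, i.e. the entries of $M$. Whatever one thinks of the tuple-alignment step in that induction, it is the mechanism by which the paper converts color information into pointwise distance preservation, and your proposal contains no counterpart to it. As written, your argument establishes only the easy direction plus the existence of a color-class bijection, and the passage from there to Eq.~(\ref{eq:MG1_MG2_relation}) remains unproven.
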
 
\begin{proof}
First we prove the \textit{if} part. Suppose such a permutation exists,
and call 
\begin{align}
R_{\mathcal{G}}^{(l)}\left(i\right) & =\left\{ \!\!\left\{ \left(D_{i,j}^{\mathcal{G}},\chi_{\mathcal{G}}^{(l)}\left(j\right)\right)\;\;:\;\;j\in\mathcal{V}\right\} \!\!\right\} ,\label{eq:RG_l_i}
\end{align}
the multi-set for node $i$ corresponding to the data hashed
in round $l$. Because we assume that this permutation exists,
one must have : 
\begin{align}
R_{\mathcal{G}_{1}}^{(0)}\left(i\right) & =\left\{ \!\!\left\{ M_{i,j}^{\mathcal{G}_{1}}\;\;:\;\;j\in\mathcal{V}_{1}\right\} \!\!\right\} 
  =\left\{ \!\!\left\{ M_{\Pi\left(i\right),w}^{\mathcal{G}_{2}}\;\;:\;\;w\in\mathcal{V}_{2}\right\} \!\!\right\} \nonumber \\
 & =R_{\mathcal{G}_{2}}^{(0)}\left(\Pi\left(i\right)\right),\label{eq:R_G1_R_G2_relation}
\end{align}
with $w=\Pi\left(j\right)$. This ensures that at the first round,
vertex $i$ in $\mathcal{G}_{1}$ and vertex $\Pi\left(i\right)$
in $\mathcal{G}_{2}$ feed identical multi-sets into the hash function
{[}\textit{cf.} Eq.~(\ref{eq:hash}){]}, and therefore receive the same color.
Since by hypothesis we have $D_{i,j}^{\mathcal{G}_{1}}=D_{\Pi\left(i\right),\Pi\left(j\right)}^{\mathcal{G}_{2}}$
for all $(i,j)$, we can also see that the multi-set hashed in the
next round: 
\begin{align}
R_{\mathcal{G}_{1}}^{(1)}\left(i\right) & =\left\{ \!\!\left\{ \left(D_{i,j}^{\mathcal{G}_{1}},\chi_{\mathcal{G}_{1}}^{(1)}\left(j\right)\right)\;\;:\;\;j\in\mathcal{V}_{1}\right\} \!\!\right\} \nonumber \\
 & =\left\{\!\!\left\{ \left(D_{\Pi\left(i\right),\Pi\left(j\right)}^{\mathcal{G}_{2}},\chi_{\mathcal{G}_{2}}^{(1)}\left(\Pi\left(j\right)\right)\right)\;\;:\;\;j\in\mathcal{V}_{1}\right\}\!\!\right\} \nonumber \\
 & =R_{\mathcal{G}_{2}}^{(1)}\left(\Pi\left(i\right)\right),\label{eq:R1_G1_R1_G2_relation}
\end{align}
leading to $\chi_{\mathcal{G}_{1}}^{(2)}\left(i\right)=\chi_{\mathcal{G}_{2}}^{(2)}\left(\Pi\left(i\right)\right)$.
The process repeats indefinitely itself until the number of iterations
reaches its maximal value $l_{\mathrm{max}}$, and we thus have identical
color classes on $\mathcal{G}_{1}$ and $\mathcal{G}_{2}$, which
means undistinguishability. Now we prove the \textit{only if} part,
\textit{i.e} when the GD-WL process never separates the graphs, then
such a permutation $\Pi$ must exist. We know that after at most $n$
iterations ($n$ being the number of nodes in $\mathcal{G}_{1}$ and
$\mathcal{G}_{2}$) the final colors, noted $\chi^{*}$ appears the same number of times in each graph as they are assumed
to be undistinguishable. If for each color class we choose a one-to-one
correspondence between the vertices of $\mathcal{G}_{1}$ and those
of $\mathcal{G}_{2}$ that share that color, we obtain a bijection
\begin{align}
\Pi & :\mathcal{V}_{1}\mapsto\mathcal{V}_{2}, & \mathrm{with} & : & \chi_{\mathcal{G}_{1}}^{*}\left(i\right) & =\chi_{\mathcal{G}_{2}}^{*}\left(\Pi\left(i\right)\right) & \forall & i.\label{eq:bijection}
\end{align}
We now show by (downward) induction that 
\begin{align}
\chi_{\mathcal{G}_{1}}^{(l)}\left(i\right) & =\chi_{\mathcal{G}_{2}}^{(l)}\left(\Pi\left(i\right)\right), & D_{i,j}^{\mathcal{G}_{1}} & =D_{\Pi\left(i\right),\Pi\left(j\right)}^{\mathcal{G}_{2}}, & \forall & \left(i,j\right),\label{eq:induction}
\end{align}
where $l\leq l_{\mathrm{max}}$. The base case is $l=l_{\mathrm{max}}$
which is exactly our~(\ref{eq:bijection}). Now assume that~(\ref{eq:induction})
holds, when colors at round $l$ are computed, vertex $i$ in $\mathcal{G}_{1}$
hashes $R_{\mathcal{G}_{1}}^{(l)}\left(i\right)$. By the induction
hypothesis we have $R_{\mathcal{G}_{1}}^{(l-1)}\left(i\right)=R_{\mathcal{G}_{2}}^{(l-1)}\left(\Pi\left(i\right)\right)$
which means that each ordered pair in $R_{\mathcal{G}_{1}}^{(l-1)}\left(i\right)$
occurs with the same frequency in $R_{\mathcal{G}_{2}}^{(l-1)}\left(\Pi\left(i\right)\right)$.
This allows us to align each tuple with distance $D_{i,j}^{\mathcal{G}_{1}}$
with the corresponding tuple $D_{\Pi\left(i\right),\Pi\left(j\right)}^{\mathcal{G}_{2}}$,
as the distances don't change from one round to the other, ensuring
that $\chi_{\mathcal{G}_{1}}^{(l-1)}\left(j\right)=\chi_{\mathcal{G}_{2}}^{(l-1)}\left(\Pi\left(j\right)\right)$
valid for $\forall j\in\mathcal{V}_{1}$. \\
One can then complete the downward induction to $l=0$ which shows
\begin{align}
M_{i,j}^{\mathcal{G}_{1}} & =\left(D_{i,j}^{\mathcal{G}_{1}},\chi_{\mathcal{G}_{1}}^{(0)}\left(j\right)\right)=\left(D_{\Pi\left(i\right),\Pi\left(j\right)}^{\mathcal{G}_{2}},\chi_{\mathcal{G}_{2}}^{(0)}\left(\Pi\left(j\right)\right)\right)\nonumber \\
 & =M_{\Pi\left(i\right),\Pi\left(j\right)}^{\mathcal{G}_{2}}.\label{eq:MG1_MG2_relation-1}
\end{align}
Rewriting this for every $(i,j)$ gives Eq.~(\ref{eq:MG1_MG2_relation}). 
\end{proof}

From here, the proof of proposition~\ref{prop:prop3} is straightforward. If two graphs
are undistinguishable by the GD-WL test with $q_{t}\left(i\right)$
as initial coloring, then even when the number of bins is large enough (with maximal expressiveness attained when $N_{\mathrm{bins}}^D>  \left[\min\left(D^\mathcal{G}_{i,j}-D^\mathcal{G}_{i^\prime,j^\prime}\right) \right]^{-1}$
and $N_{\mathrm{bins}}^C> \left[\min\left(C_{i,j}-C_{i^\prime,j^\prime}\right) \right]^{-1}$), according to lemma~\ref{lemma : permutation} their corresponding
tensors $M^{\mathcal{G}_{1}}$ and $M^{\mathcal{G}_{2}}$ are one
permutation away one from the other. Considering that hash is an injective function, this implies that, at a fixed graph distance value, the multi-sets obtained from the corresponding correlation matrix rows
are also the same, and thus the number of times a given value for
the correlation appears (while its corresponding nodes are at the
same fixed distance) is also the same. Given the definition of the $\mathrm{GDQC}$ kernel, the corresponding feature vectors for such a pair of graphs are identical, and their kernel value is equal to $1$, which completes the proof. 
\end{proof}

\section{Kernel pooling statement\label{app:kernel_pooling}}
In this appendix, we demonstrate the claim of the proposition~\ref{prop:prop4}: \textit{
Let $S=\{\kappa_1,\dots,\kappa_N\}$ be a set of graph kernels, and let $\kappa_A$ and $\kappa_B$ be kernels obtained by pooling two subsets $S_A$ and $S_B$ of $S$ using (\ref{eq:lin_comb}) or (\ref{eq:prod}) as pooling rules. If $S_A \subset S_B$, then $\kappa_B$ is at least as expressive as $\kappa_A$.
}

We restrict this proof to the case of the two pooling operations described in this paper, the sum pooling (\ref{eq:lin_comb}) and the element-wise product pooling (\ref{eq:prod}). 
It is enough to show that the expressiveness does not get lowered
by adjoining new kernels to the set of pooled ones.

\begin{proof}
A (graph) kernel $\kappa_{B}$ is more expressive than another (graph)
kernel $\kappa_{A}$ if there exists a given pair of non-isomorphic
graphs $\mathcal{G}_{1}$ \& $\mathcal{G}_{2}$ that $\kappa_{B}$
can distinguish but not $\kappa_{A}$. Furthermore, $\kappa_{B}$
must also be able to distinguish any such pair that $\kappa_{A}$
distinguishes. For normalized kernels, this translates to $\kappa_{A}(\mathcal{G}_{1},\mathcal{G}_{2})=1$
and $\kappa_{B}(\mathcal{G}_{1},\mathcal{G}_{2})<1$. 
Hereby, we adopt the convention that isomorphic graphs should have a maximal kernel value, which is 1 when the kernel is normalized. ove

First we need to fix the weight updating rule that ensures comparability
between the kernels $\kappa_{A}$ and $\kappa_{B}$ in the case of
sum pooling,
\begin{align}
\kappa_{A,+}\left(\mathcal{G}_{1},\mathcal{G}_{2}\right) & =\sum_{i=1}^{n_{A}}\alpha_{i}\,\kappa_{i}\left(\mathcal{G}_{1},\mathcal{G}_{2}\right),\label{eq:kappa_A+}
\end{align}
where $S_{A}=\{\kappa_{1},...,\kappa_{n_{A}}\}$ and $\sum_{i=1}^{n_{A}}\alpha_{i}=1$
which then becomes (after enriching the pooling set) : 
\begin{align}
\kappa_{B,+}\left(\mathcal{G}_{1},\mathcal{G}_{2}\right) & =\sum_{i=1}^{n_{A}}\alpha_{i}^{\prime}\,\kappa_{i}\left(\mathcal{G}_{1},\mathcal{G}_{2}\right)+\sum_{i=n_{A}+1}^{n_{B}}\alpha_{i}^{\prime}\,\kappa_{i}\left(\mathcal{G}_{1},\mathcal{G}_{2}\right),\label{eq:kappa_B+}
\end{align}
where $S_{B}=\{S_{A},\kappa_{n_{A}+1},...,\kappa_{n_{B}}\}$ and $\sum_{i=1}^{n_{B}}\alpha_{i}^{\prime}=1$.
To ensure the evaluation of the set adjoining operation, we have to
consider the case where the update of the weights only introduces
a factor that is proportional to the relative set sizes \textit{i.e.}
:
\begin{align}
\Big(\sum_{i=1}^{n_{A}}\alpha_{i}^{\prime}\Big)\big/\Big(\sum_{i=1}^{n_{B}}\alpha_{i}^{\prime}\Big) & =\left|S_{A}\right|\big/\left|S_{B}\right|,\label{eq:alpha_P_over_alpha}
\end{align}
which is obtained when $\alpha_{i}^{\prime}=\alpha_{i}\,n_{A}/n_{B}$.
Next we can list the two following cases : 
\begin{align}
\kappa_{+,B}\left(\mathcal{G}_{1},\mathcal{G}_{2}\right) & =\frac{n_{A}}{n_{B}}\kappa_{+,A}\left(\mathcal{G}_{1},\mathcal{G}_{2}\right)+\sum_{i=n_{A}+1}^{n_{B}}\alpha_{i}^{\prime}\kappa_{i}\left(\mathcal{G}_{1},\mathcal{G}_{2}\right),\label{eq:kappa_+B}
\end{align}
and 
\begin{align}
\kappa_{\times,B}\left(\mathcal{G}_{1},\mathcal{G}_{2}\right) & =\kappa_{\times,A}\left(\mathcal{G}_{1},\mathcal{G}_{2}\right)\prod_{i=n_{A}+1}^{n_{B}}\kappa_{i}\left(\mathcal{G}_{1},\mathcal{G}_{2}\right).\label{eq:kappa_xB}
\end{align}
So if $\kappa_{\times,A}=1$ then $\kappa_{\times,B}\leq1$ provided
that $\kappa_{i}\left(\mathcal{G}_{1},\mathcal{G}_{2}\right)$. 

Now if there is at least one kernel $\kappa_{i}\in\{k_{n_{A}+1},...,\kappa_{n_{B}}\}$
such that $\kappa_{i}\left(\mathcal{G}_{1},\mathcal{G}_{2}\right)<1$
then it directly follows from the two previous equations that $\kappa_{+,B}<\kappa_{+,A}$
and $\kappa_{\times,B}<\kappa_{\times,A}$. This assumption is not
necessarily true, and in the case where $\kappa_{i}\left(\mathcal{G}_{1},\mathcal{G}_{2}\right)=1\;\;\forall\;\;\kappa_{i}\in\{k_{n_{A}+1},...,\kappa_{n_{B}}\}$
then we have $\kappa_{+,B}=\kappa_{+,A}$ and $\kappa_{\times,B}=\kappa_{\times,A}$.
This allows us to state that without any specific knowledge of $\{k_{n_{A}+1},...,\kappa_{n_{B}}\}$
that $\kappa_{+,B}\leq\kappa_{+,A}$ and $\kappa_{\times,B}\leq\kappa_{\times,A}$.
\end{proof}

\bibliography{apssamp}

@article{PTCFM_2003,
    author = {Helma, C. and Kramer, S.},
    title = {A survey of the Predictive Toxicology Challenge 2000--2001},
    journal = {Bioinformatics},
    volume = {19},
    number = {10},
    pages = {1179--1182},
    year = {2003},
    month = {07},
    abstract = {Motivation: The Predictive Toxicology Challenge (PTC) was initiated
to stimulate the development of advanced techniques for predictive
toxicology models. The goal of this challenge was to compare different
approaches for the prediction of rodent carcinogenicity, based on the
experimental results of the US National Toxicology Program (NTP). Results: 111
sets of predictions for 185 compounds have been evaluated on quantitative
and qualitative scales to select the most predictive models and those
with the highest toxicological relevance. The accuracy of the submitted
predictions was between 25 and 79 \%. An evaluation of the most
accurate models by toxicological experts showed, that it is still hard
for domain experts to interpret the submitted models and to put them
into relation with toxicological knowledge.Availability: PTC details and data can be found at: http://www.predictive-toxicology.org/ptc/Contact: helma@informatik.uni-freiburg.de*To whom correspondence
    should be addressed.†Present address: Institute for
Computer Science, Technical University Munich, D-85748, Garching,
Germany.},
    issn = {1367--4803},
    doi = {10.1093/bioinformatics/btg084},
    url = {https://doi.org/10.1093/bioinformatics/btg084},

}

@article{MUTAG_1991,
	annote = {doi: 10.1021/jm00106a046},
	author = {Debnath, Asim Kumar and Lopez de Compadre, Rosa L. and Debnath, Gargi and Shusterman, Alan J. and Hansch, Corwin},
	date = {1991/02/01},
	date-added = {2025-08-07 16:34:20 +0200},
	date-modified = {2025-08-07 16:34:20 +0200},
	doi = {10.1021/jm00106a046},
	isbn = {0022-2623},
	journal = {Journal of Medicinal Chemistry},
	journal1 = {Journal of Medicinal Chemistry},
	journal2 = {J. Med. Chem.},
	month = {02},
	number = {2},
	pages = {786--797},
	publisher = {American Chemical Society},
	title = {Structure-activity relationship of mutagenic aromatic and heteroaromatic nitro compounds. Correlation with molecular orbital energies and hydrophobicity},
	type = {doi: 10.1021/jm00106a046},
	url = {https://doi.org/10.1021/jm00106a046},
	volume = {34},
	year = {1991},
	year1 = {1991},
	bdsk-url-1 = {https://doi.org/10.1021/jm00106a046}
}

@article{Sharma_social,
author = {Sharma, Kartik and Lee, Yeon-Chang and Nambi, Sivagami and Salian, Aditya and Shah, Shlok and Kim, Sang-Wook and Kumar, Srijan},
title = {A Survey of Graph Neural Networks for Social Recommender Systems},
year = {2024},
issue_date = {October 2024},
publisher = {Association for Computing Machinery},
address = {New York, NY, USA},
volume = {56},
number = {10},
issn = {0360-0300},
url = {https://doi.org/10.1145/3661821},
doi = {10.1145/3661821},
abstract = {Social recommender systems (SocialRS) simultaneously leverage the user-to-item interactions as well as the user-to-user social relations for the task of generating item recommendations to users. Additionally exploiting social relations is clearly effective in understanding users’ tastes due to the effects of homophily and social influence. For this reason, SocialRS has increasingly attracted attention. In particular, with the advance of graph neural networks (GNN), many GNN-based SocialRS methods have been developed recently. Therefore, we conduct a comprehensive and systematic review of the literature on GNN-based SocialRS.In this survey, we first identify 84 papers on GNN-based SocialRS after annotating 2,151 papers by following the PRISMA framework (preferred reporting items for systematic reviews and meta-analyses). Then, we comprehensively review them in terms of their inputs and architectures to propose a novel taxonomy: (1) input taxonomy includes five groups of input type notations and seven groups of input representation notations; (2) architecture taxonomy includes eight groups of GNN encoder notations, two groups of decoder notations, and 12 groups of loss function notations. We classify the GNN-based SocialRS methods into several categories as per the taxonomy and describe their details. Furthermore, we summarize benchmark datasets and metrics widely used to evaluate the GNN-based SocialRS methods. Finally, we conclude this survey by presenting some future research directions. GitHub repository with the curated list of papers are available at},
journal = {ACM Comput. Surv.},
month = jun,
articleno = {265},
numpages = {34},
keywords = {Graph neural networks, social network, recommender systems, social recommendation, survey},
}

@article{Henry_2021,
  title = {Quantum evolution kernel: Machine learning on graphs with programmable arrays of qubits},
  author = {Henry, Louis-Paul and Thabet, Slimane and Dalyac, Constantin and Henriet, Lo\"{\i}c},
  journal = {Phys. Rev. A},
  volume = {104},
  issue = {3},
  pages = {032416},
  numpages = {16},
  year = {2021},
  month = {Sep},
  publisher = {American Physical Society},
  doi = {10.1103/PhysRevA.104.032416},
}

@article{Albrecht2023,
  title = {Quantum feature maps for graph machine learning on a neutral atom quantum processor},
  author = {Albrecht, Boris and Dalyac, Constantin and Leclerc, Lucas and Ortiz-Guti\'errez, Luis and Thabet, Slimane and D'Arcangelo, Mauro and Cline, Julia R. K. and Elfving, Vincent E. and Lassabli\`ere, Lucas and Silv\'erio, Henrique and Ximenez, Bruno and Henry, Louis-Paul and Signoles, Adrien and Henriet, Lo\"{\i}c},
  journal = {Phys. Rev. A},
  volume = {107},
  issue = {4},
  pages = {042615},
  numpages = {16},
  year = {2023},
  month = {Apr},
  publisher = {American Physical Society},
  doi = {10.1103/PhysRevA.107.042615},
}

@book{Kato1995,
  title     = {Perturbation Theory for Linear Operators},
  author    = {Kato, Tosio},
  year      = {1995},
  publisher = {Springer},
  address   = {Berlin, Heidelberg},
  edition   = {2nd},
  series    = {Classics in Mathematics},
  isbn      = {978-3-540-58661-6},
  doi       = {10.1007/978-3-642-66282-9}
}

@book{Reed1978,
  title     = {Methods of Modern Mathematical Physics, Vol. IV: Analysis of Operators},
  author    = {Reed, Michael and Simon, Barry},
  year      = {1978},
  publisher = {Academic Press},
  address   = {New York},
  isbn      = {978-0125850049}
}

@misc{Gilmer:2017arh,
      title={Neural Message Passing for Quantum Chemistry}, 
      author={Justin Gilmer and Samuel S. Schoenholz and Patrick F. Riley and Oriol Vinyals and George E. Dahl},
      year={2017},
      eprint={1704.01212},
      archivePrefix={arXiv},
      primaryClass={cs.LG},
      url={https://arxiv.org/abs/1704.01212}, 
}

@inproceedings{
allamanis2018learning,
title={Learning to Represent Programs with Graphs},
author={Miltiadis Allamanis and Marc Brockschmidt and Mahmoud Khademi},
booktitle={International Conference on Learning Representations},
year={2018},
url={https://openreview.net/forum?id=BJOFETxR-},
}

@article{Xu_edit,
author = {Xu, Mouyi and Chang, Lijun},
title = {Graph Edit Distance Estimation: A New Heuristic and A Holistic Evaluation of Learning-based Methods},
year = {2025},
issue_date = {June 2025},
publisher = {Association for Computing Machinery},
address = {New York, NY, USA},
volume = {3},
number = {3},
url = {https://doi.org/10.1145/3725304},
doi = {10.1145/3725304},
journal = {Proc. ACM Manag. Data},
month = jun,
articleno = {167},
numpages = {24},
keywords = {graph edit distance, heuristic algorithms}
}

@article{vishwanathan10a,
  author  = {S.V.N. Vishwanathan and Nicol N. Schraudolph and Risi Kondor and Karsten M. Borgwardt},
  title   = {Graph Kernels},
  journal = {Journal of Machine Learning Research},
  year    = {2010},
  volume  = {11},
  number  = {40},
  pages   = {1201--1242},
  url     = {http://jmlr.org/papers/v11/vishwanathan10a.html}
}

@article{shervashidze11a,
  author  = {Nino Shervashidze and Pascal Schweitzer and Erik Jan van Leeuwen and Kurt Mehlhorn and Karsten M. Borgwardt},
  title   = {{Weisfeiler-Lehman} Graph Kernels},
  journal = {Journal of Machine Learning Research},
  year    = {2011},
  volume  = {12},
  number  = {77},
  pages   = {2539--2561},
  url     = {http://jmlr.org/papers/v12/shervashidze11a.html}
}

@inproceedings{Peyre_Gromov,
  TITLE = {{Gromov-Wasserstein} Averaging of Kernel and Distance Matrices},
  AUTHOR = {Peyr{\'e}, Gabriel and Cuturi, Marco and Solomon, Justin},
  URL = {https://hal.science/hal-01322992},
  BOOKTITLE = {{Proc. 33rd International Conference on Machine Learning }},
  ADDRESS = {New-York, United States},
  SERIES = {Proc. 33rd International Conference on Machine Learning },
  YEAR = {2016},
  MONTH = Jun,
  HAL_ID = {hal-01322992},
  HAL_VERSION = {v1},
}

@ARTICLE{Bronstein_geometric,
  author={Bronstein, Michael M. and Bruna, Joan and LeCun, Yann and Szlam, Arthur and Vandergheynst, Pierre},
  journal={IEEE Signal Processing Magazine}, 
  title={Geometric Deep Learning: Going beyond {Euclidean} data}, 
  year={2017},
  volume={34},
  number={4},
  pages={18-42},
  keywords={Convolution;Computational modeling;Euclidean distance;Machine learning;Convolutional codes;Social network services;Computer architecture},
  doi={10.1109/MSP.2017.2693418}
}

@InProceedings{continuous_RW,
author="Rossi, Luca
and Torsello, Andrea
and Hancock, Edwin R.",
editor="Kropatsch, Walter G.
and Artner, Nicole M.
and Haxhimusa, Yll
and Jiang, Xiaoyi",
title="A Continuous-Time Quantum Walk Kernel for Unattributed Graphs",
booktitle="Graph-Based Representations in Pattern Recognition",
year="2013",
publisher="Springer Berlin Heidelberg",
address="Berlin, Heidelberg",
pages="101--110",
isbn="978-3-642-38221-5"
}

@inproceedings{Thabet_Quantum,
author = {Thabet, Slimane and Djellabi, Mehdi and Sokolov, Igor O. and Kasture, Sachin and Henry, Louis-Paul and Henriet, Lo\"{\i}c},
title = {Quantum positional encodings for graph neural networks},
year = {2024},
publisher = {JMLR.org},
booktitle = {Proceedings of the 41st International Conference on Machine Learning},
articleno = {1959},
numpages = {32},
location = {Vienna, Austria},
series = {ICML'24}
}

@article{BAI2015344,
title = {A quantum {Jensen--Shannon} graph kernel for unattributed graphs},
journal = {Pattern Recognition},
volume = {48},
number = {2},
pages = {344--355},
year = {2015},
issn = {0031-3203},
doi = {https://doi.org/10.1016/j.patcog.2014.03.028},
url = {https://www.sciencedirect.com/science/article/pii/S0031320314001265},
author = {Lu Bai and Luca Rossi and Andrea Torsello and Edwin R. Hancock},
keywords = {Graph kernels, Continuous-time quantum walk, Quantum state, Quantum Jensen--Shannon divergence},
}

@misc{Sabarad:2024tgy,
      title={Experimental Machine Learning with Classical and Quantum Data via {NMR} Quantum Kernels}, 
      author={Vivek Sabarad and Vishal Varma and T. S. Mahesh},
      year={2025},
      eprint={2412.09557},
      archivePrefix={arXiv},
      primaryClass={quant-ph},
      url={https://arxiv.org/abs/2412.09557}, 
}

@InProceedings{Bai_attributed,
author="Bai, Lu
and Rossi, Luca
and Bunke, Horst
and Hancock, Edwin R.",
editor="Calders, Toon
and Esposito, Floriana
and H{\"u}llermeier, Eyke
and Meo, Rosa",
title={Attributed Graph Kernels Using the {Jensen-Tsallis} q-Differences},
booktitle="Machine Learning and Knowledge Discovery in Databases",
year="2014",
publisher="Springer Berlin Heidelberg",
address="Berlin, Heidelberg",
pages="99--114",
isbn="978-3-662-44848-9"
}

@inproceedings{Zhang_expressive,
author = {Zhang, Bohang and Zhao, Lingxiao and Maron, Haggai},
title = {On the expressive power of spectral invariant graph neural networks},
year = {2024},
publisher = {JMLR.org},
booktitle = {Proceedings of the 41st International Conference on Machine Learning},
articleno = {2503},
numpages = {31},
location = {Vienna, Austria},
series = {ICML'24}
}

@book{godsil01,
  added-at = {2015-05-11T09:18:10.000+0200},
  author = {Godsil, Chris and Royle, Gordon F.},
  biburl = {https://www.bibsonomy.org/bibtex/2d2513e46a1d45c049a0a62bb40c05cd3/ytyoun},
  doi = {10.1007/978-1-4613-0163-9},
  interhash = {e50e146d7f4ebb2b9bb7bcf48d7595cc},
  intrahash = {d2513e46a1d45c049a0a62bb40c05cd3},
  isbn = {9781461301639 1461301637},
  keywords = {algebraic.graph.theory circuit determinant effective.resistance eigenvalues graph.theory kirchhoff network polynomial resistor spanning textbook},
  number = {Book 207},
  publisher = {Springer},
  refid = {883392096},
  series = {Graduate Texts in Mathematics},
  timestamp = {2016-12-30T13:56:56.000+0100},
  title = {Algebraic Graph Theory},
  year = 2001
}

@InProceedings{pmlr-v139-bodnar21a,
  title = 	 {Weisfeiler and {Lehman} Go Topological: Message Passing Simplicial Networks},
  author =       {Bodnar, Cristian and Frasca, Fabrizio and Wang, Yuguang and Otter, Nina and Montufar, Guido F and Li{\'o}, Pietro and Bronstein, Michael},
  booktitle = 	 {Proceedings of the 38th International Conference on Machine Learning},
  pages = 	 {1026--1037},
  year = 	 {2021},
  editor = 	 {Meila, Marina and Zhang, Tong},
  volume = 	 {139},
  series = 	 {Proceedings of Machine Learning Research},
  month = 	 {18--24 Jul},
  publisher =    {PMLR},
  pdf = 	 {http://proceedings.mlr.press/v139/bodnar21a/bodnar21a.pdf},
  url = 	 {https://proceedings.mlr.press/v139/bodnar21a.html},
  abstract = 	 {The pairwise interaction paradigm of graph machine learning has predominantly governed the modelling of relational systems. However, graphs alone cannot capture the multi-level interactions present in many complex systems and the expressive power of such schemes was proven to be limited. To overcome these limitations, we propose Message Passing Simplicial Networks (MPSNs), a class of models that perform message passing on simplicial complexes (SCs). To theoretically analyse the expressivity of our model we introduce a Simplicial Weisfeiler-Lehman (SWL) colouring procedure for distinguishing non-isomorphic SCs. We relate the power of SWL to the problem of distinguishing non-isomorphic graphs and show that SWL and MPSNs are strictly more powerful than the WL test and not less powerful than the 3-WL test. We deepen the analysis by comparing our model with traditional graph neural networks (GNNs) with ReLU activations in terms of the number of linear regions of the functions they can represent. We empirically support our theoretical claims by showing that MPSNs can distinguish challenging strongly regular graphs for which GNNs fail and, when equipped with orientation equivariant layers, they can improve classification accuracy in oriented SCs compared to a GNN baseline.}
}

@inproceedings{
zhang2023rethinking,
title={Rethinking the Expressive Power of {GNN}s via Graph Biconnectivity},
author={Bohang Zhang and Shengjie Luo and Liwei Wang and Di He},
booktitle={The Eleventh International Conference on Learning Representations },
year={2023},
url={https://openreview.net/forum?id=r9hNv76KoT3}
}

@misc{qek_repo,
  author       = {{pasqal-io}},
  title        = {quantum-evolution-kernel: A Graph Machine Learning library using Quantum Computing},
  howpublished = {\url{https://github.com/pasqal-io/quantum-evolution-kernel}},
  year         = {2025},
  note         = {Accessed: 2025-07-18; Latest release: v0.3.3 (June 6, 2025).},
  version      = {v0.3.3},
  license      = {PASQAL Open-Source Software (MIT-derived)},
}

@article{CLARK1990165,
title = {Unit disk graphs},
journal = {Discrete Mathematics},
volume = {86},
number = {1},
pages = {165-177},
year = {1990},
issn = {0012-365X},
doi = {https://doi.org/10.1016/0012-365X(90)90358-O},
url = {https://www.sciencedirect.com/science/article/pii/0012365X9090358O},
author = {Brent N. Clark and Charles J. Colbourn and David S. Johnson},
abstract = {Unit disk graphs are the intersection graphs of equal sized circles in the plane: they provide a graph-theoretic model for broadcast networks (cellular networks) and for some problems in computational geometry. We show that many standard graph theoretic problems remain NP-complete on unit disk graphs, including coloring, independent set, domination, independent domination, and connected domination; NP-completeness for the domination problem is shown to hold even for grid graphs, a subclass of unit disk graphs. In contrast, we give a polynomial time algorithm for finding cliques when the geometric representation (circles in the plane) is provided.}
}

@inproceedings{gurevich1997from,
author = {Gurevich, Yuri},
title = {From Invariants to Canonization},
booktitle = {Bulletin of the European Association for Theoretical Computer Science},
year = {1997},
month = {October},
abstract = {We show that every polynomial-time full-invariant algorithm for graphs gives rise to a polynomial-time canonization algorithm for graphs.

This publication was reprinted in 2001 World Scientific book, Current Trends in Theoretical Computer Science, pages 327-331.},
url = {https://www.microsoft.com/en-us/research/publication/131-from-invariants-to-canonization/},
volume = {63},
edition = {Bulletin of the European Association for Theoretical Computer Science},
}

@article{TUD_dataset,
  author       = {Christopher Morris and
                  Nils M. Kriege and
                  Franka Bause and
                  Kristian Kersting and
                  Petra Mutzel and
                  Marion Neumann},
  title        = {TUDataset: {A} collection of benchmark datasets for learning with
                  graphs},
  journal      = {CoRR},
  volume       = {abs/2007.08663},
  year         = {2020},
  url          = {https://arxiv.org/abs/2007.08663},
  eprinttype    = {arXiv},
  eprint       = {2007.08663},
  timestamp    = {Thu, 14 Oct 2021 09:18:16 +0200},
  biburl       = {https://dblp.org/rec/journals/corr/abs-2007-08663.bib},
  bibsource    = {dblp computer science bibliography, https://dblp.org}
}

@article{aronszajn50reproducing,
  added-at = {2008-03-10T11:21:00.000+0100},
  author = {Aronszajn, N.},
  biburl = {https://www.bibsonomy.org/bibtex/2024c71f807cbf95a8fb6b934c01f4919/sb3000},
  description = {CiteULike: Theory of reproducing kernels},
  interhash = {5f0e5e40a1512aa0b21f287a39b81b31},
  intrahash = {024c71f807cbf95a8fb6b934c01f4919},
  journal = {Transactions of the American Mathematical Society},
  keywords = {kernel},
  number = 3,
  pages = {337--404},
  timestamp = {2010-10-07T14:13:58.000+0200},
  title = {Theory of Reproducing Kernels},
  url = {http://dx.doi.org/10.2307/1990404},
  volume = 68,
  year = 1950
}

@article{mercer1909theorem,
author = {Mercer, James },
title = {Functions of positive and negative type, and their connection with the theory of integral equations},
journal = {Philosophical Transactions of the Royal Society of London},
volume = {209},
number = {441-458},
pages = {415-446},
year = {1909},
doi = {10.1098/rsta.1909.0016},
}

@inproceedings{scholkopf,
author = {Sch\"{o}lkopf, Bernhard and Herbrich, Ralf and Smola, Alex J.},
title = {A Generalized Representer Theorem},
year = {2001},
isbn = {3540423435},
publisher = {Springer-Verlag},
address = {Berlin, Heidelberg},
booktitle = {Proceedings of the 14th Annual Conference on Computational Learning Theory and and 5th European Conference on Computational Learning Theory},
pages = {416--426},
numpages = {11},
series = {COLT '01/EuroCOLT '01}
}

@book{Scholkopf2002,
  added-at = {2009-05-19T18:00:18.000+0200},
  author = {Sch\"{o}lkopf, Bernhard and Smola, Alexander J.},
  biburl = {https://www.bibsonomy.org/bibtex/2b453d5fc1a1cf2efdd085ff7e22b0050/earthfare},
  citeulike-article-id = {4545319},
  description = {CiteULike: Everyone's library},
  interhash = {a14f07242aaae08e3a1ed9abc06d99e6},
  intrahash = {b453d5fc1a1cf2efdd085ff7e22b0050},
  keywords = {gaussian-process, kernel-design, kernel-method, machine-learning, rkhs},
  posted-at = {2009-05-19 16:31:20},
  priority = {5},
  publisher = {MIT Press},
  series = {Adaptive computation and machine learning},
  timestamp = {2009-05-19T18:03:27.000+0200},
  title = {Learning with kernels : support vector machines, regularization, optimization, and beyond},
  url = {http://www.worldcat.org/oclc/48970254},
  year = 2002
}

@InProceedings{gaertner03graphkernels,
author="G{\"a}rtner, Thomas
and Flach, Peter
and Wrobel, Stefan",
editor="Sch{\"o}lkopf, Bernhard
and Warmuth, Manfred K.",
title="On Graph Kernels: Hardness Results and Efficient Alternatives",
booktitle="Learning Theory and Kernel Machines",
year="2003",
publisher="Springer Berlin Heidelberg",
address="Berlin, Heidelberg",
pages="129--143",
isbn="978-3-540-45167-9"
}

@inproceedings{k_core_kernels,
  title     = {A Degeneracy Framework for Graph Similarity},
  author    = {Giannis Nikolentzos and Polykarpos Meladianos and Stratis Limnios and Michalis Vazirgiannis},
  booktitle = {Proceedings of the Twenty-Seventh International Joint Conference on
               Artificial Intelligence, {IJCAI-18}},
  publisher = {International Joint Conferences on Artificial Intelligence Organization},
  pages     = {2595--2601},
  year      = {2018},
  month     = {7},
  doi       = {10.24963/ijcai.2018/360},
  url       = {https://doi.org/10.24963/ijcai.2018/360},
}

@article{Seidman_1983,
title = {Network structure and minimum degree},
journal = {Social Networks},
volume = {5},
number = {3},
pages = {269-287},
year = {1983},
issn = {0378-8733},
doi = {https://doi.org/10.1016/0378-8733(83)90028-X},
url = {https://www.sciencedirect.com/science/article/pii/037887338390028X},
author = {Stephen B. Seidman},
abstract = {Social network researchers have long sought measures of network cohesion, Density has often been used for this purpose, despite its generally admitted deficiencies. An approach to network cohesion is proposed that is based on minimum degree and which produces a sequence of subgraphs of gradually increasing cohesion. The approach also associates with any network measures of local density which promise to be useful both in characterizing network structures and in comparing networks.}
}

@inproceedings{SP_kernels,
author = {Borgwardt, Karsten M. and Kriegel, Hans-Peter},
title = {Shortest-Path Kernels on Graphs},
year = {2005},
isbn = {0769522785},
publisher = {IEEE Computer Society},
address = {USA},
url = {https://doi.org/10.1109/ICDM.2005.132},
doi = {10.1109/ICDM.2005.132},
booktitle = {Proceedings of the Fifth IEEE International Conference on Data Mining},
pages = {74--81},
numpages = {8},
series = {ICDM '05}
}

@article{Silverio2022pulseropensource,
  doi = {10.22331/q-2022-01-24-629},
  url = {https://doi.org/10.22331/q-2022-01-24-629},
  title = {Pulser: {A}n open-source package for the design of pulse sequences in programmable neutral-atom arrays},
  author = {Silv{\'{e}}rio, Henrique and Grijalva, Sebasti{\'{a}}n and Dalyac, Constantin and Leclerc, Lucas and Karalekas, Peter J. and Shammah, Nathan and Beji, Mourad and Henry, Louis-Paul and Henriet, Lo{\"{i}}c},
  journal = {{Quantum}},
  issn = {2521-327X},
  publisher = {{Verein zur F{\"{o}}rderung des Open Access Publizierens in den Quantenwissenschaften}},
  volume = {6},
  pages = {629},
  month = jan,
  year = {2022}
}

@inproceedings{Kuhn_2004,
author = {Kuhn, Fabian and Moscibroda, Thomas and Wattenhofer, Rogert},
title = {Unit disk graph approximation},
year = {2004},
isbn = {1581139217},
publisher = {Association for Computing Machinery},
address = {New York, NY, USA},
url = {https://doi.org/10.1145/1022630.1022634},
doi = {10.1145/1022630.1022634},
abstract = {Finding a good embedding of a unit disk graph given by its connectivity information is a problem of practical importance in a variety of fields. In wireless ad hoc and sensor networks, such an embedding can be used to obtain virtual coordinates. In this paper, we prove a non-approximability result for the problem of embedding a given unit disk graph. Particularly, we show that if non-neighboring nodes are not allowed to be closer to each other than distance 1, then two neighbors can be as far apart as √3/2 - ε, where ε goes to 0 as n goes to infinity, unless P=NP. We further show that finding a realization of a d-quasi unit disk graph with d ≥ 1/√2 is NP-hard.},
booktitle = {Proceedings of the 2004 Joint Workshop on Foundations of Mobile Computing},
pages = {17--23},
numpages = {7},
keywords = {ad hoc networks, embedding, sensor networks, unit disk graph, virtual coordinates},
location = {Philadelphia, PA, USA},
series = {DIALM-POMC '04}
}

@article{Grakel,
  author  = {Giannis Siglidis and Giannis Nikolentzos and Stratis Limnios and Christos Giatsidis and Konstantinos Skianis and Michalis Vazirgiannis},
  title   = {GraKeL: A Graph Kernel Library in Python},
  journal = {Journal of Machine Learning Research},
  year    = {2020},
  volume  = {21},
  number  = {54},
  pages   = {1--5},
  url     = {http://jmlr.org/papers/v21/18-370.html}
}

@Article{Gaetan_2009,
    author    = {Gae'tan A. and Miroshnychenko Y. and Wilk T. and Chotia A. and Viteau M. and Comparat D. and Pillet P. and Browaeys A. and Grangier Ph.},
    title     = {Observation of Collective Excitation of Two Individual Atoms in the {Rydberg} Blockade Regime},
    journal   = {Nature Physics},
    year      = {2009},
    month    = {February},
    day       = {1},
    volume    = {5},
    number    = {2},
    pages     = {115-118},
    issn      = {1745-2481},
    doi       = {10.1038/nphys1183},
    url       = {https://doi.org/10.1038/nphys1183}
}

@Article{Browaeys2020,
author={Browaeys, A
and Lahaye, T},
title={Many-body physics with individually controlled {Rydberg} atoms},
journal={Nature Physics},
year={2020},
month={Feb},
day={01},
volume={16},
number={2},
pages={132-142},
abstract={Recent decades have witnessed great developments in the field of quantum simulation---where synthetic systems are built and studied to gain insight into complicated, many-body real-world problems. Systems of individually controlled neutral atoms, interacting with each other when excited to Rydberg states, have emerged as a promising platform for this task, particularly for the simulation of spin systems. Here, we review the techniques necessary for the manipulation of neutral atoms for the purpose of quantum simulation---such as quantum gas microscopes and arrays of optical tweezers---and explain how the different types of interactions between Rydberg atoms allow a natural mapping onto various quantum spin models. We discuss recent achievements in the study of quantum many-body physics in this platform, and some current research directions beyond that.},
issn={1745-2481},
doi={10.1038/s41567-019-0733-z},
url={https://doi.org/10.1038/s41567-019-0733-z}
}

@software{emulators_pasqal,
author = {Bidzhiev, Kemal and Grava, Stefano and le Henaff, Pablo and Mendizabel Pico, Mauro and Merhej, Elie and Quelle, Anton},
month = may,
title = {{pasqal emulators}},
url = {https://github.com/pasqal-io/emulators},
version = {2.3.0},
year = {2025}
}

@article{Cortes_1995,
	author = {Cortes, Corinna and Vapnik, Vladimir},
	date = {1995/09/01},
	date-added = {2025-09-10 17:14:22 +0200},
	date-modified = {2025-09-10 17:14:22 +0200},
	doi = {10.1007/BF00994018},
	id = {Cortes1995},
	isbn = {1573-0565},
	journal = {Machine Learning},
	number = {3},
	pages = {273--297},
	title = {Support-vector networks},
	volume = {20},
	year = {1995},
}

@article{Przulj_2007,
    author = {Pr\v{z}ulj, Nata\v{s}a},
    title = {Biological network comparison using graphlet degree distribution},
    journal = {Bioinformatics},
    volume = {23},
    number = {2},
    pages = {e177-e183},
    year = {2007},
    month = {01},
    issn = {1367-4803},
    doi = {10.1093/bioinformatics/btl301},
    url = {https://doi.org/10.1093/bioinformatics/btl301},
}

@book{Gallagher_1994, 
place={Cambridge}, 
series={Cambridge Monographs on Atomic, Molecular and Chemical Physics}, title={Rydberg Atoms}, 
publisher={Cambridge University Press}, 
author={Gallagher, Thomas F.}, 
year={1994}, 
collection={Cambridge Monographs on Atomic, Molecular and Chemical Physics}
}

@inproceedings{Kriege_2016,
 author = {Kriege, Nils M. and Giscard, Pierre-Louis and Wilson, Richard},
 booktitle = {Advances in Neural Information Processing Systems},
 editor = {D. Lee and M. Sugiyama and U. Luxburg and I. Guyon and R. Garnett},
 pages = {},
 publisher = {Curran Associates, Inc.},
 title = {On Valid Optimal Assignment Kernels and Applications to Graph Classification},
 volume = {29},
 year = {2016}
}

@misc{pubchem_periodic_table,
  author       = {{PubChem, National Center for Biotechnology Information (NCBI), U.S. National Library of Medicine}},
  title        = {Periodic Table of Elements},
  howpublished = {\url{https://pubchem.ncbi.nlm.nih.gov/periodic-table/}},
  note         = {Accessed: 2025-08-29}
}

@InProceedings{Shervashidze_2009,
  title = 	 {Efficient graphlet kernels for large graph comparison},
  author = 	 {Shervashidze, Nino and Vishwanathan, SVN and Petri, Tobias and Mehlhorn, Kurt and Borgwardt, Karsten},
  booktitle = 	 {Proceedings of the Twelfth International Conference on Artificial Intelligence and Statistics},
  pages = 	 {488--495},
  year = 	 {2009},
  editor = 	 {van Dyk, David and Welling, Max},
  volume = 	 {5},
  series = 	 {Proceedings of Machine Learning Research},
  address = 	 {Hilton Clearwater Beach Resort, Clearwater Beach, Florida USA},
  month = 	 {16--18 Apr},
  publisher =    {PMLR},
  pdf = 	 {http://proceedings.mlr.press/v5/shervashidze09a/shervashidze09a.pdf},
  url = 	 {https://proceedings.mlr.press/v5/shervashidze09a.html},
}

@inbook{Tamassia-chapter12,
  editor = {Tamassia, Roberto},
  author = {Stephen G. Kobourov},
  title = {Force-Directed Drawing Algorithms},
  booktitle = {Handbook of Graph Drawing and Visualization},
  year = {2013},
  publisher = {Chapman and Hall/CRC},
  pages = {383-408},
  edition = {1st},
  doi = {10.1201/b15385}
}

@inbook{FloydWarshall,
  author    = {Cormen, Thomas H. and Leiserson, Charles E. and Rivest, Ronald L.},
  title     = {Introduction to Algorithms},
  edition   = {1},
  publisher = {MIT Press and McGraw-Hill},
  year      = {1990},
  isbn      = {0-262-03141-8},
  chapter   = {26.2},
  pages     = {558-565},
  booktitle = {Introduction to Algorithms},
}

@online{Blade,
  title        = {BLaDE - Qubo Solver},
  author       = {C. de Terrasson de Montleau},
  url          = {https://pasqal-io.github.io/qubo-solver/latest/tutorial/05-blade/},
  urldate      = {2026-02-16},
  year = {2026},
  note         = {Documentation tutorial page}
}

@article{Morgado_2021,
    author = {Morgado, M. and Whitlock, S.},
    title = {Quantum simulation and computing with {Rydberg}-interacting qubits},
    journal = {AVS Quantum Science},
    volume = {3},
    number = {2},
    pages = {023501},
    year = {2021},
    month = {05},
    issn = {2639-0213},
    doi = {10.1116/5.0036562},
    url = {https://doi.org/10.1116/5.0036562},
}

@article{Mukherjee_2025,
doi = {10.1088/2058-9565/ada79a},
url = {https://doi.org/10.1088/2058-9565/ada79a},
year = {2025},
month = {jan},
publisher = {IOP Publishing},
volume = {10},
number = {2},
pages = {02LT01},
author = {Mukherjee, K and Schachenmayer, J and Whitlock, S and W\"uster, S},
title = {Automated quantum system modeling with machine learning},
journal = {Quantum Science and Technology},
}

@article{cordella_2004,
  author={Cordella, L.P. and Foggia, P. and Sansone, C. and Vento, M.},
  journal={IEEE Transactions on Pattern Analysis and Machine Intelligence}, 
  title={A (sub)graph isomorphism algorithm for matching large graphs}, 
  year={2004},
  volume={26},
  number={10},
  pages={1367-1372},
  doi={10.1109/TPAMI.2004.75}}

@book{NumRec,
 author = {Press, William H. and Teukolsky, Saul A. and Vetterling, William T. and Flannery, Brian P.},
 title = {Numerical Recipes 3rd Edition: The Art of Scientific Computing},
 year = {2007},
 isbn = {0521880688, 9780521880688},
 edition = {3},
 publisher = {Cambridge University Press},
 address = {New York, NY, USA},
}

@manual{numpy_matrix_rank_v2_4,
  title        = {numpy.linalg.matrix\_rank},
  organization = {NumPy Developers},
  edition      = {NumPy v2.4 Manual},
  url          = {https://numpy.org/doc/stable/reference/generated/numpy.linalg.matrix_rank.html},
  note         = {Accessed: 2026-03-11}
}
\end{document}